%% file: clean.tex
\documentclass[11pt]{llncs}
\usepackage[margin=1in]{geometry}
\usepackage[T1]{fontenc}
\usepackage[english]{babel}
\usepackage{amsmath, amssymb, amsfonts}
\usepackage{bbm}
\usepackage{thmtools}
\usepackage{graphicx}
\usepackage{mathtools}
\usepackage{placeins}
\usepackage{subcaption}
\usepackage{booktabs}
\usepackage{enumitem}
\usepackage{wrapstuff}
\usepackage{pifont}
\usepackage{caption}
\usepackage[colorlinks=false,hidelinks]{hyperref}
\usepackage{cleveref}
\usepackage[authoryear]{natbib}

\usepackage{todonotes}
\usepackage{comment}
\newcommand{\ff}[1]{}
\newcommand{\nb}[1]{}
\newcommand{\la}[1]{}

\usetikzlibrary{arrows.meta, positioning, patterns}

\usepackage{fancyhdr}
\usepackage{lastpage}
\newcommand{\N}{\mathbb{N}}
\newcommand{\R}{\mathbb{R}}
\newcommand{\Q}{\mathbb{Q}}

\newcommand{\PO}{\mathrm{PO}}
\newcommand{\wPO}{\mathrm{wPO}}
\newcommand{\fPO}{\mathrm{fPO}}
\newcommand{\wfPO}{\mathrm{wfPO}}
\newcommand{\dom}{\mathcal{D}}

\newcommand{\Probs}{\mathcal{P}}
\newcommand{\p}{\mathbf{p}}
\newcommand{\Committees}{\mathcal{W}}

\newcommand{\fracU}{\mathcal{U}^{\text{frac}}}

\newcommand{\Weights}{\Delta}
\newcommand{\weight}{\omega}

\newcommand{\e}{\mathbf{e}}

\newenvironment{proofsketch}
  {\par\noindent\textit{Proof sketch.}\ \ignorespaces}
  {\nobreak\hfill$\diamond$\par\medskip}

\renewenvironment{proof}
  {\par\noindent\textit{Proof.}\ \ignorespaces}
  {\nobreak\hfill$\square$\par\medskip}

\spnewtheorem{observation}{Observation}{\bfseries}{\itshape}
\crefname{observation}{Observation}{Observations}
\Crefname{observation}{Observation}{Observations}

\input{paper_figures}

\begin{document}
\title{Fractional Pareto-Optimality in Multiwinner Voting}
\author{Patrick Becker \inst{1} \and
Niclas Boehmer \inst{2} \and
Fabian Frank \inst{1} \and
Lara Glessen \inst{2}}
\institute{Technical University of Munich  \email{$\{$patrick.becker, fabian\_w.frank$\}$@tum.de} \and
            Hasso Plattner Institute (HPI), University of Potsdam \email{$\{$niclas.boehmer, lara.glessen$\}$@hpi.de}
            }

\maketitle              

\begin{abstract}
    Efficiency in multiwinner voting is most naturally captured by Pareto-optimality ($\PO$), yet this notion is computationally and structurally difficult to handle. We therefore study fractional Pareto-optimality ($\fPO$), under which a committee may not be dominated even by a fractional committee, i.e., any convex combination of committees. $\fPO$ turns out to be a natural refinement of $\PO$ as it retains exactly those Pareto-optimal committees whose efficiency is robust under uniform cloning of candidates. Furthermore, $\fPO$ committees are guaranteed to exist and have strong structural properties. 
    We present a characterization of $\fPO$ in terms of weighted utilitarian welfare maximization, which yields a polynomial-time algorithm for verifying $\fPO$ and shows that the set of $\fPO$ committees satisfies committee monotonicity and is connected under single-candidate swaps.
    Analyzing \emph{welfarist rules} through the lens of $\fPO$, we further uncover an incompatibility between $\fPO$ and equality-oriented objectives.
    Most notably, we show that proportional approval voting (PAV) violates $\fPO$ in the approval setting.
    We close by pinpointing preference domains, including various one-dimensional ones, on which $\PO$ and $\fPO$ collapse into one notion.
    \keywords{Multiwinner voting \and Pareto-optimality \and Welfare economics.}
\end{abstract}

\section{Introduction}

Multiwinner voting, the problem of selecting a committee, i.e., a fixed-size subset of candidates, based on voters' preferences, is currently one of the most intensively studied problems in social choice theory \citep{DBLP:journals/scw/ElkindFSS17,lackner2023multi,BrPe23a}. The dominant focus in this literature has been on ensuring and quantifying the fairness of computed committees. By contrast, the efficiency of committees has received comparatively little attention. The standard way to formalize efficiency in this setting is Pareto-optimality: a committee is Pareto-optimal if no other committee can make some voter strictly better off without making at least one other voter worse off \citep{Lackner_2020,laine2025pareto}. While Pareto-optimality might at first glance appear to be a fairly basic axiom, it is normatively very compelling: If a committee violates Pareto-optimality, it is hard to justify why the dominated committee, rather than the dominating one, should be implemented; no voter would object to such a change, while at least some would actively advocate for it. Yet, Pareto-optimality has proven to be a notoriously difficult axiom to work with, and the space of Pareto-optimal committees is poorly understood: Verifying Pareto-optimality of a given committee is coNP-hard under any of the standard preference models, and efficiently verifiable certificates of Pareto-optimality do not exist in general (under the assumption that coNP $\not\subseteq$ NP) \citep{Azi20}. At the same time, all known polynomial-time computable voting rules that satisfy even basic fairness axioms fail Pareto-optimality \citep{lackner2023multi,laine2025pareto}. This leaves us in an unsatisfactory position: violations of Pareto-optimality should undermine the legitimacy of outcomes. Yet, in light of the wealth of popular rules violating Pareto-optimality, they seem implicitly tolerated, presumably because neither detecting a violation nor certifying its absence can be done efficiently.

In this paper, we argue that this difficulty is not entirely inherent to efficiency as such, but rather to the classical, \emph{integral} notion of Pareto-optimality.
We put forward a systematic study of a stronger and well-behaved efficiency requirement in multiwinner voting, namely fractional Pareto-optimality ($\fPO$). 
Informally, a committee is fractionally Pareto-optimal if no fractional committee, i.e., no convex combination of feasible committees, Pareto-dominates it. 
We study $\fPO$ under additive utilities, where each voter assigns a value to each candidate, and a voter's utility for a committee is the sum of her values for the selected candidates; this generalizes the standard approval-based preference model. We show that while $\PO$ and $\fPO$ might seem closely related, their multiplicative utility gap can be large: there are instances in which a Pareto-optimal committee is dominated by a fractional committee that gives every voter almost $k$ times her utility, where $k$ is the size of the committee.

While $\fPO$ is more demanding than $\PO$, it can still be satisfied on every instance, and it turns out to be substantially better behaved. Our central result is a characterization of fractionally Pareto-optimal committees as exactly those that maximize utilitarian welfare under some positive weighting of the voters; a result in line with classical results from welfare economics \citep{BoVa04a,FeSe05a}. This characterization yields an efficient algorithm for checking whether a given committee satisfies $\fPO$, together with a succinct certificate: the weight vector under which the committee is welfare-maximizing. Beyond computational tractability, this characterization makes $\fPO$ a transparent efficiency notion, with the certifying weight vector offering insight into the implicit distribution of influence across voters \citep{DBLP:journals/corr/abs-2604-24307}. The characterization further allows us to show that for any given instance, the set of $\fPO$ committees satisfies committee monotonicity and is connected under single-candidate swaps.

A natural objection to $\fPO$ is that it measures efficiency against fractional committees, which are rarely implementable, and thus, that $\fPO$ holds outcomes to an unrealistically high standard. We advocate for a different view: $\fPO$ should not be read as a literal benchmark comparing against (unimplementable) fractional committees, but as a well-behaved, transparent, and \emph{robust} refinement of $\PO$. In particular, a committee that satisfies $\PO$, but not $\fPO$, owes its efficiency to the discreteness of the current candidate set and is not robust to subdivisions of that set. 
We make this formal by showing in \Cref{thm:cloningInvariance} that $\fPO$ is exactly the cloning-robust version of $\PO$: a committee is $\fPO$ if and only if, in every instance obtained by replicating each candidate a fixed number of times, the respective cloned committee remains Pareto-optimal. 
As an example, consider a public funding body selecting a fixed-size portfolio from a coarse shortlist of project proposals. A portfolio may be $\PO$ for the submitted shortlist, but become Pareto-dominated once the shortlist is refined, for example, by adding closely related variants of existing projects. 
This is especially relevant in settings where proposers submit projects strategically, as recently surfaced and studied in the context of participatory budgeting  \citep{faliszewski2025projectsubmissiongamesparticipatory}. In such contexts, the submitted candidate set thus cannot be viewed as fixed; $\fPO$ rules out precisely those portfolios whose efficiency is an artifact of a coarse or strategically shaped proposal set.

Further, we demonstrate that $\fPO$ also offers a new perspective on the analysis of voting rules.
Previous work has shown that popular sequential multiwinner rules violate $\PO$ (and thereby also $\fPO$), whereas rules that maximize a global objective typically satisfy $\PO$.
The class of welfarist rules, which select committees that maximize a scoring function of voters’ utilities, encompasses most of the known Pareto-optimal rules.
For $\fPO$, we obtain a nuanced picture of this class: \emph{strictly monotone, strictly convex} scoring functions guarantee $\fPO$, while natural equality-oriented objectives may instead select fractionally Pareto-dominated committees. 
For Thiele rules, a particularly well-studied subclass of welfarist rules, we obtain a sharper picture that also holds under approval utilities: a Thiele rule violates $\fPO$ whenever the marginal gain a voter derives from an additional approved committee member decreases at some point. This applies in particular to proportional approval voting (PAV) and, as we discuss in \Cref{sec:disc}, to any Thiele rule satisfying even weak proportionality guarantees.
Overall, this signals that $\fPO$ stands in tension with inequality-averse aggregation that underpins many proportional voting rules.
Finally, for approval preferences, we build on the work of \citet{joshua-thesis} and identify conditions under which $\fPO$ and $\PO$ coincide. Consequently, all our positive results for $\fPO$ extend to $\PO$ on instances satisfying these domain restrictions. This includes the linearly consistent domain by \citet{pier2022corestable}, which strictly contains the voter-interval, candidate-interval, and voter-candidate-interval domains. 
We give an overview of our results in \Cref{table:Results}.

\begin{table}[!t]
\centering
\caption{Overview of the properties and voting rules studied in this paper. Results marked with $\dagger$ were shown by \citet{Azi20}.}
\label{table:Results}
\setlength{\tabcolsep}{7pt} 
\begin{tabular}{lcccc}
\toprule
  & \multicolumn{2}{c}{$\PO$} & \multicolumn{2}{c}{$\fPO$} \\

\cmidrule(lr){2-3} \cmidrule(lr){4-5}

\textbf{Property} & \textbf{General} & \textbf{Approval} & \textbf{General} & \textbf{Approval} \\
\midrule
    Verification of a given committee $W$
            & coNP-hard$^\dagger$
            & coNP-hard$^\dagger$
            & P
            & P
    \\
    Connectedness
            &  \ding{55} 
            & \textbf{?} 
            & \ding{51}
            & \ding{51}
    \\
    Committee Monotonicity
            & \ding{55} 
            & \textbf{?} 
            & \ding{51} 
            & \ding{51}
    \\
    Cloning-robustness 
            & \ding{55} 
            & \ding{55} 
            & \ding{51} 
            & \ding{51}
    \\
    \midrule
    \multicolumn{5}{l}{\textbf{Sufficient conditions for...}}\\
    ...Welfarist/Thiele rules
            & str. mon.
            & str. mon.
            & convex + str. mon.
            & convex + str. mon.
    \\
\bottomrule
\end{tabular}
\end{table}

\section{Related Work}

\paragraph{Efficiency in multiwinner voting.}

Efficiency is one of the most undisputed properties in social choice and economic theory \citep{Arro51a, Debr59b, mas1995microeconomic,FeSe05a}.
In the study of multiwinner voting, it has nevertheless received little attention so far.
Pareto-optimality is the standard formulation of efficiency in the literature; the alternative formalizations we are aware of are either (substantially) weaker---such as unanimity~\citep{Caragiannisetal2024} or local non-wastefulness conditions~\citep{lackner2018consistent}, which require that no candidate without any approvals is included while approved candidates remain unselected---or of quantitative nature, capturing for instance the approximation factor with respect to utilitarian welfare~\citep{Lackner_2020,skowron2021proportionality}.

Pareto-optimality itself has primarily been studied through three questions.
First, the complexity of verifying the axiom: \citet{Azi20} established coNP-completeness for several ways of extending preferences over candidates to preferences over committees, including approval preferences.
Second, analyzing which rules always return Pareto-optimal committees: For approval preferences, \citet{PeSk20a} showed that all strictly monotone welfarist rules are Pareto-optimal, whereas many prominent sequential rules fail this property (see \citet{lackner2023multi} for an overview).
Furthermore, \citet{laine2025pareto} distinguished different voting rules according to the kind of optimality requirement that they can satisfy in the ordinal multiwinner voting setting.
Third, its compatibility with other axioms: \citet{Caragiannisetal2024}, for example, showed that unanimity, as a minimal efficiency requirement, is incompatible with a variant of group-strategyproofness.
Another focus lies on the trade-offs between efficiency and (proportional) representation \citep{Lackner_2020,skowron2021proportionality}.

\paragraph{Fractional committees in multiwinner voting.}
Fractional Pareto-optimality evaluates an integral committee relative to the set of feasible fractional committees.
Here, candidates may be selected to fractional extents, subject to the same committee-size constraint. 
Fractional committees have appeared previously in the best-of-both-worlds study of multiwinner voting, where one computes lotteries over integral committees---or, equivalently, fractional committees---that have desirable ex-post and ex-ante guarantees~\citep{DBLP:journals/corr/abs-2303-03642, suzuki2024maximum}. For instance, \citet{suzuki2024maximum} designed a rule that combines ex-ante Pareto-optimality with best-of-both-worlds fairness guarantees by using a weighted utilitarian welfare approach. 
Notably, a committee is fractionally Pareto-optimal if and only if it is ex-ante efficient in the best-of-both-worlds sense, offering an alternative interpretation and motivation of fractional Pareto-optimality.
Along different lines, previous work has established that moving from integral to fractional committees enables stronger desirable properties~\citep{DBLP:journals/teco/ChengJMW20, KrPe25b}, such as guaranteed core-stability under additive utilities~\citep{KrPe25b}.

\paragraph{Fractional Pareto-optimality in fair division.}
Fractional Pareto-optimality has already been considered in the study of fair division with indivisible goods~\citep{DBLP:conf/aaai/KawaseM26,DBLP:journals/ior/SandomirskiyS22,DBLP:conf/ijcai/GargMQ23,DBLP:conf/sigecom/BarmanKV18,DBLP:journals/jair/GargM24,DBLP:journals/tcs/GargM23}.
For instance, \citet{DBLP:conf/sigecom/BarmanKV18} study how to compute an allocation that is envy-free up to one good (EF1) and fractionally Pareto-optimal under additive valuations.
Their algorithm maintains and adjusts a weighted welfare objective, similar in spirit to the weight certificates that drive our characterization.
Further, analogous to our picture in multiwinner voting, it has been observed that---in contrast to Pareto-optimality, whose verification is coNP-complete~\citep{DBLP:conf/aldt/KeijzerBKZ09}---the fractional Pareto-optimality of an allocation can be verified in polynomial time via linear programming~\citep{DBLP:conf/sigecom/BarmanKV18,DBLP:journals/ior/SandomirskiyS22}.

\section{Preliminaries}

\paragraph{General notation.} 
For $n \in \N$, we write $[n] \coloneqq \{1,2,\dots,n\}$, and we let $[0,1]$ denote the unit interval. 
Let $\e_i \in \R^n$ denote the $i$-th standard basis vector.
We define the \emph{standard unit simplex} as
$\Weights \coloneqq \left\{ \weight \in \mathbb{R}^n \;\colon\; \|\weight\|_1 = 1 \text{ and } \weight_i \ge 0 \text{ for all } i \in [n] \right\}$,
and the \emph{positive unit simplex} $\Weights^+$ as the subset of $\Weights$ consisting of vectors with strictly positive entries, that is, $\weight_i > 0$ for all $i \in [n]$. 

A function $g \;\colon\; \R^n \to \R$ is \emph{monotone} if for all $x\neq y \in \R^n$ with $x_i \leq y_i$ for all $i\in [n]$, it holds that $g(x) \leq g(y)$.
It is \emph{strictly monotone} if $g(x) < g(y)$ holds. 

\paragraph{Multiwinner elections with additive utilities.} Let $C=\{c_1,\dots,c_m\}$ be the set of \emph{candidates} and $N=\{1,\dots,n\}$ be the set of \emph{voters}. We refer to size-$k$ subsets of $C$ as \emph{committees}, and denote the set of all size-$k$ committees by $\Committees_k$.
We refer to $k$ as the size of the committee and write $\Committees$ when $k$ is clear from context.
For a committee $W\in \Committees$, let $\chi^W\in\{0,1\}^m$ be its incidence vector, where $\chi^W_c=1$ if $c\in W$, and $\chi^W_c=0$ otherwise. 

Each voter $i\in N$ reports a non-negative valuation function $v_i \colon C\to \Q_{\ge 0}$ mapping each candidate to the non-negative utility the voter attributes to the candidate; for a candidate $c\in C$, we write $v_{ic} \coloneqq v_i(c)$.
Voters' utilities are additive over committees: for each voter $i\in N$ and committee $W\in \Committees$, $u_i(W) \coloneqq \sum_{c\in W} v_{ic}$. We write $u(W)=(u_1(W),\dots,u_n(W))\in\Q_{\ge 0}^n$ for the \emph{utility vector} induced by $W$, and denote by $\mathcal{U} \coloneqq \{u(W): W\in\Committees\}$ the set of achievable utility vectors.
Finally, we write $V=(v_{ic})_{i\in N,c\in C}$ for the \emph{valuation profile}, and define an instance $I = (V, k)$ of a multiwinner election by the valuation profile and the committee size $k$.

\paragraph{Approval-based multiwinner elections.}
We also consider the popular special case in which voters' valuation functions are binary, known as the \emph{approval-based setting}. 
In this setting, each voter $i$ reports an approval set $A_i\subseteq C$, which induces binary additive valuations by setting $v_{ic}=1$ if $c\in A_i$ and $v_{ic}=0$ otherwise.
To distinguish this case, we denote approval-based instances as $I=(A,k)$, where $A=(A_i)_{i\in N}$. We denote by $N_c = \{i \in N \colon c \in A_i\}$ the support of a candidate $c \in C$.

\section{Fractional Pareto-Optimality: Definition and Relationships}

The standard efficiency notion for committees in the multiwinner voting literature is Pareto-optimality \citep{lackner2023multi,PeSk20a,laine2025pareto,Azi20}, which requires that no other committee can increase one voter's utility without decreasing another's.
Formally, a committee $W \in \Committees$ is \emph{Pareto-optimal} if there is no committee $T \in \Committees$ such that $u_i(T) \geq u_i(W)$ for all $i \in N$ and $u_j(T) > u_j(W)$ for some $j \in N$.
Given an instance $I$, we write $\PO(I)$ for the set of all Pareto-optimal committees.
We further say that a committee $W$ satisfies \emph{weak Pareto-optimality} if no committee $T$ gives strictly higher utility $u_i(T) > u_i(W)$ to every voter $i \in N$, and we denote the set of all weakly Pareto-optimal committees by $\wPO(I)$.

In this paper, we study two efficiency notions based on a fractional relaxation of the committee set $\Committees$.
To this end, let $\Probs \coloneqq \{\p \in [0,1]^m \colon \sum_{c \in C} \p(c) = k\}$ denote the set of all fractional committees:
A \emph{fractional committee} $\p \in \Probs$ assigns each candidate $c$ a value $\p(c)$ between $0$ and $1$, summing to $k$.
By contrast, we occasionally refer to committees from $\Committees$ as \emph{integral committees}; these can be identified with their incidence vectors $\chi$ and form the vertices of the polytope $\Probs$.
The utility that a voter $i$ derives from a fractional committee $\p$ is $u_i(\p) = \sum_{c \in C} v_{ic}\p(c)$, that is, the expected utility from the candidates selected under $\p$.
We denote the set of \emph{utility vectors} achievable by a fractional committee by $\fracU \coloneqq \{u(\p) \colon\; \p \in \Probs\}$, which is a polytope. Every vertex of $\fracU$ is of the form $u(W) \in \Q^n$ for some integral committee $W\in\Committees$ (see \Cref{proposition:utility_polytope}).

We now define the two efficiency notions central to this paper.

\begin{definition}[Fractional Pareto-optimality]
    A committee $W\in \Committees$ is \emph{fractional Pareto-optimal} $(\fPO)$ if there does not exist a fractional committee $\p \in \Probs$ such that $u_i(\p) \geq u_i(W)$ for all $i \in N$, and there exists at least one voter $j \in N$ with $u_j(\p) > u_j(W)$.
\end{definition}

\begin{definition}[Weak fractional Pareto-optimality]
    A committee $W$ is \emph{weakly fractional Pareto-optimal} $(\wfPO)$ if there does not exist a fractional committee $\p \in \Probs$ such that $u_i(\p) > u_i(W)$ for all $i \in N$.
\end{definition}

Clearly, a committee that satisfies fractional Pareto-optimality also satisfies its weak version, while the converse is not necessarily true.
Given instance $I$, we write $\fPO(I)$ for the set of all fractional Pareto-optimal committees and $\wfPO(I)$ for the set of all weak fractional Pareto-optimal committees. 
We start by establishing the relationships among the introduced efficiency axioms.

\examplesAxiomaticRelation

\begin{restatable}{proposition}{fPOimpliesPO}\label{proposition:axiomatic_relationships}
    The following relationships hold (see also \Cref{fig:characterization-overview}):
    \begin{enumerate}[leftmargin=*,label=\textit{(\roman*)}]
        \item $\fPO$ is a strict refinement of $\PO$,
        \item $\wfPO$ and $\PO$ are incomparable, and
        \item $\wfPO$ is a strict refinement of $\wPO$. 
    \end{enumerate}
\end{restatable}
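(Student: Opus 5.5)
The plan is to establish the three inclusions directly from the definitions, using that every integral committee $T\in\Committees$ is also a fractional committee ($\chi^T\in\Probs$ with $u(\chi^T)=u(T)$), and then to separate the notions with small explicit instances. This is presumably what the examples in \Cref{fig:characterization-overview} provide.

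\textbf{Inclusions.} If $W\notin\PO(I)$, some $T\in\Committees$ Pareto-dominates $W$. Then $\chi^T\in\Probs$ Pareto-dominates $W$ fractionally, so $W\notin\fPO(I)$; hence $\fPO(I)\subseteq\PO(I)$. The same argument with strict improvements for all voters gives $\wfPO(I)\subseteq\wPO(I)$.

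\textbf{Strictness of (i) and (iii).} For $\PO\not\subseteq\fPO$, take an approval instance with $k=2$, candidates $a,b,c,d$ and voters $1,2$. Let $A_1=\{a,c\}$ and $A_2=\{b,d\}$, and add a candidate $e$ approved by both voters. Then $\{a,b\}$ gives $(1,1)$, while $\{e,x\}$ yields $(2,1)$ or $(1,2)$, which dominates. So we need a cleaner instance. A simpler approach uses additive valuations with two voters and $k=1$. Let $v_1=(2,0,1.5)$ and $v_2=(0,2,1.5)$ for candidates $a,b,c$, but with $c$ valued $0.9$ by each voter. Then $\{c\}$ gives $(0.9,0.9)$. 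It is $\PO$, since $a$ and $b$ each hurt one voter, but $\p=(1/2,1/2,0)$ gives $(1,1)$ and dominates it. So $\{c\}\in\PO\setminus\fPO$, which proves (i). The same instance shows $\{c\}\in\wPO\setminus\wfPO$, since $\p$ improves both voters strictly; this proves (iii). If approval examples are preferred, I would try to scale this via $k=2$ with cloned candidates.

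\textbf{Incomparability (ii).} The instance above already gives $\PO\not\subseteq\wfPO$. For the other direction, I need a committee that is weakly fractionally optimal but integrally Pareto-dominated. Take $k=1$, with candidates $a$ and $b$, where $v_1=(1,1)$ and $v_2=(0,1)$. Then $\{a\}$ is dominated by $\{b\}$, so it is not $\PO$. It is still $\wfPO$, because voter $1$'s utility is at most $1$ under every $\p\in\Probs$, so no fractional committee strictly improves both voters.

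\textbf{Main obstacle.} The only real care needed is choosing instances small enough that $\PO$ or $\wfPO$ can be verified exhaustively. For $\wfPO$, the cleanest certificate is a voter whose utility is already maximal over $\Probs$, since $\fracU$ is the convex hull of $\mathcal U$ by \Cref{proposition:utility_polytope}.
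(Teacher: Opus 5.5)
Your proof is correct and has the same skeleton as the paper's. The inclusions $\fPO(I)\subseteq\PO(I)$ and $\wfPO(I)\subseteq\wPO(I)$ follow because every integral committee $T$ gives a point $\chi^T\in\Probs$ with the same utilities. The strictness and incomparability claims are settled by explicit instances, and your instance for $\wfPO\not\subseteq\PO$ ($k=1$, $A_1=\{a,b\}$, $A_2=\{b\}$, $W=\{a\}$) is exactly the paper's.

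You differ in the instance witnessing $\PO\not\subseteq\fPO$, $\PO\not\subseteq\wfPO$ and $\wPO\not\subseteq\wfPO$. You use one two-voter instance with non-binary valuations, essentially the example of \Cref{remark:complexity_difference}. The paper instead uses the approval instances of \Cref{figure:counter_example_PO_fPO_small} and \Cref{figure:counter_example_weak_fPO}. There $W=\{c_1,c_2\}$ is the only size-$2$ committee giving every voter utility at least $1$, yet spreading weight over the remaining candidates matches or strictly beats it for every voter. The proposition is stated for the general additive model, so your single example suffices and is much easier to verify. The paper's choice buys the stronger fact that every separation already occurs with approval preferences, which it emphasizes in \Cref{remark:complexity_difference} and \Cref{table:Results}.

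Your suggested route to approval examples via cloning cannot work. Cloning leaves valuations unchanged, so it never turns the values $2$ and $0.9$ into binary ones; an approval separation needs a combinatorial construction like the paper's.

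Finally, clean up the instance. As literally written, $v_1=(2,0,1.5)$ and $v_2=(0,2,1.5)$ would make $\{c\}$ the unique equal-weight utilitarian optimum, hence $\fPO$. State $v_1=(2,0,0.9)$ and $v_2=(0,2,0.9)$ directly, and drop the abandoned approval attempt.
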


The proof is deferred to \Cref{appendix:axiomatic_relationships}.
For intuition, consider \Cref{figure:counter_example_PO_fPO_small}, which illustrates an instance where the committee $W = \{c_1, c_2\}$ satisfies Pareto-optimality but not fractional Pareto-optimality.
The fractional committee $\p = (0, 1/2, \dots, 1/2)$ dominates $W$: no voter is worse off under $\p$, while the last voter is strictly better off.
\Cref{figure:counter_example_weak_fPO} shows even more, namely that a Pareto-optimal committee $W = \{c_1, c_2\}$ can be \emph{strictly} fractionally Pareto-dominated.
Here, $W= \{c_1, c_2\}$ gives every voter utility $1$, which no other pair of candidates can achieve; yet the uniform distribution over $c_3, \dots, c_7$, that is, $\p = (0, 0, 2/5, \dots, 2/5)$, strictly increases every voter's utility to $\frac{6}{5}$. \\

We close our comparison of $\PO$ and $\fPO$ with a distinctive characterization of $\fPO$ in terms of \emph{cloning-robustness} \citep{DBLP:journals/jair/ElkindFS11,tideman1987independence,DBLP:conf/aaai/TalmonF19,DBLP:conf/sigecom/BerkerCROCE25}.

\begin{definition}[$r$-cloned instance]
    Given an instance $I=(V,k)$, its $r$-cloned instance $rI$ is obtained by replacing every candidate $c\in C$ by $r$ identical copies $c^1,\dots,c^r$, each with valuation $v_{ic^\ell}=v_{ic}$ for all $i\in N$ and $\ell\in[r]$, and setting the committee size to $rk$.
    For a committee $W\subseteq C$, let $rW=\{c^\ell:c\in W,\ \ell\in[r]\}$ denote its canonical clone committee.
\end{definition}

We call a property $\mathcal{T}$ \emph{cloning-robust} if, for any instance $I$ and $W \in \mathcal{T}(I)$, its corresponding cloned committee satisfies the same property in the $r$-cloned instance, i.e., $rW \in \mathcal{T}(rI)$ for every $r \in \N_{>0}$.
Under this definition, $\fPO$ is precisely the cloning-robust refinement of Pareto-optimality:

\begin{restatable}{theorem}{cloningInvariance} \label{thm:cloningInvariance}
    Let $I=(V,k)$ and $W\in \Committees$.
    We have
    \[
        W\in \fPO(I) \quad\iff\quad rW\in \PO(rI)\text{ for every }r\in \N_{>0}.
    \]
    The same statement holds for $\wfPO$ and $\wPO$.
\end{restatable}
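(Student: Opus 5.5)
We prove both directions by relating committees of the cloned instance $rI$ to fractional committees of $I$ with denominators dividing $r$. The key observation is this. For any committee $T$ of size $rk$ in $rI$, set $\p_T(c) \coloneqq |\{\ell : c^\ell \in T\}|/r$. Then $\p_T \in [0,1]^m$, $\sum_c \p_T(c) = k$, so $\p_T \in \Probs$. Moreover, since clones share valuations, $u_i^{rI}(T) = r\, u_i(\p_T)$ for every voter $i$. Conversely, every $\p \in \Probs$ whose entries are all multiples of $1/r$ arises as $\p_T$ for some committee $T$ of $rI$: take $r\p(c)$ copies of each $c$. Finally, $u_i^{rI}(rW) = r\,u_i(W)$. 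Hence Pareto-domination of $rW$ in $rI$ by some $T$ is equivalent to fractional domination of $W$ in $I$ by the $1/r$-grid fractional committee $\p_T$, with all inequalities scaled by $r>0$. The same holds for strict domination.

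\emph{($\Rightarrow$)} If $rW \notin \PO(rI)$ for some $r$, then some $T$ dominates it. The corresponding $\p_T \in \Probs$ fractionally dominates $W$, so $W \notin \fPO(I)$. The analogous argument gives the $\wfPO$/$\wPO$ version.

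\emph{($\Leftarrow$)} Suppose $W \notin \fPO(I)$, witnessed by some $\p \in \Probs$. We need a rational dominating $\p'$, since then choosing $r$ as a common denominator gives a dominating committee of $rI$ by the observation. This rationality step is the main obstacle, because $\p$ itself may be irrational.

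We handle it through linear programming. Consider the LP in variables $\p$:
\[
\max \sum_{i} u_i(\p) \quad \text{s.t. } \p \in \Probs,\ u_i(\p) \ge u_i(W)\ \forall i \in N.
\]
All data $v_{ic}$, $k$, and $u_i(W)$ are rational. The feasible region is a nonempty bounded polytope (it contains $\chi^W$), so the optimum is attained at a vertex, and that vertex is rational. Since $\p$ is feasible with objective strictly larger than $\sum_i u_i(W)$, the optimal rational vertex $\p'$ also has strictly larger objective. Together with the constraints $u_i(\p') \ge u_i(W)$ for all $i$, this forces some voter $j$ to satisfy $u_j(\p') > u_j(W)$, so $\p'$ dominates $W$. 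Alternatively, one could invoke \Cref{proposition:utility_polytope}, writing $\p$ as a convex combination of integral committees and perturbing the weights to rationals, but the LP argument is cleaner.

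For the weak version, consider instead the LP
\[
\max\ t \quad \text{s.t. } \p \in \Probs,\ u_i(\p) \ge u_i(W) + t\ \forall i \in N.
\]
It is bounded, and it is feasible with $t>0$ whenever a strictly dominating $\p$ exists. Its optimal vertex is therefore a rational $(\p', t^*)$ with $t^* > 0$, so $\p'$ strictly dominates $W$. Taking $r$ to be the least common denominator of the entries of $\p'$ and applying the correspondence yields a committee $T$ of $rI$ with $u^{rI}_i(T) > u^{rI}_i(rW)$ for all $i$. Thus $rW \notin \wPO(rI)$, which completes both equivalences.
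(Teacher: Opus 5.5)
Your proof is correct and follows essentially the same route as the paper: the fraction-of-clones map $T \mapsto \p_T$ gives one direction, and selecting $r\p(c)$ clones of each candidate, for a rational witness $\p$ with common denominator $r$, gives the other. The one difference is that the paper simply asserts that a rational dominating witness can be chosen, while you justify it via an optimal vertex of a rational-data LP (for the weak version, adding the constraint $t \ge 0$, as in the paper's verification LP, makes the feasible region a polytope, so the vertex claim is immediate).
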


Since not every Pareto-optimal committee also satisfies $\fPO$, it holds that $\PO$ violates cloning-robustness. 
Fractional Pareto-optimality, on the other hand, is cloning-robust.

\begin{restatable}{proposition}{cloningInvariancePartTwo} \label{thm:cloningInvariance_corollary}
    $\fPO$ and $\wfPO$ are cloning-robust. $\PO$ and $\wPO$ violate cloning-robustness.
\end{restatable}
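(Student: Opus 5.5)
The plan is to split the statement into its positive and negative parts.

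\emph{Positive part: $\fPO$ and $\wfPO$ are cloning-robust.} Fix $W\in\fPO(I)$ and $r\in\N_{>0}$; we must show $rW\in\fPO(rI)$. Rather than invoking \Cref{thm:cloningInvariance} in both directions for nested cloning, I would argue directly. Suppose some fractional committee $\mathbf{q}$ of $rI$ (entries in $[0,1]$ summing to $rk$) dominates $rW$. Define $\p(c)\coloneqq \frac1r\sum_{\ell=1}^r \mathbf{q}(c^\ell)$. Then $\p(c)\in[0,1]$ and $\sum_c \p(c)=k$, so $\p\in\Probs$. Since clones have identical valuations, $u_i(\mathbf{q})=\sum_c v_{ic}\sum_\ell \mathbf{q}(c^\ell)=r\,u_i(\p)$. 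Likewise $u_i(rW)=r\,u_i(W)$. Thus every (weak or strict) inequality between $u(\mathbf{q})$ and $u(rW)$ transfers, after dividing by $r$, to the same inequality between $u(\p)$ and $u(W)$. So $\p$ dominates $W$ in $I$, contradicting $W\in\fPO(I)$. The identical argument with all inequalities strict handles $\wfPO$.

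As a sanity check, the positive part also follows from \Cref{thm:cloningInvariance}, using that $s(rI)=(sr)I$ and $s(rW)=(sr)W$: if $W\in\fPO(I)$, then $(sr)W\in\PO((sr)I)$ for all $s$, so $rW\in\fPO(rI)$. Either route is straightforward; the averaging argument is self-contained.

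\emph{Negative part: $\PO$ and $\wPO$ are not cloning-robust.} Take the instance of \Cref{figure:counter_example_weak_fPO}. There $W=\{c_1,c_2\}$ is Pareto-optimal, hence weakly Pareto-optimal, but it is strictly fractionally dominated by $\p=(0,0,2/5,\dots,2/5)$. So $W\notin\wfPO(I)$, and therefore $W\notin\fPO(I)$. By \Cref{thm:cloningInvariance} (the $\wfPO$/$\wPO$ version), some $r$ has $rW\notin\wPO(rI)$, and hence $rW\notin\PO(rI)$. This single witness shows both $\PO$ and $\wPO$ fail cloning-robustness. Concretely, $r=5$ suffices: in $5I$ the committee has size $10$, and taking two clones of each of $c_3,\dots,c_7$ realizes $5\p$ integrally, giving every voter utility $6$ instead of $5$.

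The main care needed is in the negative part: the example must be one where $W$ is (weakly) Pareto-optimal in $I$ itself, which \Cref{proposition:axiomatic_relationships} and the figure provide. Beyond that, the argument only requires that cloning scales utilities linearly, so I expect no real obstacle.
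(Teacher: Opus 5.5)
Your proof is correct, and it differs from the paper's mainly in the positive part.

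\textbf{Positive part.} The paper argues indirectly through \Cref{thm:cloningInvariance}. If $rW\notin\fPO(rI)$, the reverse direction of the theorem gives some $s$ with $s(rW)\notin\PO(s(rI))$. Since cloning composes as $(rs)W$ and $(rs)I$, this contradicts the forward direction of the theorem with $t=rs$. That is your ``sanity check'' route; you state the identification $s(rI)=(sr)I$ explicitly, whereas the paper uses it silently.

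Your main argument is different. You project a dominating fractional committee of $rI$ onto $I$ by averaging over clones. This is self-contained and treats $\fPO$ and $\wfPO$ uniformly, since all inequalities simply scale by $1/r$. It also avoids the reverse direction of the theorem, whose proof needs a rational dominating witness.

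\textbf{Negative part.} The paper argues abstractly: $\fPO\subsetneq\PO$ and $\wfPO\subsetneq\wPO$, combined with the theorem, force a failure at some $r$. You use the same idea, but with the instance of \Cref{figure:counter_example_weak_fPO}. Because $W$ there is Pareto-optimal yet strictly fractionally dominated, this single witness refutes cloning-robustness of both $\PO$ and $\wPO$ at once. Your explicit choice $r=5$ checks out: each voter approves exactly three of $c_3,\dots,c_7$, so two clones of each give every voter utility $6$ against $5$ under $5W$.
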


\Cref{thm:cloningInvariance} and \Cref{thm:cloningInvariance_corollary} gives a normative justification for studying $\fPO$ even when only integral committees are ever implemented.
A committee that is $\PO$ but not $\fPO$ owes its efficiency to the coarseness of the current candidate set: cloning is the simplest refinement of that set, and already it can render the committee inefficient, e.g., in \Cref{figure:counter_example_PO_fPO_small}, it suffices to clone every candidate once such that $\{c_1, c_2\}$ gets Pareto-dominated.
Conversely, a committee is $\fPO$ exactly when no amount of cloning ever does so.
In this sense, $\fPO$ certifies an efficiency that is robust to how finely the candidate set happens to be drawn up.

\begin{remark}\label{remark:complexity_difference}
    Although our general model allows for arbitrary, additive valuations $v_{ic}\in\mathbb{Q}_{\ge 0}$, separations between $\PO$ and $\fPO$ are most meaningful when they arise in the \emph{approval-based domain}.
    As an illustration, consider the following instance with general utilities: two voters, three candidates, $k=1$, and valuations
    \begin{equation*}
        v(c_1) = (1,1), \quad v(c_2) = (3, 0), \quad v(c_3) = (0, 3).
    \end{equation*}
    Even though $W = \{c_1\}$ satisfies $\PO$, it is strictly fractionally dominated by the uniform combination over $c_2$ and $c_3$.
    Approval utilities, by contrast, impose a strong combinatorial restriction, as every candidate contributes a binary vector to the utility vector.
    Any negative result obtained in this restricted model is therefore considerably stronger: it shows that the gap between $\PO$ and $\fPO$ is not merely an artifact of adversarially picked cardinal utilities.
\end{remark}

\subsection{Integrality Gap}

We have seen that in both approval-based and general additive instances, not all $\PO$ committees are $\fPO$. The natural next step is to analyze the ``gap'' between committees that are $\PO$ but not $\fPO$.
In the additive-valuation model, there exists a $\PO$ committee such that none of its candidates is part of a $\fPO$ committee. To see this, recall the instance from \Cref{remark:complexity_difference}: $\{c_1\}$ is Pareto-optimal, yet every fractional Pareto-optimal committee must contain one of $c_2$ and $c_3$.
Cloning the candidates extends this observation to arbitrary committee sizes.
For approval preferences, by contrast, every candidate of a $\PO$ committee belongs to at least one $\fPO$ committee.\footnote{This follows from the characterization in \Cref{theorem:fractional_po_equal_weights}. A candidate in a $\PO$ committee is strictly dominated by at most $k-1$ other candidates, so for weights placing sufficiently large weight on its supporters, some welfare-maximizing committee contains it; by \Cref{theorem:fractional_po_equal_weights}, this committee is $\fPO$.}

In this section, focusing on approval-based instances, we therefore ask how far a Pareto-optimal committee can be from the nearest $\fPO$ committee. We show that, measured in utility space, this separation can be large, i.e., a fractional committee can give every voter close to $k$ times the utility they obtain under a Pareto-optimal committee $W$. 

\begin{restatable}{theorem}{ArbitrarilyBad} \label{thm:arbirtrarilyBad}
    For every $k \geq 1$ and every $\epsilon>0$, there exists an approval-based instance $I=(A,k)$, an integral committee $W\in\PO(I)$, and a fractional committee $\p\in\Probs$ such that
    \[
        (k-\epsilon)\cdot u_i(W) < \sum_{c\in A_i}\p(c) \qquad\text{for all } i\in N.
    \]
\end{restatable}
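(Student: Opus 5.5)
The plan is to give an explicit approval-based construction in which $W$ gives every voter utility exactly $1$, while a uniform fractional committee over a large pool of ``filler'' candidates gives every voter utility $k(1-k/M)$. Here $M$ is the size of the pool, and this value tends to $k$ as $M\to\infty$.

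\textbf{Construction.} Fix $k\ge 1$ and $\epsilon>0$, and choose an integer $M$ with $M\ge 2k$ and $M> k^2/\epsilon$. The candidates are $W=\{c_1,\dots,c_k\}$ together with a pool $D=\{d_1,\dots,d_M\}$. For every $k$-subset $S\subseteq D$ and every $j\in[k]$ there is one voter $v_{S,j}$ with approval set
\[
A_{S,j}=(D\setminus S)\cup\{c_j\}.
\]
Under $W$ every voter approves exactly one member, namely $c_j$, so $u_i(W)=1$ for all $i$.

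\textbf{Step 1 ($W\in\PO(I)$).} I will show something stronger: every committee $T\neq W$ gives some voter utility $0<1$, so no $T$ can weakly dominate $W$. Write $T_W=T\cap W$ and $T_D=T\cap D$. Since $T\neq W$ and $|T|=k$, we have $|T_W|<k$. Hence some $c_j\notin T$ exists, and also $1\le|T_D|\le k$. Because $M\ge 2k$, we can choose a $k$-subset $S\subseteq D$ with $S\supseteq T_D$. The voter $v_{S,j}$ then approves no member of $T$: the members of $T_D$ lie in $S$, and $c_j\notin T$. So $u_{v_{S,j}}(T)=0<1$. This is the only step requiring care, and it reduces to a short case check on $T_W$ and $T_D$. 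The main obstacle overall was choosing voters indexed by pairs $(S,j)$ so that every possible ``escape'' committee is punished; once that choice is made, the check is routine.

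\textbf{Step 2 (the fractional committee).} Let $\p(d)=k/M$ for every $d\in D$ and $\p(c_j)=0$. This lies in $\Probs$ because $k/M\le 1$ and the total mass is $M\cdot k/M=k$. Each voter $v_{S,j}$ approves exactly $M-k$ pool candidates, so
\[
\sum_{c\in A_{S,j}}\p(c)=(M-k)\frac{k}{M}=k-\frac{k^2}{M}>k-\epsilon=(k-\epsilon)\,u_{v_{S,j}}(W),
\]
where the strict inequality uses $M>k^2/\epsilon$. This holds for every voter, which proves the claim.
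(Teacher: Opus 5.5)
Your proof is correct and is essentially the paper's construction: your pool size $M$ plays the role of the paper's $rk$ (so your weight $k/M$ equals its $1/r$), and you use the same voters indexed by pairs of a $k$-subset and a member of $W$, with the same argument for Pareto-optimality. One minor remark: $M\ge k$ already suffices to choose a $k$-subset $S\supseteq T_D$, so the condition $M\ge 2k$ is unnecessary.
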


\begin{proofsketch}
    For $k=1$ the statement trivially holds. For $k\geq 2$, we extend the idea of the counterexample in \Cref{figure:counter_example_weak_fPO}.
    We construct an instance with a Pareto-optimal committee $W$ such that every voter approves exactly one candidate from $W$.
    Alongside $W$, there is a set of candidates $C'$ that every voter approves almost entirely, yet no $k$-subset of $C'$ is approved by all voters.
    Spreading weight uniformly over $C'$ therefore gives every voter utility almost $k$, since each voter disapproves of only a small fraction of $C'$.
    By contrast, every integral committee other than $W$ leaves some voter entirely uncovered, and hence cannot Pareto-dominate $W$.
\end{proofsketch}

\Cref{thm:arbirtrarilyBad} shows that the multiplicative gap between $\PO$ and $\fPO$ can be made arbitrarily close to $k$. 
An alternative interpretation of the theorem is that, in some instances, slightly more than one fractional candidate can provide every voter with a higher utility than a Pareto-optimal committee obtains using $k$ integral candidates. 
Thus, while a $\fPO$ committee cannot be dominated by a fractional committee, a $\PO$ committee might already be dominated by a fractional committee using only slightly more than one fractional candidate seat. 
This is striking because the original committee is not Pareto-dominated by any integral committee, and utilities are merely approval counts.
 
The following observation shows that $k$ is the largest possible such factor: no Pareto-optimal committee admits a fractional improvement that scales every voter's utility by more than $k$.
Indeed, let $t \coloneqq \max_{j\in N} u_j(W)$ denote the largest utility under a Pareto-optimal committee $W$ (where $t=1$ in our proof of \Cref{thm:arbirtrarilyBad}).
Since any fractional committee has total weight $k$, no voter can receive a utility exceeding $k$; in particular, the voter attaining $t$ cannot be improved beyond $k$.
Hence, no fractional Pareto improvement can multiply every voter's utility by a factor larger than $k/t$.

\begin{observation}\label{thm:theTheoremForTheAllOnes}
    For every approval-based instance $I=(A,k)$ and every committee $W\in\PO(I)$, there is no fractional committee $\p\in\Probs$ such that
    \[
        \sum_{c\in A_i}\p(c)
        > \frac{k}{\max_{j\in N} u_j(W)}\cdot u_i(W)
        \qquad\text{for all } i\in N .
    \]
\end{observation}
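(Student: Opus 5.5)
The plan is to argue directly by contradiction, singling out a voter who already receives the largest utility under $W$. Fix an approval-based instance $I=(A,k)$ and $W\in\PO(I)$, and let $t\coloneqq\max_{j\in N}u_j(W)$. If $t=0$, the right-hand side equals $0$ for every voter. In that case the claim would say that no fractional committee gives every voter positive utility. This does not follow from the utility bound used below, so it will need its own argument.

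For $t=0$, I will use Pareto-optimality. Suppose some $\p\in\Probs$ gives every voter positive utility. Then every voter approves at least one candidate, and in particular some voter $i$ approves some candidate $c$. Because $u_i(W)=0$, we have $c\notin W$. If $W$ contains some candidate $d$ with $N_d=\emptyset$, then swapping $d$ for $c$ leaves no voter worse off and makes voter $i$ strictly better off, which contradicts $W\in\PO(I)$. Otherwise every candidate in $W$ is approved by somebody, so if $k\ge 1$ that voter has positive utility, contradicting $t=0$. When $k=0$, the set $\Probs$ contains only the zero vector, so no voter can receive positive utility. Hence the case $t=0$ is consistent with the statement; I expect to handle it in a short remark, or to assume $t>0$ implicitly, as the paper's discussion does.

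Now suppose $t>0$, and pick a voter $j^\ast$ with $u_{j^\ast}(W)=t$. Assume for contradiction that some $\p\in\Probs$ satisfies the strict inequality for all voters. Applying it to $j^\ast$ gives
\[
\sum_{c\in A_{j^\ast}}\p(c) > \frac{k}{t}\cdot t = k .
\]
On the other hand, since $\p(c)\ge 0$ for all $c$ and $\sum_{c\in C}\p(c)=k$, we also have $\sum_{c\in A_{j^\ast}}\p(c)\le\sum_{c\in C}\p(c)=k$. These two bounds contradict each other. Note that this step does not even use Pareto-optimality; it holds for every committee with $t>0$.

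No step in this argument is a real obstacle. The only subtle point is the degenerate case $t=0$, where the quotient $k/t$ is undefined. That is the one place where the hypothesis $W\in\PO(I)$ is needed, and there it goes through the simple swap argument above.
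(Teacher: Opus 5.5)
Your argument is correct and is essentially the paper's: the voter attaining $t=\max_{j\in N}u_j(W)$ receives at most $\sum_{c\in C}\p(c)=k$ from any fractional committee, so the required strict inequality $>k$ fails for that voter. As you note, Pareto-optimality is not used in this step, and your separate treatment of $t=0$ goes beyond the paper, which tacitly assumes $t>0$; under your convention that the right-hand side is then $0$, your swap argument with an unapproved member of $W$ is valid.
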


\section{Characterizing Fractional Pareto-Optimality via Weighted Utilitarianism}

In this section, we characterize the set of all (weakly) fractionally Pareto-optimal committees and use this characterization to derive algorithmic tractability results (\Cref{sec:comp}) and to analyze the structural properties of fractionally Pareto-optimal committees (\Cref{sec:conn}).
Concretely, in line with classical results from welfare economics \citep{BoVa04a,FeSe05a}, we show that every committee satisfying (weak) fractional Pareto-optimality arises as a solution to a weighted utility maximization problem.

To make our characterization formal, for a given instance $I$ and weight vector $\weight\in \Weights$, we define \begin{equation}\label{equation:weighted_utilitarian}
    \PO^\weight(I) = \Big\{W \in \Committees \colon \chi^W \in \arg\max_{\p \in \Probs}\sum_{i \in N}\weight_i u_i(\p) \Big\}
\end{equation} 
to be the set of integral committees maximizing the $\weight$-weighted utilitarian welfare $\sum_{i\in N}\weight_i u_i(\p)$ over all fractional committees $\p\in\Probs$.
We say that a committee $W\in \Committees$ is \emph{supported} by a weight $\weight \in \Weights$ if $W \in \PO^\weight(I)$. 
Since $\Probs$ is a nonempty compact polytope and the objective $\p \mapsto \sum_{i\in N}\weight_i u_i(\p)$ is linear, the maximum is attained; moreover, a linear objective over a polytope is maximized at a vertex, and the vertices of $\Probs$ are exactly the integral committees, so $\PO^\weight(I)\neq\emptyset$.

We are now ready to state the main result of this section: (weak) fractional Pareto-optimality admits a weighted-utilitarian characterization.

\begin{restatable}{theorem}{CharacterizationFPO}
\label{theorem:fractional_po_equal_weights}
    For a given instance $I$ and committee $W \in \Committees$, 
    \begin{equation*}
    \begin{alignedat}{2}
        \text{(i)}\quad & W \in \fPO(I)  &&\iff \exists\, \weight \in \Delta^+ : W \in \PO^\weight(I),\\
        \text{(ii)}\quad & W \in \wfPO(I) &&\iff \exists\, \weight \in \Delta : W \in \PO^\weight(I).
    \end{alignedat}
    \end{equation*}
\end{restatable}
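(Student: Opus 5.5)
The plan is to work entirely in utility space. Here $\fracU$ is a polytope, since it is the linear image of $\Probs$ under $\p\mapsto u(\p)$. Both parts of \Cref{theorem:fractional_po_equal_weights} then become instances of the supporting-hyperplane characterizations of (weakly) Pareto-optimal points of a polytope. One reformulation is used throughout: $W\in\PO^\weight(I)$ holds iff $\sum_i \weight_i u_i(W)\ge \sum_i\weight_i u_i(\p)$ for all $\p\in\Probs$, i.e., iff $u(W)$ maximizes the linear functional $\weight$ over $\fracU$.

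The ``if'' directions are short. For (i), let $\weight\in\Delta^+$ support $W$, and suppose $\p$ fractionally dominates $W$. Summing $\weight_i(u_i(\p)-u_i(W))\ge 0$ over all $i$, with strict inequality for the improved voter $j$ because $\weight_j>0$, gives $\sum_i\weight_iu_i(\p)>\sum_i\weight_iu_i(W)$. This contradicts optimality. For (ii), let $\weight\in\Delta$ support $W$, and suppose $u_i(\p)>u_i(W)$ for all $i$. Some $\weight_i>0$ because $\|\weight\|_1=1$, so again the weighted sum strictly increases, a contradiction.

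The ``only if'' directions need LP duality or a separation argument. For (ii), suppose $W\in\wfPO(I)$. Then the open convex set $S=\{x\in\R^n: x>u(W)\}$ is disjoint from the convex set $\fracU$. A separating hyperplane yields a nonzero $\weight$ with $\weight\cdot x\ge \weight\cdot y$ for all $x\in S$ and $y\in\fracU$. Because $S$ is unbounded in each positive coordinate direction, $\weight\ge 0$, so after normalization $\weight\in\Delta$. Letting $x\to u(W)$ then gives $\weight\cdot u(W)\ge\weight\cdot y$ for all $y\in\fracU$, i.e., $W\in\PO^\weight(I)$.

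For (i), this separation is too weak, since it may produce zero weights. I would instead use the LP
\[ \max \sum_{i} s_i \quad\text{s.t.}\quad \sum_c v_{ic}\p(c) - s_i = u_i(W)\ (i\in N),\quad \textstyle\sum_c\p(c)=k,\quad 0\le\p\le 1,\ s\ge 0. \]
Because $W$ is $\fPO$, the optimum is $0$, attained at $\p=\chi^W$, $s=0$. Let $\lambda_i$ be the duals of the utility constraints, $\mu$ the dual of the size constraint, and $\beta_c\ge0$ the duals of $\p(c)\le 1$. Dual feasibility for the $s_i$ columns gives $-\lambda_i\ge 1$, so setting $\weight_i:=-\lambda_i\ge 1>0$ gives strictly positive weights. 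Dual feasibility for the $\p$ columns says $\sum_i \lambda_i v_{ic}+\mu+\beta_c\ge 0$ for all $c$. Complementary slackness at $\chi^W$ then shows that $\chi^W$ maximizes $\sum_i\weight_i u_i(\p)$ over $\Probs$. A cleaner way to finish is strong duality: the dual objective equals $0$, which rearranges to $\sum_i\weight_iu_i(\p)\le\sum_i\weight_iu_i(W)$ for every feasible $\p$. Normalizing $\weight$ puts it in $\Delta^+$.

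The main obstacle is this last step, getting strictly positive weights in (i). The LP dual with unit objective coefficients on the slacks $s_i$ is the cleanest route, and I expect the sign bookkeeping there to need care. Alternatively, one can cite the standard theorem that a point of a polyhedron is Pareto-optimal iff it is supported by a strictly positive weight vector, which relies on finiteness of the facets. I would present the LP argument explicitly.
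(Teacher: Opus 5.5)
Your proposal is correct. The two ``if'' directions and the separation argument for the ``only if'' part of (ii) are essentially the paper's; your limit $x\to u(W)$ is a slightly cleaner way to see that the hyperplane supports $\fracU$ at $u(W)$.

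For the ``only if'' part of (i), you take a genuinely different route. The paper writes $\fracU=\operatorname{conv}\{u^1,\dots,u^r\}$ and applies a tailored variant of Motzkin's transposition theorem to the vectors $u^\ell-u(W)$. That variant is \Cref{lem:alternative_to_schrijver}, which the paper must prove separately. It then passes from the generators to all of $\fracU$ by convexity.

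You instead dualize the fractional-committee LP itself, and your sign bookkeeping is right. The $s_i$ columns force $\lambda_i\le -1$, and the $\p$ columns give $s_c(\weight)\le\mu+\beta_c$. The closing step is the usual weak-duality chain $\sum_i\weight_iu_i(\p)=\sum_c s_c(\weight)\p(c)\le\sum_c(\mu+\beta_c)\p(c)\le\mu k+\sum_c\beta_c=\sum_i\weight_iu_i(W)$. This uses $0\le\p\le 1$, $\beta\ge 0$, $\sum_c\p(c)=k$ and the zero dual optimum.

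Your route buys economy and a direct link to the algorithmic results. It needs only textbook LP duality instead of a bespoke theorem of the alternative. Up to writing the utility constraints as equalities, your primal is the verification LP of \Cref{theorem:computing_fPO_domination}, so its optimal dual already is the certificate. Complementary slackness at $\chi^W$ gives $\beta_c=0$ for $c\notin W$ and $s_c(\weight)=\mu+\beta_c\ge\mu$ for $c\in W$. This delivers the separation condition of \Cref{equation:separation_condition} directly, so the paper's second LP for computing $\weight$ becomes superfluous.

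The paper's argument, in turn, is phrased purely in terms of the utility polytope and is independent of how $\Probs$ is described. The same dualization would also handle (ii). In the dual of the common-$\varepsilon$ LP, the $\varepsilon$ column only forces the nonnegative weights to sum to at least $1$. That is exactly why zero weights can appear in (ii) but not in (i).
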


In the following, we give an overview of the proof which can be found in \Cref{appendix:characterization}.
If a committee is optimal for a strictly positive weighted utilitarian objective, then no fractional committee can weakly improve all voters and strictly improve one of them, since this would strictly increase the weighted sum. 
Conversely, let $W$ satisfy fractional Pareto-optimality, and write $u^*:=u(W)$. 
We show that there exists a weight vector normal to a hyperplane supporting the fractional utility polytope $\fracU$ at $u^*$. 
This implies that $u^*$, and hence $W$, maximizes the weighted utilitarian objective in \Cref{equation:weighted_utilitarian}. 
For $\fPO$, we obtain strictly positive weights via Motzkin's transposition theorem; for $\wfPO$, non-negative weights are obtained by a standard separating-hyperplane argument.

The distinction between the two efficiency notions is reflected in the admissible weight vectors: the characterization of weak fractional Pareto-optimality uses non-negative weights, whereas the characterization of fractional Pareto-optimality uses strictly positive weights.
This is in the spirit of the weighted-utilitarian characterization of efficient allocations from welfare economics \citep{BoVa04a,FeSe05a}.

\begin{figure}[!t]
    \centering
    \begin{tikzpicture}[
        transform shape,
        node distance=1cm and 1.4cm,
        box/.style={
            draw,
            rounded corners,
            align=center,
            inner sep=2.5pt,
            minimum width=2.5cm,
            minimum height=0.7cm
        },
    ]
    
    \node[box] (topk) {Top-\(k\) candidates\\
    with respect to \(\succsim_{\weight}\)};
    
    \node[box, below=of topk] (supported) {Supported committee\\
    $W \in \PO^\weight(I)$};
    
    \node[box, below left=0.8cm and 0.2cm of supported] (fpo) {$W \in \fPO(I)$};
    \node[box, below right=0.8cm and 0.2cm of supported] (wfpo) {$W \in \wfPO(I)$};
    \node[box, below=0.8cm of fpo] (po) {$W \in \PO(I)$};
    \node[box, below=0.8cm of wfpo] (wpo) {$W \in \wPO(I)$};
    
    \draw[Latex-Latex] (topk) -- node[left] {\Cref{lemma:weights_to_scores}} (supported);
    \draw[Latex-Latex] (supported) -- node[above left] {$\weight \in \Weights^+$} (fpo);
    \draw[Latex-Latex] (supported) -- node[above right] {$\weight \in \Weights$} (wfpo);
    \node[below= 0.1cm of supported] (name) {\Cref{theorem:fractional_po_equal_weights}};
    
    \draw[-Latex] (fpo) -- (wfpo);
    \draw[-Latex] (wfpo) -- (wpo);
    \draw[-Latex] (po) -- (wpo);
    \draw[-Latex] (fpo) -- node[right] {\Cref{proposition:axiomatic_relationships}} (po);
    \draw[-Latex] (wfpo) -- node[right] {\Cref{proposition:axiomatic_relationships}} (wpo);
    \end{tikzpicture}
    \caption{Relationship between score-induced weak orders, weighted utility maximization, and the efficiency notions studied in this paper. 
    Supported committees are exactly the top-$k$ committees induced by score vectors $s(\weight)$. 
    Restricting to strictly positive weights yields the characterization of $\fPO$, while allowing arbitrary non-negative weights yields the characterization of $\wfPO$.
    $\wfPO$ and $\PO$ are incomparable (see \Cref{proposition:axiomatic_relationships}).}
    \label{fig:characterization-overview}
\end{figure}

Next, we show that computing the committees supported by a given weight vector reduces to ranking the candidates by a candidate-specific score and selecting the top-$k$, with ties broken arbitrarily.
Given a weight vector $\weight \in \Weights$, we define the \emph{score} $s_c(\weight)$ of a candidate $c \in C$ as the weighted sum of the voters' valuations for $c$, that is, $s_c(\weight) \coloneqq \sum_{i \in N} \weight_i v_{ic}$.
Accordingly, every weight vector $\weight\in\Weights$ induces a weak order $\succsim_\weight$ over the candidates, where $c \succsim_\weight c'$ if and only if $s_c(\weight) \geq s_{c'}(\weight)$.

\begin{restatable}{lemma}{weightsToScores}\label{lemma:weights_to_scores}
    Let $I$ be an instance and $\weight \in \Weights$.
    A committee $W \in \Committees$ belongs to $\PO^\weight(I)$ if and only if $W$ consists of the $k$ highest-ranked candidates according to $\succsim_\weight$ under some tie-breaking.
\end{restatable}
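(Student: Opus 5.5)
The plan is to rewrite the weighted utilitarian objective as a linear function of the fractional committee whose coefficients are exactly the candidate scores. By swapping the order of summation, for every $\p\in\Probs$ we have
\[
    \sum_{i\in N}\weight_i u_i(\p)=\sum_{i\in N}\weight_i\sum_{c\in C}v_{ic}\p(c)=\sum_{c\in C}s_c(\weight)\,\p(c).
\]
Hence $W\in\PO^\weight(I)$ if and only if $\chi^W$ maximizes $\p\mapsto\sum_c s_c(\weight)\p(c)$ over $\Probs=\{\p\in[0,1]^m:\sum_c\p(c)=k\}$. The lemma therefore reduces to a statement about maximizing a linear function over a uniform-matroid base polytope, or equivalently a continuous knapsack with unit sizes.

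Next we compute the optimal value. Order the candidates so that $s_{c_{(1)}}\ge\dots\ge s_{c_{(m)}}$ and let $\sigma:=\sum_{j\le k}s_{c_{(j)}}$, the sum of the $k$ largest scores. We first show that every $\p\in\Probs$ has value at most $\sigma$. Set $\theta:=s_{c_{(k)}}$. For each $c$ we have $s_c\p(c)\le \theta\p(c)+(s_c-\theta)^+$, because $\p(c)\in[0,1]$. Summing over $c$ and using $\sum_c\p(c)=k$ gives
\[
    \sum_c s_c\p(c)\le k\theta+\sum_c(s_c-\theta)^+ =\sigma .
\]
The last equality holds because the candidates with $s_c>\theta$ all lie among the top $k$. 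The integral top-$k$ committee attains $\sigma$, so $\sigma$ is the maximum. This settles the direction "top-$k$ under some tie-breaking $\Rightarrow W\in\PO^\weight(I)$": any such $W$ has value $\sum_{c\in W}s_c=\sigma$.

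For the converse, suppose $\sum_{c\in W}s_c(\weight)=\sigma$ with $|W|=k$. We must show that $W$ is a top-$k$ set under some tie-breaking. Equivalently, there is no pair $c\in W$, $c'\notin W$ with $s_{c'}>s_c$. We argue by contradiction using an exchange: if such a pair existed, then $W\setminus\{c\}\cup\{c'\}$ would be a committee of value strictly greater than $\sigma$, contradicting the upper bound. Once no such pair exists, $\min_{c\in W}s_c\ge\max_{c'\notin W}s_{c'}$. We can then break ties by ranking the members of $W$ above the non-members of equal score, which makes $W$ exactly the $k$ highest-ranked candidates.

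The main subtlety is stating "under some tie-breaking" precisely and handling ties at the threshold $\theta$. The exchange argument covers this cleanly. The upper-bound inequality is the only step that requires any care, and it is routine.
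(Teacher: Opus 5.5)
Your proof is correct and follows the same route as the paper: you rewrite the weighted welfare as $\sum_{c\in C} s_c(\weight)\,\p(c)$ and reduce the lemma to maximizing this linear function over $\Probs$. Your explicit bound $s_c\p(c)\le\theta\p(c)+(s_c-\theta)^+$ and your exchange argument for the converse make rigorous the step the paper only asserts, namely that the maximizers are exactly the top-$k$ selections with arbitrary tie-breaking at the threshold.
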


If there are no ties, the characterization simply amounts to selecting the $k$ candidates with the highest score.
If, however, several candidates are tied for the $k$-th highest score, we include every candidate whose score lies strictly above the tied value and fill the remaining seats with an arbitrary subset of the tied candidates.
Each such choice corresponds to a supported committee and a vertex of the associated face of the utility polytope.
In the fractional setting, by contrast, the remaining seats can be distributed arbitrarily (fractionally) among the tied candidates, yielding any point on the respective face.
Thus, whenever a Pareto-optimal committee $W$ is fractionally Pareto-dominated, the dominating utility vector lies in the relative interior of a positive-dimensional face of $\fracU$, rather than at a vertex.

Combining \Cref{theorem:fractional_po_equal_weights} and \Cref{lemma:weights_to_scores}, a committee $W$ satisfies $\fPO$ if and only if it meets the following separation criterion:
\begin{equation}\label{equation:separation_condition}
    \exists\, \weight \in \Weights^+,\ \forall\, c \in W,\ c' \in C\setminus W: \quad s_c(\weight) \geq s_{c'}(\weight).
\end{equation}
The corresponding characterization of $\wfPO$ is obtained by allowing non-negative weight vectors.

\subsection{Computability}\label{sec:comp}
A notable advantage of fractional Pareto-optimality is that it is computationally far more tractable than Pareto-optimality.
While \citet{Azi20} showed that deciding whether a committee is Pareto-optimal is coNP-complete, we show that $\fPO$ can be checked in polynomial time.
Moreover, when $W$ satisfies $\fPO$, we can compute a succinct certificate of this fact in polynomial time, namely a supporting weight vector $\weight$; by \Cref{lemma:weights_to_scores}, the fact that $\weight$ supports $W$ can in turn be checked easily.

\begin{restatable}{theorem}{ComputabilityFPO}\label{theorem:computing_fPO_domination}
    Given an instance $I$, verifying (weak) fractional Pareto-optimality for a committee $W \in \Committees$ can be done in polynomial time. 
    Further, if $W \in \fPO(I)$ $(W \in \wfPO(I))$, a weight vector $\weight \in \Weights^+$ $(\weight \in \Weights)$ with $W \in \PO^\weight(I)$ can be computed in polynomial time. 
\end{restatable}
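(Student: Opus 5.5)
By \Cref{theorem:fractional_po_equal_weights} and \Cref{lemma:weights_to_scores}, $W\in\fPO(I)$ holds exactly when the separation criterion \eqref{equation:separation_condition} holds. That criterion is a system of linear inequalities in $\weight$, apart from strict positivity. The plan is therefore to decide it with a single polynomial-size linear program whose optimum certifies feasibility, and to read off the certificate $\weight$ from an optimal solution.

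\textbf{The $\fPO$ case.} Introduce variables $\weight\in\R^n$ and a slack $t\in\R$, and solve
\[
\max\ t \quad\text{s.t.}\quad \sum_{i\in N}\weight_i (v_{ic}-v_{ic'})\ge 0\ \ \forall c\in W,\ c'\in C\setminus W,\qquad \weight_i\ge t\ \ \forall i\in N,\qquad \sum_{i\in N}\weight_i=1 .
\]
This LP is always feasible, since the uniform weight gives $t\ge 0$ minus some slack; more simply, take any $\weight\in\Weights$ with $t=\min_i\weight_i$ as long as the pair constraints hold. To avoid worrying about whether some simplex point satisfies the pair constraints, we drop that concern by allowing $t$ to be negative and keeping $\weight$ free, so the LP is feasible. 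It is also bounded, since $t\le 1/n$. The LP has $O(n)$ variables and $O(km+n)$ constraints, with rational coefficients $v_{ic}$ of polynomial bit length, so it can be solved in polynomial time by the ellipsoid or interior-point method.

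If the optimum satisfies $t^*>0$, the optimal $\weight^*$ lies in $\Weights^+$ and satisfies \eqref{equation:separation_condition}, so $W\in\PO^{\weight^*}(I)$ and $W\in\fPO(I)$. Conversely, if some $\weight\in\Weights^+$ supports $W$, then $t=\min_i\weight_i>0$ is feasible, so $t^*>0$. Hence $W\in\fPO(I)$ if and only if $t^*>0$. Because the LP has rational data, an optimal vertex solution is rational with polynomial bit length, and that solution is the certificate. To check the certificate, compute the scores $s_c(\weight^*)$ and apply \Cref{lemma:weights_to_scores}.

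\textbf{The $\wfPO$ case.} Solve the feasibility LP consisting of the pair constraints together with $\weight\ge 0$ and $\sum_i\weight_i=1$. By part (ii) of \Cref{theorem:fractional_po_equal_weights}, it is feasible exactly when $W\in\wfPO(I)$, and any feasible vertex is a polynomial-size certificate.

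\textbf{Main obstacle.} The delicate step is handling the strict positivity $\weight_i>0$, which an LP cannot express directly. The max-$t$ trick resolves it: it turns strict feasibility into the question of whether the optimum value is positive. The remaining details are checking boundedness and polynomial bit size, which are standard.
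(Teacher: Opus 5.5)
You take a genuinely different route for the decision part. The paper verifies (weak) $\fPO$ with an LP over fractional committees: it maximizes $\sum_i\varepsilon_i$ (resp.\ a common $\varepsilon$) subject to $u_i(\p)\ge u_i(W)+\varepsilon_i$. That LP encodes the definition directly, needs no appeal to \Cref{theorem:fractional_po_equal_weights}, and returns an explicit dominating $\p$ when $W\notin\fPO(I)$. Only once $W\in\fPO(I)$ is known does the paper solve the weight LP ($\max\eta$ subject to the score constraints, $\sum_i\weight_i=1$, $\weight_i\ge\eta\ge 0$) to extract the certificate. You use essentially that weight LP for the decision as well, relying on \Cref{theorem:fractional_po_equal_weights} and \Cref{lemma:weights_to_scores} for correctness. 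This is valid and economical, since one LP yields both the answer and the certificate. In the negative case, however, it gives no dominating fractional committee; the two formulations are the two sides of the theorem of the alternative underlying \Cref{theorem:fractional_po_equal_weights}.

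One claim in your write-up is false, though it is easily repaired: the LP with free $\weight$ and free $t$ is \emph{not} always feasible. Suppose some $c\in W$ and $c'\notin W$ satisfy $v_{ic'}=v_{ic}+1$ for every voter $i$ (e.g., approval utilities with $c$ approved by nobody and $c'$ by everybody). Then the pair constraint for $(c,c')$ reads $-\sum_i\weight_i\ge 0$, contradicting $\sum_i\weight_i=1$. With a single voter, any $W$ that is not top-$k$ for that voter already causes infeasibility. Your remark about the uniform weight is likewise unfounded, since the uniform weight need not satisfy the pair constraints.

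You run this LP unconditionally, unlike the paper, which only calls its weight LP when feasibility is already guaranteed. So you must handle the infeasible case, but it is harmless. Polynomial-time LP algorithms detect infeasibility, and infeasibility means no normalized weight supports $W$, so $W\notin\wfPO(I)$ and a fortiori $W\notin\fPO(I)$. Cleaner still, impose $t\ge 0$ as the paper does. Then feasibility is equivalent to $W\in\wfPO(I)$ and $t^*>0$ to $W\in\fPO(I)$, so a single LP settles both notions.

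The remaining steps are correct: boundedness via $t\le 1/n$, the equivalence $t^*>0\iff W\in\fPO(I)$, rational optimal solutions of polynomial size, and the $\wfPO$ feasibility LP.
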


\begin{proofsketch} 
    We decide whether $W$ is fractionally Pareto-dominated by solving the following linear program with variables $(\p(c))_{c\in C}$ and $(\varepsilon_i)_{i\in N}$:
    \begin{align*}
        \max\quad&\sum_{i \in N} \varepsilon_i \\
        \text{s.t.}\quad& \sum_{c \in C} v_{ic}\p(c) \geq u_i(W) + \varepsilon_i \quad\forall i \in N, \\
        & \sum_{c \in C} \p(c) = k, \quad \varepsilon_i \geq 0, \quad 0 \leq \p(c) \leq 1.
    \end{align*}
    The variables $\p(c)$ encode the fractional committee, while $\varepsilon_i$ measures the improvement of voter $i$ over $W$. 
    The optimal value is positive if and only if there exists a fractional committee that weakly improves every voter and strictly improves at least one of them.
    In this case, the optimal vector $\p$ is a fractional committee witnessing $W\notin\fPO(I)$; otherwise, no such domination exists and $W\in\fPO(I)$.
    
    For $\wfPO$, we instead maximize a single common improvement $\varepsilon$ subject to $\sum_{c\in C}v_{ic}\p(c)\ge u_i(W)+\varepsilon$ for all $i\in N$, since weak fractional Pareto-domination requires a strict improvement for every voter simultaneously; then $W\notin\wfPO(I)$ if and only if the optimum value satisfies $\varepsilon>0$.
    
    Lastly, if $W$ is fractional Pareto-optimal, \Cref{equation:separation_condition} has a feasible solution. 
    Formulating this equation as a linear program yields a polynomial-time algorithm that retrieves a weight vector $\weight \in \Weights^+$ such that $W\in \PO^\weight(I)$; the case $\weight\in\Weights$ for $\wfPO$ is analogous.
\end{proofsketch}

\subsection{Structural Properties of Fractional Pareto-Optimality}\label{sec:conn}
Building on our characterization, we establish two structural results regarding the set of fractional Pareto-optimal committees, namely, their monotonicity and connectedness.
The results from this section demonstrate that the set of fractional Pareto-optimal committees is remarkably consistent and well-structured.

\medskip
\noindent\textbf{Committee monotonicity} expresses a natural form of consistency across committee sizes.
Intuitively, a committee that satisfies a certain property can be enlarged or reduced by some candidate while upholding this property.
Committee monotonicity has previously been studied as an axiom for multiwinner voting rules \citep{DBLP:journals/scw/ElkindFSS17}.  We show that the sets of $\fPO$ and $\wfPO$ committees satisfy this property. 
By contrast, $\PO$ and $\wPO$ fail committee monotonicity in the additive-valuation setting; whether the same is true in the approval-based setting remains open.

\begin{definition}[Committee monotonicity]
    A family of committee sets $(S_k(I))_{k=1}^{m}$ satisfies \emph{committee monotonicity} if, for every instance $I$ and every $k < m$,
    \begin{itemize}
        \item every committee in $S_k(I)$ is contained in some committee in $S_{k+1}(I)$, and
        \item every committee in $S_{k+1}(I)$ contains some committee in $S_k(I)$.
    \end{itemize}
\end{definition}

\begin{restatable}{theorem}{CommitteeMonotonicity} \label{th:CommitteeMonotonicity}
    $\fPO$ and $\wfPO$ satisfy committee monotonicity.
\end{restatable}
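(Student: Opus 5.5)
The plan is to work entirely through the separation criterion \eqref{equation:separation_condition}, which combines \Cref{theorem:fractional_po_equal_weights} with \Cref{lemma:weights_to_scores}. Under this criterion, a committee $W$ of size $k$ lies in $\fPO(I)$ (resp.\ $\wfPO(I)$) exactly when some $\weight\in\Weights^+$ (resp.\ $\weight\in\Weights$) makes $W$ a top-$k$ set of the weak order $\succsim_\weight$ under some tie-breaking. The key point is that the certificate $\weight$ depends only on the valuation profile and not on $k$. A single weight vector therefore certifies committees of every size simultaneously, as long as we move along one fixed linear extension of $\succsim_\weight$.

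\textbf{Upward direction.} Let $W\in\fPO$ for size $k<m$, supported by $\weight\in\Weights^+$. Then every $c\in W$ and $c'\in C\setminus W$ satisfy $s_c(\weight)\ge s_{c'}(\weight)$. Since $k<m$, the set $C\setminus W$ is nonempty, so I would pick $c^*\in\arg\max_{c'\in C\setminus W} s_{c'}(\weight)$ and set $W'=W\cup\{c^*\}$. To check the criterion for $W'$, take $c\in W'$ and $d\in C\setminus W'$. If $c\in W$, then $s_c\ge s_d$ because $d\notin W$. If $c=c^*$, then $s_{c^*}\ge s_d$ by the choice of $c^*$. Hence $W'\in\PO^\weight(I)$ for size $k+1$ with the same positive $\weight$, and so $W'\in\fPO$.

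\textbf{Downward direction.} Let $W'\in\fPO$ for size $k+1$, supported by $\weight$. I would pick $c_*\in\arg\min_{c\in W'}s_c(\weight)$ and set $W=W'\setminus\{c_*\}$. For $c\in W$ and $d\in C\setminus W$, there are two cases. If $d\notin W'$, then $s_c\ge s_d$ holds by assumption. If $d=c_*$, then $s_c\ge s_{c_*}$ by minimality. So $W$ is supported by $\weight$, and $W\in\fPO$. The $\wfPO$ case is the same argument verbatim with $\weight\in\Weights$, using part (ii) of \Cref{theorem:fractional_po_equal_weights}.

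I do not expect a real obstacle here. The one point needing care is that \Cref{lemma:weights_to_scores} is stated for a fixed $k$. I would therefore note explicitly that $\PO^\weight$ at sizes $k$ and $k+1$ is governed by the same score vector $s(\weight)$, so the pairwise inequalities in \eqref{equation:separation_condition} transfer directly. I would also keep the boundary case $k+1=m$ in mind: there $W'=C$ is trivially supported by any $\weight$, and the argument above still applies.
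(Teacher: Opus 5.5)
Your proposal is correct and is essentially the paper's proof. Both fix a supporting weight vector $\weight$ from \Cref{theorem:fractional_po_equal_weights}, extend $W$ upward by a highest-scoring candidate outside it, and shrink $W'$ downward by a lowest-scoring candidate inside it, reusing the same $\weight$ via \Cref{lemma:weights_to_scores}. Your explicit pairwise checks against the separation criterion, together with the remarks on $k<m$ and the vacuous case $k+1=m$, make the argument a little more detailed than the paper's.
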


\begin{wrapstuff}[r,width=0.42\textwidth,type=figure, top=4]
    \centering
\begin{tikzpicture}[
        scale=0.4,
        every node/.style={font=\scriptsize},
        single/.style={circle, fill=red!70, inner sep=1.8pt},
        bad/.style={circle, fill=green!60!black, inner sep=1.8pt},
        dom/.style={circle, fill=blue!70, inner sep=1.8pt},
        axis/.style={->, thick},
        domarrow/.style={-{Stealth[length=1.5mm,width=1.5mm]}, very thick, gray!50,
                    shorten >=1pt, shorten <=1pt}
    ]
    
    \draw[axis] (0,0) -- (13,0) node[below] {$u_1$};
    \draw[axis] (0,0) -- (0,13) node[left] {$u_2$};
    
    \foreach \x in {2,4,6,8,10,12}
        \draw (\x,0.15) -- (\x,-0.15) node[below=2pt] {\tiny \x};
    \foreach \y in {2,4,6,8,10,12}
        \draw (0.15,\y) -- (-0.15,\y) node[left=2pt] {\tiny \y};
    
    \node[single, label={[below= 0.5pt]{$\{c_1\}$}}] (c1) at (1.5,1.5) {};
    \node[single, label={[below= 0.5pt]{$\{c_2\}$}}] (c23) at (5,1) {};
    \node[single, label={[below= 0.5pt]{$\{c_4\}$}}] (c45) at (1,5) {};
    \node[single, label={[above left=-5pt]{$\{c_6\}$}}] (c6) at (7,0) {};
    \node[single, label={[below right=-1pt]{$\{c_7\}$}}] (c7) at (0,7) {};
    
    \node[bad, label={[below= 0.5pt]{$\{c_1,c_6\}$}}] (c16) at (8.5,1.5) {};
    \node[bad, label={[below left=-1pt]{$\{c_1,c_2\}$}}] (c12) at (6.5,2.5) {};
    \node[bad, label={[below= 0.5pt]{$\{c_1,c_4\}$}}] (c14) at (2.5,6.5) {};
    \node[bad, label={[below= 0.5pt]{$\{c_1,c_7\}$}}] (c17) at (1.5,8.5) {};
    
    \node[dom, label={[below right=-1pt]{$\{c_2,c_3\}$}}] (c23pair) at (10,2) {};
    \node[dom, label={[above=-1pt]{$\{c_4,c_5\}$}}] (c45pair) at (2,10) {};
    \node[dom, label={[above right=-1pt]{$\{c_6,c_7\}$}}] (c67) at (7,7) {};
    
    \draw[domarrow] (c23pair) -- (c16);
    \draw[domarrow] (c45pair) -- (c17);
    \draw[domarrow] (c67) -- (c12);
    \draw[domarrow] (c67) -- (c14);
    
    \fill (0,0) circle (1pt);
    
\end{tikzpicture}
    \caption{A counterexample to committee monotonicity of (weak) Pareto-optimality in the additive-valuation model.} 
    \label{fig:po_not_monotone}
\end{wrapstuff}

By \Cref{lemma:weights_to_scores}, a committee $W$ is (weakly) fractionally Pareto-optimal if and only if it consists of the top-$k$ candidates under the score order induced by some weight vector $\weight$. 
Adding a next-highest-scoring candidate under $\weight$ to $W$ yields a top-$(k+1)$ committee, and removing a lowest-scoring candidate from a supported committee of size $k+1$ under $\weight$ yields a top-$k$ committee.
Thus, $(\fPO_k(I))_{k=1}^m$ and $(\wfPO_k(I))_{k=1}^m$ satisfy committee monotonicity, where $\fPO_k(I)$ ($\wfPO_k(I)$) denotes the set of $\fPO$ ($\wfPO$) committees with respect to the committee size $k$.

In the following, we show that both PO and wPO violate monotonicity for the general valuation setting.

Consider two voters and seven candidates with valuations $v_1=(\frac{3}{2},5,5,1,1,7,0)$ and $v_2=(\frac{3}{2},1,1,5,5,0,7)$. 
For $k=1$, the committee $W = \{c_1\}$ is Pareto-optimal, as it is not dominated by any other size-$1$ committee.
For $k=2$, however, every extension of $W$ even fails weak Pareto-optimality: for each candidate $c \in \{c_2, \dots, c_7\}$, the committee $W \cup \{c\}$ is strictly Pareto-dominated.
\Cref{fig:po_not_monotone} shows the utility vectors of these extensions in green; each is strictly dominated by a size-$2$ committee drawn in blue, as indicated by the arrows.
Hence, both $\PO$ and $\wPO$ violate committee monotonicity.

\medskip
\noindent\textbf{Connectedness} ensures that efficient committees of a fixed size form a connected region under single-candidate swaps.

\begin{definition}[Connectedness]
    We say that a set of committees $\mathcal{T} \subseteq \Committees$ is \emph{connected} if for every two committees \(W, W' \in \mathcal{T}\), there exists a sequence of committees
    \begin{equation*}
        W_0, W_1, \dots, W_t \in \mathcal{T}
    \end{equation*}
    such that $W_0 = W$, $W_t = W'$, and for every $\ell \in \{0,\dots,t-1\}$, the committee $W_{\ell+1}$ can be obtained from $W_\ell$ by swapping exactly one candidate, i.e., $\lvert W_\ell \setminus W_{\ell+1}\rvert = \lvert W_{\ell+1} \setminus W_\ell\rvert = 1$.
\end{definition}

This notion is closely related to reconfiguration problems, where one asks whether a given feasible solution can be transformed into another through a sequence of elementary local moves while preserving a desired property throughout \citep{chen2024multiwinnerreconfiguration,dong2026reconfiguring}.
We show that the set of (weakly) fractionally Pareto-optimal committees is connected.

\begin{restatable}{theorem}{Connectedness} \label{theorem:connectedness}
    For every instance $I$, $\fPO(I)$ and $\wfPO(I)$ are connected.
\end{restatable}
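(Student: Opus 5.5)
By \Cref{theorem:fractional_po_equal_weights} and \Cref{lemma:weights_to_scores}, every committee in $\fPO(I)$ is the top-$k$ set, under some tie-breaking, of the score order $\succsim_\weight$ for some $\weight\in\Weights^+$. For $\wfPO(I)$ the same holds with $\weight\in\Weights$. The plan is to connect two such committees by moving the weight vector continuously and tracking how the top-$k$ set changes along the way.

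Let $W$ be supported by $\weight^0$ and $W'$ by $\weight^1$. Consider the segment $\weight^t=(1-t)\weight^0+t\weight^1$ for $t\in[0,1]$. It stays in $\Weights^+$ (resp.\ $\Weights$), since both sets are convex. Each score $s_c(\weight^t)$ is affine in $t$. So any two candidates whose score functions are not identical change relative order at most once, and there are finitely many breakpoints $0<t_1<\dots<t_q<1$ at which two scores cross. Between consecutive breakpoints the weak order $\succsim_{\weight^t}$ is constant. At a breakpoint, the order can only merge candidates into ties, because the order there is a coarsening of the orders on both neighbouring intervals.

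The connectivity argument has two parts.

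\emph{Within a fixed weight vector.} For a single $\weight$, the supported committees are exactly: all candidates with score strictly above the $k$-th highest value $\theta$, plus any subset of the right size from the tie class $T_\theta$ at value $\theta$. Any two such committees differ only inside $T_\theta$. They can therefore be connected by swaps, one tied candidate for another, and each intermediate committee is still supported by $\weight$. Hence $\PO^{\weight}(I)$ is connected.

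\emph{Across breakpoints.} The order at $t_j$ coarsens the orders on both neighbouring open intervals. So every committee that is top-$k$ for a $\weight^t$ with $t$ near $t_j$ is also top-$k$ for $\weight^{t_j}$, i.e.\ $\PO^{\weight^t}(I)\subseteq\PO^{\weight^{t_j}}(I)$. Concretely, a top-$k$ set under a finer weak order stays a top-$k$ set, under a suitable tie-breaking, of any coarser weak order that it refines. This is because the coarsening never places a lower-ranked candidate strictly above a higher-ranked one. Consequently, on each open interval the supported set is a fixed connected family, and it lies inside the connected family at each adjacent breakpoint. Chaining these families starting from $W\in\PO^{\weight^0}(I)$ and ending at $W'\in\PO^{\weight^1}(I)$ gives a path of single swaps through committees that are all supported by some $\weight^t$. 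By \Cref{theorem:fractional_po_equal_weights}, all of them lie in $\fPO(I)$ (resp.\ $\wfPO(I)$).

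The main obstacle is making the inclusion at breakpoints rigorous, and handling candidates whose score functions coincide identically, which are then always tied. I would argue this directly through the threshold description. Let $t$ lie in an interval adjacent to $t_j$ and let $c\in W_t$, $c'\notin W_t$. Then $s_c(\weight^t)\ge s_{c'}(\weight^t)$, and by continuity $s_c(\weight^{t_j})\ge s_{c'}(\weight^{t_j})$. This is exactly the separation criterion~\eqref{equation:separation_condition} for $\weight^{t_j}$, so $W_t\in\PO^{\weight^{t_j}}(I)$. The same continuity argument in fact shows that the set-valued map $t\mapsto\PO^{\weight^t}(I)$ is upper semicontinuous. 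Combined with the finitely many breakpoints, this completes the argument without any explicit case analysis.
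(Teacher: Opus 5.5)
Your proposal is correct and follows essentially the same route as the paper. Both interpolate linearly between supporting weight vectors and note that the scores are affine in the parameter, so there are only finitely many breakpoints. At each breakpoint, the committees supported just before and just after are connected by swaps inside the tie class at the $k$-th threshold, and all of these remain supported by the breakpoint weight vector. Your continuity argument showing $\PO^{\weight^t}(I)\subseteq\PO^{\weight^{t_j}}(I)$ for $t$ in an adjacent interval makes explicit a step the paper only asserts, namely that the committees immediately before and after $\lambda_j$ are supported at $\lambda_j$.
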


\begin{proofsketch}
    Recall from \Cref{lemma:weights_to_scores} that, for a given weight vector, each supported committee consists of the top-$k$ candidates with respect to the induced scores (under some tie-breaking).
    The score $s_c(\weight)=\sum_{i\in N}\weight_i v_{ic}$ is linear in $\weight$.
    Hence, given two $\fPO$ ($\wfPO$) committees $W$ and $W'$, supported by weight vectors $\weight$ and $\weight'$ respectively (which exist by \Cref{theorem:fractional_po_equal_weights}), interpolating linearly via $\weight(\lambda)=(1-\lambda)\weight+\lambda\weight'$ makes each candidate's score a linear function of $\lambda$.
    As $\lambda$ increases from $0$ to $1$, the supported committee changes only when the identity of the top-$k$ candidates changes.
    Viewing each candidate's score as a line in $\lambda$, every such change is a breakpoint at which a line that was below the $k$-th highest rises to become one of the $k$ highest, which corresponds to a single-candidate swap.
\end{proofsketch}

We next provide an upper bound on the number of swaps required to transform one $\fPO$ committee into another.
By the argument above, the number of swaps is governed by the number of breakpoints on the $k$th-level of the $m$ score lines.
This is the classical $k$-level problem from discrete geometry, and the best-known upper bound on its complexity \citep{Dey98} yields the following theorem:

\begin{restatable}{proposition}{diameter}
    Any two committees in $\fPO\;(\wfPO)$ can be transformed into one another by a sequence of at most $\mathcal{O}(m \cdot k^{1/3})$ single-candidate swaps within $\fPO\;(\wfPO)$.
\end{restatable}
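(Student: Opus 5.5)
We follow the interpolation argument from the proof sketch of \Cref{theorem:connectedness} and count how often the top-$k$ set changes along the path. Let $W,W'$ be $\fPO$ committees supported by $\weight,\weight'\in\Weights^+$; these weights exist by \Cref{theorem:fractional_po_equal_weights}. Set $\weight(\lambda)=(1-\lambda)\weight+\lambda\weight'$ for $\lambda\in[0,1]$. The set $\Weights^+$ is convex, so every $\weight(\lambda)$ lies in $\Weights^+$, and every committee supported along the path is $\fPO$. For $\wfPO$ the same holds with $\Weights$ in place of $\Weights^+$. Each score $s_c(\weight(\lambda))$ is an affine function of $\lambda$, so we obtain an arrangement of $m$ lines, one per candidate, over the interval $[0,1]$.

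\emph{Step 1: ties at the endpoints.} At $\lambda=0$, $W$ is a top-$k$ set only under some tie-breaking. We first perturb the lines so that they are in general position: no three lines are concurrent and no two coincide. The perturbation must keep $W$ (resp.\ $W'$) as a top-$k$ set near $\lambda=0$ (resp.\ $\lambda=1$), and it must preserve all combinatorial changes of the top-$k$ set. One way to do this: near $\lambda=0$, lift the lines of candidates in $W$ by an infinitesimal amount $\delta$, and treat the lines analogously near $\lambda=1$. Equivalently, we can start with a sequence of swaps inside the tied class at $\lambda=0$. Such swaps are legitimate because any tie-breaking gives a supported committee (\Cref{lemma:weights_to_scores}), and they can be ordered so that each step exchanges a single candidate, at most $k$ of them. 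The same applies at $\lambda=1$. Coinciding lines can be merged into one, since those candidates are interchangeable throughout.

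\emph{Step 2: counting swaps.} Sweep $\lambda$ from $0$ to $1$. The top-$k$ set changes only at a crossing between the $k$-th and $(k+1)$-st highest lines, that is, at a vertex of the $k$-level of the arrangement. In general position, each such vertex exchanges exactly one candidate. At every intermediate point the current top-$k$ set is supported by $\weight(\lambda)$, so each intermediate committee lies in $\fPO$ (resp.\ $\wfPO$). If several crossings occur at the same $\lambda$, we resolve them one at a time using a small perturbation of the breakpoints; each intermediate committee is still a valid tie-breaking at that $\lambda$. The number of swaps is therefore at most the number of vertices of the $k$-level of $m$ lines, plus $O(k)$ from the endpoints. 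By Dey's bound \citep{Dey98} the number of such vertices is $O(m k^{1/3})$, and $O(k)\le O(m)$, which gives the claim.

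\emph{Main obstacle.} The delicate step is the degenerate case: many lines may be concurrent, the intermediate committees must be shown to lie in $\fPO$ at tie points, and the swap count must not exceed the $k$-level complexity. The cleanest route is symbolic perturbation of the line arrangement. Perturbation can only add breakpoints and never remove them, and Dey's bound holds for arbitrary arrangements in general position. So we bound the swaps in the perturbed arrangement and then check that each perturbed top-$k$ set equals some tie-breaking of the unperturbed top-$k$ set at a nearby $\lambda$, which keeps it in $\fPO$.
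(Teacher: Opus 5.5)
Your proposal is correct and takes essentially the same route as the paper: interpolate the supporting weights, treat the candidate scores as an arrangement of $m$ lines, and pass to a symbolic perturbation in general position that favors $W$ at $\lambda=0$ and $W'$ at $\lambda=1$. The paper does this favoring with the term $\varepsilon((1-\lambda)\mathbf 1_{\{c\in W\}}+\lambda\mathbf 1_{\{c\in W'\}})$ rather than with your alternative of at most $O(k)$ extra endpoint swaps. You then check, as the paper does, that each perturbed top-$k$ set is a tie-breaking at a nearby unperturbed breakpoint, and apply Dey's $\mathcal{O}(m k^{1/3})$ bound. Your side remarks that coinciding lines can be merged and that perturbation only adds breakpoints are not needed, since the path you bound is the perturbed sweep itself. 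Taken literally, merging would undercount swaps when two groups of identical candidates cross at the threshold.
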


Complementing this upper bound, we establish two lower bounds on the number of swaps required: one for the additive-valuation model and one for the approval-based model.

\begin{restatable}{proposition}{diameterLower} \label{proposition:lower_diameter}
    For a given instance $I$, let $W,W'\in\fPO(I)$, and let $W=W_0,W_1,\dots,W_t=W'$ be a sequence of committees in $\fPO(I)$ such that consecutive committees differ by a single-candidate swap. 
    The following lower bounds apply to the minimum possible length $t$ of such a sequence.
    \begin{enumerate}[leftmargin=*,label=\textit{(\roman*)}]
        \item For every $k\geq 1$, there is an approval instance $I = (A, k)$ with $t \geq 2k-1$.
        \item For every $k\ge 2$ and $\ell\ge 1$, there exists an instance $I=(V, k)$ with $t \geq \ell$.
    \end{enumerate}
\end{restatable}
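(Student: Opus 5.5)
The plan is to construct explicit instances and then prove that no short swap path exists. Throughout, I would use the characterization from \Cref{theorem:fractional_po_equal_weights} together with \Cref{lemma:weights_to_scores}: a committee is in $\fPO(I)$ if and only if it consists of the top-$k$ candidates under the scores $s_c(\weight)=\sum_i\weight_i v_{ic}$ for some $\weight\in\Weights^+$, i.e., it satisfies the separation condition~\eqref{equation:separation_condition}. In both parts, the lower bound should follow from an invariant that every $\fPO$ committee must satisfy and that a single swap can change only by a bounded amount. The upper bound on $t$ is irrelevant here; we only need that the endpoints $W,W'$ are $\fPO$ (certified by explicit weights) and that every intermediate committee in the path is $\fPO$.

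For (ii) I would take two voters, so that the weight space is effectively one-dimensional. I would choose the candidates as points $v(c_0),\dots,v(c_L)\in\Q_{\ge 0}^2$ in strictly convex position on a concave chain from $(M,0)$ to $(0,M)$, for instance points on a circular arc, with $L$ of order $2\ell$, and set $k=2$ (larger $k$ can be handled by adding $k-2$ dominant candidates valued $(M',M')$ with $M'\gg M$, which belong to every supported committee). The steps are as follows.
\begin{itemize}
\item For a direction $\weight$, the top-$2$ candidates are the two chain points extremal in direction $\weight$. Because the points are in convex position, these are always $c_j$ together with $c_{j-1}$ or $c_{j+1}$, or, at degenerate directions, a pair with indices at distance $2$.
\item Hence every $\fPO$ committee $\{c_a,c_b\}$ satisfies $|a-b|\le 2$.
\item A single swap keeps one of the two indices fixed, so the quantity $\max(a,b)$ can increase by at most a constant (at most $4$) per swap.
\item Taking $W=\{c_0,c_1\}$ and $W'=\{c_{L-1},c_L\}$, both supported by the extreme positive weights, gives $t\ge (L-1)/4\ge\ell$ once $L\ge 4\ell+1$.
\end{itemize}
The delicate point is to make the concave-chain geometry exact: I would choose the points so that every direction in $\Weights^+$ is maximized by at most two consecutive chain points, and so that no supported pair has index gap larger than $2$.

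For (i), integral $0/1$ vectors cannot form a fine convex chain, so I would instead use a helper-candidate design. Let $W=\{a_1,\dots,a_k\}$ and $W'=\{b_1,\dots,b_k\}$, and add helper candidates $h_1,\dots,h_{k-1}$. The approval sets should be chosen with the following properties.
\begin{itemize}
\item $W$ is supported by weights concentrated on the ``$A$-voters''.
\item $W'$ is supported by weights concentrated on the ``$B$-voters''.
\item Every weight vector that ranks some $b_j$ weakly above some $a_i$ while some $a_{i'}$ remains in the top-$k$ must also rank all helpers into the top-$k$.
\end{itemize}
Concretely, I would try voters $x$, $y$, $z$ with $a$-candidates approved by $\{x,z\}$, $b$-candidates approved by $\{y,z\}$ and helpers approved by $\{x,y\}$. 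I would then add voter copies or private voters so that a mixed committee containing both an $a$ and a $b$ but missing a helper violates~\eqref{equation:separation_condition}. The intended invariant is: every $\fPO$ committee containing both some $a_i$ and some $b_j$ contains all $k-1$ helpers, which forces it to be of the form $\{a_i\}\cup H$ or $\{b_j\}\cup H$ with $H=\{h_1,\dots,h_{k-1}\}$.

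Given that invariant, the counting is straightforward. Each swap adds exactly one candidate, and any path from $W$ to $W'$ must add all $k$ of the $b$'s. It must also pass through a committee containing all $k-1$ helpers, and since $W'$ contains no helper, each of those helpers must be added once and removed later. This gives at least $k+(k-1)=2k-1$ additions.

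The main obstacle is the approval construction in (i): with binary valuations, ties between $a$- and $b$-scores are hard to avoid, and ties allow mixed committees without helpers and hence shortcuts of length $k$. My first attempt would be to break symmetry with additional voters approving single $a_i$ or single $b_j$, making the $a$'s and $b$'s have pairwise distinct score functions. I would then verify the invariant by a case analysis of which linear score inequalities can hold simultaneously for a positive $\weight$, most likely by exhibiting, for each forbidden committee, a concrete fractional committee that Pareto-dominates it.
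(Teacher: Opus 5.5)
Your part (ii) is correct and takes a different route from the paper. The paper uses quadratic two-voter valuations whose $\fPO$ committees are exactly the $\ell+1$ sliding windows of size $k$. You instead use a convex chain for $k=2$, padded with $k-2$ dominant candidates, and a potential argument. Both work. In your chain the supported pairs are in fact always consecutive, so $\max(a,b)$ grows by at most $1$ per swap. The only detail to add is that the chain points must have rational coordinates.

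The gap is part (i). You never produce an instance, and the design you outline cannot be completed; it already fails for $k=2$. Since $|H|=k-1$, your invariant is equivalent to ``no $\fPO$ committee contains both an $a$ and a $b$.'' Now take any $\weight\in\Weights^+$ with $\max_i s_{a_i}(\weight)=\max_j s_{b_j}(\weight)=\sigma$. The two maximizers extend to a top-$k$ set unless $k-1$ other candidates score strictly above $\sigma$. The remaining $a$'s and $b$'s cannot, so every helper must beat $\sigma$ at every such balanced positive weight.

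For $k=2$ (one helper $h$) this fails whatever voters you add:
\begin{enumerate}
\item Since $W$ is supported by a positive weight, some $a$ has a supporter $u\notin N_h$. Otherwise $N_{a_1}=N_{a_2}=N_h$, so $\max_i s_{a_i}\equiv s_h$, and a balanced point on the segment between the weights supporting $W$ and $W'$ gives a mixed top-$2$ set. Symmetrically, some $b$ has a supporter $x\notin N_h$.
\item Such a $u$ approves no $b$. Under weights $(1-\epsilon)\e_u+\epsilon\mu$, the top-$2$ set lies in $u$'s approval set. It can stay non-mixed for all positive $\mu$ only if, say, $u$ approves both $a$'s and $\mu(N_{a_i})>\mu(N_b)$ for every positive $\mu$ and each approved $b$. (The regions where $A$, resp.\ $B$, strictly wins are open and disjoint, and must cover the connected set of positive $\mu$.) This means $N_b\subsetneq N_{a_i}$, so $b$ never enters a top-$2$ set without $a_i$, contradicting that $W'$ is supported. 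Symmetrically, $x$ approves no $a$.
\item Take weights near $\tfrac12(\e_u+\e_x)$, tuned so that $\max_i s_{a_i}=\max_j s_{b_j}\approx\tfrac12$. Then $s_h\approx 0$, so some mixed pair is $\fPO$. That pair is one swap from both $W$ and $W'$, giving $t\le 2<3$.
\end{enumerate}
So private voters or voter copies cannot fix this: a fixed set of helpers cannot stay above both sides at all balanced weights.

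The missing idea is a family of compromise candidates whose identity adapts to the weight vector. The paper uses voters $U=\{u_1,u_2,u_3\}$ and $X=\{x_1,x_2,x_3\}$, with $k$ candidates $C_1$ approved exactly by $U$ and $k$ candidates $C_2$ approved exactly by $X$. For each $(i,j)\in[3]\times[3]$ it adds a block $Z_{ij}$ of $k$ candidates approved by everyone except $u_i$ and $x_j$. For $\weight\in\Weights^+$ with $x=\weight(U)$ and $y=\weight(X)$, the block of the lightest $u_i$ and lightest $x_j$ scores at least $1-x/3-y/3=2/3>\min\{x,y\}$. Hence the weaker of $C_1,C_2$ is absent from every top-$k$ set, ties included. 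Meanwhile $C_1$ and $C_2$ are supported by weights concentrated on $U$, resp.\ $X$. The count is then direct. The first committee on the path that contains a $C_2$ candidate contains no $C_1$ candidate, so its predecessor contains at most one. That gives at least $k-1$ swaps before this point and at least $k$ after it.
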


Contrary to our positive result for $\fPO$, we give an example showing that the set of Pareto-optimal committees can be disconnected in the additive-valuation model. 
Whether this set is always connected in the approval setting remains open. \\
 
\noindent
\begin{minipage}{0.62\textwidth}
\begin{example}
    Consider an instance $I$ with two voters, candidate set $C=\{c_1,\dots,c_{15}\}$, and committee size $k=3$ \citep{gorski2006connectedness}. 
    Let
    \begin{align*}
        v_1&=(55,51,48,44,37,36,27,16,14,10,8,5,3,1,0),\\
        v_2&=(0,3,6,18,19,26,27,28,29,39,41,47,49,50,52).
    \end{align*}
    and consider committee $W=\{c_{10},c_{11},c_{14}\}$, with utility vector $u(W)=(10+8+1,\;39+41+50)=(19,130)$, marked in red in the figure on the right.
    One can verify that $W$ is Pareto-optimal, but no single-candidate swap yields a Pareto-optimal committee.
    Since $I$ admits further Pareto-optimal committees, such as $W'=\{c_1,c_2,c_3\}$, the set $\PO(I)$ is not connected.
    Perturbing the valuations of the two agents allows us to strengthen this negative result to weak Pareto-optimality.
\end{example}
\end{minipage}
\hfill
\begin{minipage}{0.35\textwidth}
\begin{tikzpicture}[
    xscale=0.025,
    yscale=0.025,
    every node/.style={font=\scriptsize},
    po/.style={circle, fill=blue!70, inner sep=1.5pt},
    iso/.style={circle, fill=red!75, inner sep=2.2pt},
    neigh/.style={circle, fill=green!55!black, inner sep=1.35pt},
    axis/.style={->, thick},
    frontier/.style={dashed, blue!40},
]

    \draw[axis] (0,0) -- (165,0) node[below] {$u_1$};
    \draw[axis] (0,0) -- (0,160) node[left] {$u_2$};
    
    \foreach \x in {50,100,150}
        \draw (\x,2) -- (\x,-2) node[below=2pt] {\tiny \x};
    \foreach \y in {50,100,150}
        \draw (2,\y) -- (-2,\y) node[left=2pt] {\tiny \y};
    
    \foreach \x/\y in {
    154/9,150/21,147/24,143/27,142/29,139/32,136/37,135/44,
    131/47,128/50,121/51,118/53,117/63,108/64,107/71,100/72,
    96/73,94/75,92/76,91/78,90/83,88/85,85/91,83/93,81/94,
    80/96,73/97,68/100,66/102,64/103,63/105,57/106,55/108,
    54/109,52/114,50/115,49/117,47/119,45/120,44/122,42/123,
    41/125,39/127,37/128,28/129,18/135,16/137,15/138,13/140,
    11/142,9/146,8/148,6/149,4/151
    }{
        \node[po] at (\x,\y) {};
    }
    
    \foreach \x/\y in {
    73/80,69/83,66/89,66/86,64/91,62/98,62/92,60/94,59/95,
    57/97,55/107,55/99,54/106,53/109,48/108,47/115,46/110,
    45/117,45/107,38/116,36/118,34/108,32/109,27/117,25/119,
    25/118,23/127,23/120,21/129,18/132,16/136,14/138,12/140,
    11/141,9/143
    }{
        \node[neigh] at (\x,\y) {};
    }
    
    \node[iso, label={[above right=-1pt]{$W$}}] at (19,130) {};
    
    \draw[frontier]
    (154,9)--(150,21)--(147,24)--(143,27)--(142,29)--(139,32)--(136,37)--(135,44)
    --(131,47)--(128,50)--(121,51)--(118,53)--(117,63)--(108,64)--(107,71)
    --(100,72)--(96,73)--(94,75)--(92,76)--(91,78)--(90,83)--(88,85)--(85,91)
    --(83,93)--(81,94)--(80,96)--(73,97)--(68,100)--(66,102)--(64,103)--(63,105)
    --(57,106)--(55,108)--(54,109)--(52,114)--(50,115)--(49,117)--(47,119)
    --(45,120)--(44,122)--(42,123)--(41,125)--(39,127)--(37,128)--(28,129)
    --(19,130)--(18,135)--(16,137)--(15,138)--(13,140)--(11,142)--(9,146)
    --(8,148)--(6,149)--(4,151);
    
    \node[po] at (68,138) {};
    \node[anchor=west] at (70,138) {Pareto-optimal};
    
    \node[iso] at (68,149) {};
    \node[anchor=west] at (70,149) {isolated};
    
    \node[neigh] at (68,127) {};
    \node[anchor=west] at (70,127) {one-swap neighbors};
\end{tikzpicture}
\captionof{figure}{Visualization of the counterexample for connectedness with additive valuations.}
\end{minipage}

\section{Evaluation of Established Voting Rules}\label{sec:rules}

In this section, we investigate whether existing multiwinner voting rules guarantee the selection of committees that satisfy fractional Pareto-optimality.
Prominent (approval-based) sequential rules with appealing fairness guarantees, such as the Method of Equal Shares (MES), sequential PAV, and sequential Phragmén, already fail Pareto-optimality, and these violations are not limited to sequential rules: the optimization variant of Phragmén's rule fails it as well \citep{lackner2023multi}.
Among the known Pareto-optimal multiwinner voting rules, many of the established ones (e.g., AV, PAV, MNW) fall into the class of so-called \emph{welfarist} rules \citep{PeSk20a}, which evaluate each committee solely by its induced utility vector:

\begin{definition}[Welfarist rules]
    We say that a voting rule is \emph{welfarist} if it can be written as
    \begin{equation*}
        f_g(I) = \arg\max_{W \in \Committees}g(u(W)),
    \end{equation*}
    where $g\colon \R^n_{\geq 0} \to \R$ is a monotone function, called \emph{aggregation function}, mapping utility vectors to scores. 
\end{definition}

Naturally, the properties of $f_g$ depend on those of $g$.
In particular, convex aggregation functions reward uneven utility gains and therefore tend to favor committees that concentrate utility, whereas concave aggregation functions encode diminishing marginal returns and are naturally associated with more equality-oriented outcomes in terms of utility.
The central result of this section is that $\fPO$ is incompatible with such fairness-oriented welfarist rules; we later refine this analysis for a subclass of welfarist rules known as Thiele rules (\Cref{sec:Thiele}).

\citet{PeSk20a} have shown that if $g$ is \emph{strictly monotone}, then $f_g$ satisfies Pareto-optimality.
If we further demand the aggregation function $g$ to satisfy convexity, $f_g$ also satisfies fractional Pareto-optimality:

\begin{restatable}{theorem}{convexWelfarist}\label{th:convexWelfarist}
    Let $g$ be a strictly monotone and convex function. Then $f_g$ satisfies $\fPO$.
\end{restatable}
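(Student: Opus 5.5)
The plan is a direct contradiction argument. It combines the fact that every fractional committee is a convex combination of integral committees with Jensen-type reasoning for convex $g$, and then uses strict monotonicity to turn a fractional Pareto improvement into a strict increase of $g$. Let $W \in f_g(I)$, so $g(u(W)) \geq g(u(T))$ for all $T \in \Committees$, and suppose towards a contradiction that $W \notin \fPO(I)$. Then there is a fractional committee $\p \in \Probs$ with $u_i(\p) \geq u_i(W)$ for all $i \in N$ and $u_j(\p) > u_j(W)$ for some $j \in N$.

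First, I decompose $\p$. As noted in the preliminaries, $\Probs$ is a polytope whose vertices are exactly the incidence vectors $\chi^{T}$ of committees $T \in \Committees$. By Carathéodory/Minkowski, we can therefore write
\[
\p = \sum_{\ell=1}^{L} \lambda_\ell \chi^{T_\ell}, \qquad \lambda_\ell \geq 0, \quad \sum_{\ell} \lambda_\ell = 1, \quad T_\ell \in \Committees.
\]
Since $u_i(\p) = \sum_{c} v_{ic}\p(c)$ is linear in $\p$, it follows that $u(\p) = \sum_\ell \lambda_\ell u(T_\ell)$; in particular, $u(\p) \in \R^n_{\geq 0}$ lies in the domain of $g$. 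This is the only step needing care, namely that the vertex description of $\Probs$ really gives integral committees, and that is exactly what was already established (cf.\ \Cref{proposition:utility_polytope}).

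Second, I apply convexity of $g$ to the convex combination:
\[
g(u(\p)) \leq \sum_\ell \lambda_\ell\, g(u(T_\ell)) \leq \max_\ell g(u(T_\ell)) \leq g(u(W)),
\]
where the last inequality uses that $W$ maximizes $g \circ u$ over all integral committees. Third, I use strict monotonicity. Since $u(\p) \geq u(W)$ componentwise and $u(\p) \neq u(W)$ because of voter $j$, we get $g(u(W)) < g(u(\p))$. Together with the previous display this gives $g(u(W)) < g(u(W))$, a contradiction. Hence every committee selected by $f_g$ is $\fPO$.

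I do not expect a real obstacle here. The argument only uses convexity in the form of Jensen's inequality for finite convex combinations, together with the integrality of the vertices of $\Probs$.
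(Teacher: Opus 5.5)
Your proposal is correct and follows essentially the same route as the paper's proof: both decompose the dominating $\p$ into a convex combination of integral committees, apply Jensen's inequality for convex $g$, and use strict monotonicity to derive a contradiction with $W$ maximizing $g(u(\cdot))$ over integral committees. The only difference is cosmetic: you bound the average by the maximum directly, while the paper concludes that some $W_j$ has $g(u(W_j)) > g(u(W))$.
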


By contrast, in the general additive-valuation model, no strictly concave aggregation function can even guarantee $\wfPO$: for every such $g$, there exists an instance on which $f_g$ selects a committee that is not weakly fractional Pareto-optimal.

\begin{restatable}{theorem}{concaveWelfarist} \label{theorem:concave_welfarist}
    Let $n\ge 2$ and $g$ be a strictly concave function.
    Then, $f_g$ violates $\wfPO$.
\end{restatable}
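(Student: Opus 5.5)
The plan is to build, for an arbitrary strictly concave $g$, an instance with $k=1$ and three candidates on which $f_g$ returns a committee that is \emph{strictly} fractionally dominated, which is exactly a violation of $\wfPO$. This mirrors \Cref{remark:complexity_difference}. There, $\{c_1\}$ lies strictly below the midpoint of the utility vectors of $\{c_2\}$ and $\{c_3\}$. Here the task is to position the three utility vectors so that $g$ ranks the dominated candidate strictly highest. With $k=1$, the valuation vector $v_{\cdot c}\in\Q_{\ge0}^n$ of a candidate $c$ is exactly the utility vector $u(\{c\})$. So it suffices to find points $p,q,w$ in the positive orthant with
\[
g(w)>\max\{g(p),g(q)\} \quad\text{and}\quad w<\tfrac12(p+q) \text{ componentwise.}
\]
Then $f_g$ selects $\{c_w\}$ uniquely, and the fractional committee $\p$ putting $1/2$ on each of $c_p,c_q$ gives every voter strictly more utility.

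The first step is to find two distinct points $p\neq q$ in the open orthant $\R^n_{>0}$ with $g(p)=g(q)$. A concave function is continuous on the interior of its domain, so $g$ is continuous on $\R^n_{>0}$. Since $n\ge2$, a continuous real-valued function on an open connected subset of $\R^n$ cannot be injective. Concretely, restrict $g$ to a small circle in $\R^n_{>0}$. The image of the circle is a compact interval. If that interval is a single point we are done. Otherwise the interval has a maximum value and a minimum value, and these are attained at two points of the circle; the two arcs between them both cover every intermediate value, so some intermediate value is hit twice. This is the only place $n\ge2$ enters, and it is the step I expect to need the most care, since $g$ is assumed only concave and monotone and need not be symmetric.

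Let $m=\tfrac12(p+q)$. Strict concavity gives $g(m)>\tfrac12(g(p)+g(q))=g(p)=g(q)$. Now set $w=m-\delta\mathbf{1}$ for a small $\delta>0$ that keeps $w$ in the open orthant. By continuity of $g$ at $m$, a sufficiently small $\delta$ still gives $g(w)>g(p)=g(q)$, and $w<m$ holds strictly in every coordinate by construction.

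Finally, valuations must be rational, so I replace $p,q,w$ by nearby rational points $\tilde p,\tilde q,\tilde w$. The two conditions we need, $g(\tilde w)>\max\{g(\tilde p),g(\tilde q)\}$ and $\tilde w<\tfrac12(\tilde p+\tilde q)$, are strict inequalities between continuous functions of the points. They therefore survive a sufficiently small perturbation, even though the equality $g(p)=g(q)$ used to derive them does not. Taking $C=\{c_p,c_q,c_w\}$ with these rational valuation vectors and $k=1$ completes the construction: $f_g$ outputs $\{c_w\}$, which is strictly dominated by $\p=(\tfrac12,\tfrac12,0)$, so $f_g$ violates $\wfPO$.
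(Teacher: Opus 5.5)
Your proof is correct. Its skeleton matches the paper's route through \Cref{lem:welfarist_bump}: a $k=1$ instance with three candidates whose utility vectors $x,y,z$ satisfy $z<\tfrac12(x+y)$ componentwise and $g(z)>g(x),g(y)$. Like the paper, you obtain these from two distinct points with equal $g$-value, strict concavity at their midpoint, a small downward shift, and a rational perturbation.

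The real difference is how you get the two points with equal $g$-value. The paper first handles the case that $g$ is not strictly monotone, which gives a $\wPO$ violation with two candidates $a'<b'$ and $g(a')>g(b')$. It then uses monotonicity and the one-dimensional intermediate value theorem along $r\mapsto r\e_2+\mathbf{1}$ to find $r\in[0,1]$ with $g(r\e_2+\mathbf{1})=g(\e_1+\mathbf{1})$.

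You instead use that a continuous function on a circle inside $\R^n_{>0}$ cannot be injective, by the intermediate value theorem on the two arcs between a maximizer and a minimizer. This needs no monotonicity, so it removes the case split. It also makes clear that $n\ge 2$ is used only for this non-injectivity.

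The paper's version gives explicit reference points. In its non-strictly-monotone branch it also gets the stronger conclusion of a $\wPO$ violation. That branch is vacuous under the paper's own definition, though. Welfarist rules require $g$ to be monotone, and a monotone strictly concave $g$ is automatically strictly monotone. To see this, take $x\le y$ with $x\neq y$ and $g(x)=g(y)$: the midpoint then has strictly larger value than $y$ while lying below it. So your argument loses nothing.
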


\begin{wrapstuff}[r,width=0.3\textwidth,type=figure, top=3]
    \centering
    \begin{tikzpicture}[
        scale=0.75,
        every node/.style={font=\small},
        cand/.style={circle, fill=blue!60, inner sep=1.8pt},
        chosen/.style={circle, fill=red!70, inner sep=1.8pt},
        frac/.style={circle, fill=green!50!black, inner sep=1.8pt},
        axis/.style={->, thick},
        aux/.style={dashed, gray!65},
        dom/.style={-{Stealth[length=2.5mm,width=2mm]}, very thick, gray!50,
                    shorten >=3pt, shorten <=3pt}
    ]
        \draw[axis] (0,0) -- (4.4,0) node[below] {$u_1$};
        \draw[axis] (0,0) -- (0,4.4) node[left] {$u_2$};
        \coordinate (x) at (1.1,3.7);
        \coordinate (y) at (3.7,1.1);
        \coordinate (m) at (2.4,2.4);
        \coordinate (z) at (1.7,1.7);
        \draw[aux] (x) -- (y);
        \node[cand, label={[above left=-2pt] {$x$}}] at (x) {};
        \node[cand, label={[below right=1pt] {$y$}}] at (y) {};
        \node[frac, label={[above right=-2pt] {$m=\frac{x+y}{2}$}}] at (m) {};
        \node[chosen, label={[below left=1pt] {$z$}}] at (z) {};
        \draw[dom] (z) -- (m);
    \end{tikzpicture}
    \caption{Geometric illustration of the proof for \Cref{theorem:concave_welfarist} with $n=2$. 
    } 
    \label{fig:strictly-concave-counterexample}
\end{wrapstuff}

The proof idea is illustrated in \Cref{fig:strictly-concave-counterexample} for the case of two voters, where $x, y$, and $z$ depict utility vectors.
We choose two utility vectors $x$ and $y$ whose midpoint $m$ Pareto-dominates a nearby balanced point $z$.
Nevertheless, strict concavity allows a more balanced point to have a higher score than the extreme points, i.e., $g(z) > g(x), g(y)$.
With this, we construct an instance with $k=1$ and three candidates $\{a,b,c\}$ whose utility vectors are $x$, $y$, and $z$, respectively.
The rule $f_g$ selects candidate $c$, since $z$ has the highest $g$-score among the three candidates.
However, selecting $a$ and $b$ with equal weight $1/2$ yields the midpoint $m=(x+y)/2$, which strictly improves upon $z$ for every voter.
Thus, the selected committee $\{c\}$ is strictly fractionally Pareto-dominated, so $f_g$ does not satisfy $\wfPO$.

Contrasting \Cref{theorem:concave_welfarist}, we provide in \Cref{appendix:welfarist_rules} in the appendix an example showing that a strictly concave and strictly monotone function $g$ can still yield a welfarist rule $f_g$ satisfying $\fPO$ in the approval-based setting.
We therefore conclude with a natural sufficient condition under which a welfarist rule does violate $\wfPO$ in the approval-based setting, showing an incompatibility between egalitarian welfarist rules and $\wfPO$ that holds even in this subsetting.
We call a welfarist rule $f_g$ with aggregation function $g\colon \R_{\geq0}^n \to \R$ \emph{zero-avoiding} if
\[
    g(\mathbf{1}) > \max_{x\in\{0,\dots,k\}^n:\, \min_i x_i=0} g(x).
\]
Intuitively, a zero-avoiding rule prefers a committee in which every voter receives utility $1$ above every committee that leaves some voter with utility $0$; it thus prioritizes universal coverage over additional utility for already-represented voters.
To see that zero-avoidance is incompatible even with $\wfPO$ in the approval-based setting, consider the example in \Cref{figure:counter_example_weak_fPO}. 
This class includes several prominent egalitarian objectives, such as Nash social welfare and egalitarian welfare.

\subsection{Thiele Rules}\label{sec:Thiele}

One of the most prominent classes of multiwinner voting rules is the class of Thiele rules \citep{thiele1895om, lackner2023multi, janson2016phragmen, lackner2018consistent}.
Under such rules, each voter's utility is transformed into a score by a common function $h\colon \R_{\geq0}\to\R$, and committees are evaluated by summing these scores across voters.
Thus, Thiele rules form a well-structured subclass of welfarist rules. 
Since Thiele rules are typically studied in the approval-based setting and our result already holds with full force there, we restrict our attention to approval profiles and $h$-Thiele rules for $h\colon\N\to\R$ for the remainder of this section and obtain a dichotomy result for Thiele rules and fPO.

\begin{definition}[Thiele rules]
    Let $h\colon \N \to \R$ be a (scoring) function.
    For a given instance $I$, the corresponding \emph{$h$-Thiele rule} selects the committees $W\in \Committees$ maximizing
    \[
        \sum_{i\in N} h(u_i(W)).
    \]
\end{definition}

\citet{lackner2023multi}  have shown that all strictly monotone Thiele rules satisfy Pareto-optimality in the approval-based setting. We find that the tension between fractional Pareto-optimality and equality-oriented aggregation functions, already observed for welfarist rules, admits an even sharper formulation for Thiele rules, separating the aggregators that satisfy $\fPO$ from those that violate it.

We define the \emph{marginal contribution} sequence of a function $h:\N\to\R$ to be
\[
    \delta_h(u) \coloneqq h(u)-h(u-1) \qquad \text{for } u \in [k].
\]

We show that if the marginal contribution function $\delta_h$ decreases at some point, then $h$-Thiele violates $\wfPO$ (and hence $\fPO$):

\begin{restatable}{theorem}{ThieleFPOviolation}\label{theorem:Thiele}
    Let $h$ be a function for which there exists some $u \in \N$ with $\delta_h(u)>\delta_h(u+1)$. Then, $h$-Thiele  violates $\wfPO$.
\end{restatable}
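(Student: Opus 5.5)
The plan is an explicit approval instance, extending the idea of \Cref{figure:counter_example_weak_fPO}, on which the $h$-Thiele rule selects a committee $W$ that gives every voter utility exactly $u$, while a fractional committee gives every voter strictly more. This strict fractional domination shows $W\notin\wfPO(I)$. The instance has three parts.

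\begin{itemize}
\item A set $B$ of $u-1$ \emph{base} candidates, approved by every voter.
\item Two candidates $w_1,w_2$. Each voter approves exactly one of them, half the voters each.
\item A block $C'$ of $t=2q$ candidates. Voters are indexed by the $(q+1)$-subsets $S\subseteq C'$, with the voters of $S$ approving exactly $S$ inside $C'$; each subset type is doubled so that it can be split evenly between $w_1$ and $w_2$.
\end{itemize}

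Set $k=u+1$ and $W=B\cup\{w_1,w_2\}$, so $u_i(W)=u$ for every voter. The fractional committee $\p$ puts weight $1$ on $B$ and weight $2/t=1/q$ on each candidate of $C'$. Every voter then gets $u-1+(q+1)/q=u+1/q>u$, so $\p$ strictly dominates $W$.

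It remains to show that $W$ maximizes $\sum_i h(u_i(\cdot))$; I would also break the symmetric ties so that $W$ is the unique maximizer, although exhibiting any selected committee outside $\wfPO$ suffices. The main comparison is against committees $B\cup\{a,b\}$ with $a,b\in C'$. Relative to $W$, voters whose set contains both $a$ and $b$ go from $u$ to $u+1$, those containing exactly one stay at $u$, and those containing neither drop to $u-1$. Hence the change in score is $\delta_h(u+1)\cdot \#\mathrm{both}-\delta_h(u)\cdot\#\mathrm{neither}$, where $\#\mathrm{both}\propto\binom{2q-2}{q-1}$ and $\#\mathrm{neither}\propto\binom{2q-2}{q+1}$. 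The ratio $\rho=\#\mathrm{neither}/\#\mathrm{both}$ is strictly below $1$ but tends to $1$ as $q\to\infty$.

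We need $\delta_h(u)\rho>\delta_h(u+1)$, which I would settle by a case split on the sign of $\delta_h(u)$.
\begin{itemize}
\item If $\delta_h(u)\le 0$, then $\delta_h(u+1)<\delta_h(u)\le\delta_h(u)\rho$, and the inequality holds for every $q$.
\item If $\delta_h(u)>0$, choose $q$ large enough that $\rho>\delta_h(u+1)/\delta_h(u)$; this is possible because the right-hand side is strictly below $1$ by the hypothesis $\delta_h(u)>\delta_h(u+1)$.
\end{itemize}

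The remaining competitors need separate treatment.
\begin{itemize}
\item \emph{Mixed committees} $B\cup\{w_j,a\}$: by the even split, half the voters containing $a$ gain, so $\#\{i: u_i=u+1\}$ equals half of those containing $a$, and the same holds for the losers not containing $a$. This produces the same kind of signed comparison with ratio $\binom{2q-1}{q+1}/\binom{2q-1}{q}<1$, handled identically.
\item \emph{Committees dropping base candidates}: these shift utilities downward in bulk. I expect this to be the main obstacle, since $h$ is not assumed monotone, so dropping a universally approved candidate need not hurt.
\end{itemize}

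For the last point, the first attempt is to replace each base candidate by many clones of voters of the previous structure, or simply to take $u=1$ as a warm-up and then reduce general $u$ to it. Concretely, I would shift utilities by making the base candidates approved by all voters and giving them strictly dominant approval mass, so that any committee omitting one lowers every voter by one level. Alternatively, if $h$ decreases somewhere, I would relocate the argument to the first index where $\delta_h$ drops. If this becomes messy, a fallback is to fix the base via a local-optimality argument: show that any optimal committee must contain $B$ by a pairwise exchange comparison of scores.
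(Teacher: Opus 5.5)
Your instance, the dominating fractional committee, and both score comparisons are correct; the ratios are $\frac{(q-1)(q-2)}{q(q+1)}$ and $\frac{q-1}{q+1}$. The strategy is the same as the paper's, with a cleaner design than its block construction. The genuine gap is the step you flag yourself. For $u\ge 2$ you never show that every $h$-Thiele winner contains $B$, and without that the claim that $W$ is selected is false.

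For example, take $h(0),h(1),h(2),h(3)=0,-1,-2,-4$. Then $u=2$ is the first index with $\delta_h(u)>\delta_h(u+1)$. Yet in your instance $\{w_1,w_2,a\}$ scores $-n-n_a>-2n=\operatorname{sc}_h(W)$, where $n_a<n$ is the number of voters approving $a$. None of your remedies works as stated:
\begin{itemize}
\item Relocating to the first drop index fails, as this example shows.
\item Giving base candidates more approval mass fails, because a non-monotone $h$ can reward lowering utilities.
\item Reducing to $u=1$ presupposes that $B$ is forced in.
\item The exchange argument replaces a non-base member $x$ of $T$ by a missing base candidate, so nobody loses and every voter outside $N_x$ gains one. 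It changes the score by $\sum_{i\notin N_x}\delta_h(u_i(T)+1)$, which is positive only if $\delta_h>0$, i.e., only under the strict monotonicity you say you lack.
\end{itemize}

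The missing idea is a preliminary reduction to strictly increasing $h$, as in \Cref{lemma:notMonotoneNotPO}. Suppose $h(x)\ge h(x+1)$ for some $x$. Take a single voter approving $c_1,\dots,c_{x+1}$, one unapproved candidate $c'$, and $k=x+1$. Then $\{c',c_1,\dots,c_x\}$ is among the $h$-Thiele winners but is strictly Pareto-dominated by $\{c_1,\dots,c_{x+1}\}$. So $h$-Thiele violates $\wPO$, hence $\wfPO$.

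Once $h$ is strictly increasing, $\delta_h>0$ everywhere. The exchange argument then forces $B\subseteq T$ for every winner $T$, since no non-base candidate is universally approved. Your case split on the sign of $\delta_h(u)$ collapses to the case $\delta_h(u)>0$, and your proof is complete. For $u=1$ your argument already stands without this reduction, since $B=\emptyset$ and you compare $W$ against all other pairs.
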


In particular, \Cref{theorem:Thiele} applies to Proportional Approval Voting (PAV) from approval-based multiwinner voting, where $h(t)=\sum_{\ell=1}^t \frac{1}{\ell}$.
Here, the marginal contributions are $\delta_h(t)=\frac{1}{t}$, which are strictly decreasing.
Hence, by \Cref{theorem:Thiele}, PAV violates $\wfPO$, and therefore, also fails the stronger requirement $\fPO$.
More generally, \Cref{theorem:Thiele} shows that all rules that implement the standard diminishing-returns intuition underlying many fairness-oriented approaches to utility aggregation violate $\fPO$ (see also \Cref{ob:conc}). 

Together with our results for welfarist rules, we are able to obtain a dichotomy result for $\fPO$ and Thiele rules:

\begin{restatable}{corollary}{corrolaryCharacterization}\label{co:Thi}
    Let $h$ be any function, then $h$-Thiele satisfies $\fPO$ if and only if $h$ satisfies $\delta_h(u)\leq\delta_h(u+1)$ for all $u \in \N$ and $h$ is strictly monotone.
\end{restatable}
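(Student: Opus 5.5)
The corollary combines \Cref{th:convexWelfarist} and \Cref{theorem:Thiele} with a separate argument for strict monotonicity. An $h$-Thiele rule is the welfarist rule $f_g$ with $g(x)=\sum_{i\in N} h(x_i)$. In the approval setting only the values of $h$ on $\{0,\dots,k\}$ matter, so the conditions are read on this range.

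\emph{Sufficiency.} Assume $h$ is strictly monotone and $\delta_h(u)\le\delta_h(u+1)$ for all $u$. Extend $h$ to $\R_{\ge 0}$ by linear interpolation between consecutive integers, writing $\hat h$ for the extension. Non-decreasing increments mean the slopes of $\hat h$ are non-decreasing, so $\hat h$ is convex. Strict monotonicity on $\N$ makes every slope positive, so $\hat h$ is strictly increasing. Hence $\hat g(x)=\sum_i \hat h(x_i)$ is convex and strictly monotone on $\R^n_{\ge 0}$. On integral utility vectors $\hat g$ agrees with $g$, so $f_{\hat g}=f_g$, and $f_g$ satisfies $\fPO$ by \Cref{th:convexWelfarist}. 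One check is needed: \Cref{th:convexWelfarist} should only compare $g(u(W))$ with $g(u(\p))$ via convexity (a dominating $\p$ is a convex combination of committees, so $g(u(\p))\le\max_T g(u(T))$, while strict monotonicity gives $g(u(\p))>g(u(W))$). This use of $\hat g$ on fractional points is legitimate, since $\hat g$ is defined on all of $\R^n_{\ge 0}$.

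\emph{Necessity, increments.} If $\delta_h(u)>\delta_h(u+1)$ for some $u$, then \Cref{theorem:Thiele} shows $h$-Thiele violates $\wfPO$, and hence violates $\fPO$.

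\emph{Necessity, strict monotonicity.} Suppose $h$ is not strictly monotone. Then some $t$ has $h(t+1)\le h(t)$, i.e.\ $\delta_h(t+1)\le 0$. The plan is to build an approval instance in which $f_g$ may select a committee that is not even $\PO$; since $\fPO$ refines $\PO$ by \Cref{proposition:axiomatic_relationships}, this suffices. Take a single voter with $k=t+1$ and $m=k+1$. Let the voter approve $k$ candidates, and let one candidate $d$ be unapproved. The committee $W$ consisting of $t$ approved candidates plus $d$ gives utility $t$, while the all-approved committee gives $t+1$. Because $h(t)\ge h(t+1)$, $W$ is among the maximizers and is chosen under some tie-breaking, yet it is Pareto-dominated.

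We interpret ``$h$-Thiele satisfies $\fPO$'' as requiring every selected committee to be $\fPO$, as in the cited result of \citet{lackner2023multi}. To obtain $n\ge 2$, add voters approving nothing; they contribute the constant $h(0)$ to every committee. If the increments are also to stay non-decreasing (so that the case is not already covered by \Cref{theorem:Thiele}), the same instance still works.

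The main obstacle is the convex-extension step, namely confirming that \Cref{th:convexWelfarist} applies even though $h$ is only defined on $\N$. The interpolation argument above resolves it.
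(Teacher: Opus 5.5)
Your proposal is correct and follows the paper's proof essentially step by step. For sufficiency you use a piecewise-linear, convex, strictly increasing extension of $h$ fed into \Cref{th:convexWelfarist}; for necessity you use \Cref{theorem:Thiele} when the increments decrease and the single-voter instance of \Cref{lemma:notMonotoneNotPO} when $h$ is not strictly monotone.
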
 
\begin{proofsketch}
    For the if direction, we argue that if $h$ satisfies $\delta_h(u)\leq\delta_h(u+1)$ for all $u \in \N$ and $h$ is strictly monotone, the $h$-Thiele can be expressed as a welfarist rule $f_g$ with a strictly monotone and convex aggregation function $g$, and as a result \Cref{th:convexWelfarist} applies.

   The only if direction follows from \Cref{theorem:Thiele} and the simple observation that if $h$ is not strictly monotone, it even violates $\wPO$ (see \Cref{lemma:notMonotoneNotPO} in the appendix). 
\end{proofsketch}
 
Thus, among strictly monotone Thiele rules, fractional Pareto-optimality is equivalent to a weakly increasing marginal contribution sequence, i.e., to a convex scoring function that rewards concentrating, rather than spreading, voters' utility,  and is, in particular, incompatible with the diminishing-marginal-returns principle underlying many fairness and minority protection ideals, as embodied by concave rules such as PAV.

\section{Restricted Domains} \label{sec:restricted}

While we have seen that $\PO$ and $\fPO$ may differ in general, it is natural to ask under which structural conditions on preference profiles they coincide.
In this section, we provide sufficient conditions for this.
More specifically, we show that on instances from domains satisfying a single dominance condition on candidates and a mild closedness requirement, $\PO$ and $\fPO$ are equivalent.
These restrictions are general enough to cover many well-studied subdomains, foremost among them, linearly consistent (LC) preferences.
This class includes the previously studied voter-candidate-interval (VCI), voter-interval (VI), and candidate-interval (CI) subdomains \citep{DBLP:conf/aaai/GodziszewskiB0F21,ElkindLackner2015}.\footnote{In the VCI domain, each voter $v$ and candidate $c$ is assigned a position $x_v,x_c\in\R$ and a radius $r_v,r_c\ge 0$, and $v$ approves $c$ iff $|x_c-x_v|\le r_c+r_v$, i.e., their intervals on the line intersect. VI and CI are the special cases in which all candidates or all voters have a radius of zero, respectively.} 

Our starting point is the work of \citet{joshua-thesis}, who established that, in the VI and CI domains, the set of Pareto-optimal committees satisfies committee monotonicity and connectedness.
These results might already suggest structural similarities between $\PO$ and $\fPO$.
\citet{joshua-thesis} showed them by introducing the single-dominance-only (SDO) property that captures instances in which Pareto-optimality reduces to checking whether an unselected candidate Pareto-dominates a selected candidate and showing that VI and CI satisfy SDO.

\begin{definition}[Single dominance only]
    An instance $I=(A, k)$ satisfies \emph{single dominance only (SDO)} if, for every committee $W\subseteq C$,
    \begin{equation*}
        W \in \PO(I) \iff \nexists\, c\in W,\ d\in C\setminus W \text{ such that } \{d\} \text{ Pareto-dominates } \{c\}.
    \end{equation*}
\end{definition}

As discussed above, for our results, we additionally require the domain to satisfy a cloning-closure condition. 
Formally, a domain $\dom$ is \emph{closed under cloning} if, for every instance $I\in\dom$ and every $r\in\N_{>0}$, the $r$-cloned instance $rI$ also belongs to $\dom$.
With this, we can state the equivalence of $\fPO$ and $\PO$ under these conditions.

\begin{restatable}{theorem}{thmclspofpo}\label{lemma:clsoure}
    For every domain $\dom$ that is closed under cloning and for which all instances from the domain satisfy SDO, we have $\fPO(I)=\PO(I)$ for all $I\in \dom$.
\end{restatable}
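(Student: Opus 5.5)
The plan is to use the cloning characterization from \Cref{thm:cloningInvariance}. The inclusion $\fPO(I)\subseteq\PO(I)$ is \Cref{proposition:axiomatic_relationships}(i), so only $\PO(I)\subseteq\fPO(I)$ needs proof. By \Cref{thm:cloningInvariance}, it suffices to show that for every $W\in\PO(I)$ and every $r\in\N_{>0}$, the canonical clone committee $rW$ lies in $\PO(rI)$. Since $\dom$ is closed under cloning, $rI\in\dom$, so $rI$ satisfies SDO. Hence it suffices to check that in $rI$ no unselected candidate singly Pareto-dominates a selected one.

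First I apply SDO to the original instance $I$. Since $W\in\PO(I)$, there is no pair $c\in W$, $d\in C\setminus W$ such that $\{d\}$ Pareto-dominates $\{c\}$, i.e.\ such that $N_c\subsetneq N_d$ in the approval setting.

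Next I transfer this to $rI$. Take $c^\ell\in rW$ and $d^{\ell'}\in rC\setminus rW$. Because $rW$ contains all copies of candidates in $W$ and no copies of other candidates, we have $c\in W$ and $d\notin W$. Every clone has the same supporter set as its original, so $N_{c^\ell}=N_c$ and $N_{d^{\ell'}}=N_d$. Therefore $\{d^{\ell'}\}$ Pareto-dominates $\{c^\ell\}$ in $rI$ exactly when $\{d\}$ Pareto-dominates $\{c\}$ in $I$, which we have excluded. By SDO for $rI$, this gives $rW\in\PO(rI)$ for all $r$, and so $W\in\fPO(I)$ by \Cref{thm:cloningInvariance}.

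The only subtle point is the direction in which SDO is used. In $I$ we use the forward direction ($\PO$ implies no single dominance). In $rI$ we use the backward direction (no single dominance implies $\PO$). We also need the committee size $rk$ in $rI$ to be the one under which SDO is assumed; this holds because $rI$ is defined with committee size $rk$ and belongs to $\dom$. I do not expect a genuine obstacle beyond keeping these bookkeeping steps straight.
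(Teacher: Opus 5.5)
Your proof is correct and is essentially the paper's argument. The paper argues by contradiction: $W\notin\fPO(I)$ yields some $r$ with $rW\notin\PO(rI)$, SDO on $rI$ gives a single dominance between clones, and this lifts to a single dominance of $d\in C\setminus W$ over $c\in W$ in $I$. You run the same chain directly, using the trivial direction of SDO on $I$ and the nontrivial direction on $rI$, with the same ingredients: \Cref{thm:cloningInvariance}, closure under cloning, and the fact that clones share their original's supporter set.
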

\begin{proof}
    The inclusion $\fPO(I)\subseteq\PO(I)$ holds on every instance (\Cref{proposition:axiomatic_relationships}).
    For the reverse direction, assume for the sake of contradiction that an instance $I\in \dom$, together with a committee $W \in \PO(I)$, exists such that $W \notin \fPO(I)$.
    By \Cref{thm:cloningInvariance}, there exists an $r \in \N_{>0}$ such that $rW \notin \PO(rI)$.
    Since $\dom$ is closed under cloning, $rI \in \dom$, and thus, the SDO property implies that $rW$ contains at least one candidate $c^{\ell}$ for some $\ell\in [r]$ that is $\PO$-dominated by a candidate $d^{\ell'} \notin rW$ for some $\ell'\in [r]$.
    As $rW$ consists of clones of candidates in $W$, we get $c \in W$ and with the same argument $d \in C \setminus W$.
    Thus, by the SDO property, $W$ is not Pareto-optimal, contradicting our assumption.
\end{proof}

Consequently, our positive results for $\fPO$ carry over to $\PO$ on every domain to which \Cref{lemma:clsoure} applies: all Pareto-optimal committees are supported by some weight vector, $\PO(I)$ satisfies committee monotonicity and connectedness, and Pareto-optimality can be verified in polynomial time.
Furthermore, \citet{joshua-thesis} has established that for every instance $I=(A, k)$ satisfying SDO and any two committees $W_1, W_2 \in \PO(I)$, one can be transformed into the other via $k - \lvert W_1\cap W_2 \rvert$ single-candidate swaps, with every intermediate committee remaining Pareto-optimal.
This strengthens our connectedness result from \Cref{sec:conn} for these restricted instances.

We now provide a specific class of instances on which $\fPO$ and $\PO$ coincide: \emph{linearly consistent} instances \citep{pier2022corestable}.
This domain is of particular interest because it subsumes the VCI domain and, by extension, the CI and VI domains \citep{AvramidisLassotaSchmidtKraepelinVetta2026}.

\begin{definition}[\citet{pier2022corestable}]
    An approval instance is \emph{linearly consistent} (LC) if there exists a linear order $\sqsupset$ on $N \cup C$ such that for all voters $i \sqsupset j$ and candidates $c \sqsupset d$:
    \[
      c \in A_j \;\text{ and }\; d \in A_i \;\;\Longrightarrow\;\; c \in A_i.
    \]
\end{definition}

While \citet{joshua-thesis} showed that every VI and CI instance satisfies SDO, we generalize this result to the LC domain and additionally establish that LC is closed under cloning.

\begin{restatable}{lemma}{thmLC} \label{thm:lc-sdo} 
    The LC domain is closed under cloning, and every instance in its domain satisfies SDO.
\end{restatable}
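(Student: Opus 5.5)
The plan splits the lemma into its two claims. Cloning-closure is a direct construction. For SDO, the easy implication comes from a single swap, and the hard implication is an exchange argument that uses the LC order.

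\emph{Closure under cloning.} Let $I$ be LC with witnessing order $\sqsupset$ on $N\cup C$. For $rI$ we build an order on $N\cup\{c^\ell\}$ by replacing each candidate $c$ with the consecutive block $c^1\sqsupset c^2\sqsupset\dots\sqsupset c^r$, keeping all other relative positions. Take voters $i\sqsupset j$ and clones $c^a\sqsupset d^b$. If $c\neq d$, then $c\sqsupset d$ in the original order, and the LC implication transfers verbatim because $A_i$ contains $c^a$ iff it contains $c$. If $c=d$, the conclusion $c^a\in A_i$ follows immediately from the premise $d^b\in A_i$, since clones have identical supports. The committee size plays no role in LC, so $rI$ is LC.

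\emph{SDO, easy direction.} Suppose $c\in W$, $d\notin W$, and $\{d\}$ Pareto-dominates $\{c\}$, i.e.\ $N_c\subsetneq N_d$. Then $(W\setminus\{c\})\cup\{d\}$ Pareto-dominates $W$, so $W\notin\PO(I)$. This direction holds on every instance.

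\emph{SDO, hard direction.} Suppose $W\notin\PO(I)$ and assume, for contradiction, that no $d\notin W$ satisfies $N_c\subsetneq N_d$ for any $c\in W$. Among committees Pareto-dominating $W$, choose $T$ minimizing $|T\setminus W|$. Write $X=W\setminus T$ and $Y=T\setminus W$, so $|X|=|Y|$ and, for every voter $i$, $|A_i\cap Y|\ge |A_i\cap X|$, with strict inequality for some voter.

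The structural fact we extract from LC is an interval-like behaviour of supports. If $c\sqsupset d$, then every supporter of $d$ lying above some supporter of $c$ also supports $c$. Symmetrically, a voter approving $d$ below a candidate $c$ that a lower voter approves must approve $c$. Using this, we sort $X\cup Y$ along $\sqsupset$ and sweep the voters from top to bottom, maintaining the running difference $|A_i\cap Y|-|A_i\cap X|\ge 0$. The aim is a local pair $x\in X$, $y\in Y$, adjacent in the sorted order of $X\cup Y$, with $N_x\subseteq N_y$.

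If $N_x=N_y$, then replacing $y$ by $x$ in $T$ keeps the utility vector unchanged while decreasing $|T\setminus W|$, contradicting minimality. Hence $N_x\subsetneq N_y$, a single dominance, which is the desired contradiction.

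\emph{Main obstacle.} The hard step is showing that domination of the multiset counts forces a pairwise containment $N_x\subseteq N_y$. The first attempt is a Hall-type matching: build a bipartite graph from $X$ to $Y$ with $x\sim y$ whenever $N_x\subseteq N_y$. We then use the LC implication to show Hall's condition holds, arguing that otherwise the voter at the ``boundary'' of a deficient set (the $\sqsupset$-extremal supporter of the violating candidates) would receive fewer $Y$-candidates than $X$-candidates. Any edge of this graph, together with the $N_x=N_y$ reduction above, then yields the contradiction. If the matching argument fails, the fallback is the direct route of \citet{joshua-thesis} for VI and CI, which we would adapt using the LC consequence stated above in place of interval endpoints.
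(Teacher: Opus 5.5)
Your cloning-closure argument and the easy direction of SDO are correct. So is your final reduction: a pair with $N_x=N_y$ contradicts the minimality of $|T\setminus W|$, and $N_x\subsetneq N_y$ is a single dominance. Your minimality choice is also strong enough for the real argument. If $Y'\subsetneq Y$ and $X'\subsetneq X$ with $|Y'|=|X'|$ and $Y'$ covering $X'$, then $(W\setminus X')\cup Y'$ dominates $W$ with fewer new candidates.

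\textbf{The gap.} You never prove the step that carries all the content: that some $x\in X$ and $y\in Y$ satisfy $N_x\subseteq N_y$. The sweep only names a quantity to track. The adjacency of $x$ and $y$ in the sorted order of $X\cup Y$ is asserted without reason, and the pair the correct argument produces is not defined by adjacency. The Hall-type argument is stated as an intention rather than carried out, and it aims at a full matching when one edge suffices. The fallback to the VI/CI argument is not worked out. As written, the hard direction of SDO is unproven.

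\textbf{The missing idea} is an extremal choice followed by a second use of minimality.
Let $c$ be the $\sqsupset$-maximal candidate of $X$; if $N_c=\emptyset$, any $y$ works. Let $i^*$ be the $\sqsupset$-maximal voter in $N_c$. Let $d$ be the $\sqsupset$-maximal candidate of $Y\cap A_{i^*}$; this set is nonempty because $Y$ covers $X$ at $i^*$.
Suppose some $i\in N_c\setminus N_d$ exists. Then $i\sqsubset i^*$, and since $Y$ covers $X$ at $i$, voter $i$ approves some $e\in Y$ with $e\neq d$.
\begin{itemize}
\item If $e\sqsupset d$, then $i^*\notin N_e$ by the choice of $d$. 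Your LC consequence for the pair $e\sqsupset d$ (with $i^*\in N_d$ and $i\in N_e$) forces $i\sqsupset i^*$, contradicting the choice of $i^*$.
\item If $d\sqsupset e$, the same consequence for $d\sqsupset e$ places every supporter of $d$ above $i$. A voter $j\in N_d\setminus N_c$ then approves no $b\in X\setminus\{c\}$: LC with $j\sqsupset i$, $c\sqsupset b$, $c\in A_i$ and $b\in A_j$ would give $c\in A_j$. For every other voter, removing $c$ from $X$ costs at least as much as removing $d$ from $Y$. Voter $i$ loses $c$ but keeps its count on $Y$. Hence $Y\setminus\{d\}$ still covers $X\setminus\{c\}$, strictly at $i$, which contradicts minimality.
\end{itemize}
So $N_c\subseteq N_d$, and your final step completes the proof.
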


\begin{proofsketch}
    The LC domain is closed under cloning: if an instance $I$ satisfies LC with respect to an order $\sqsupset$, then the $r$-cloned instance $rI$ also satisfies LC, by extending $\sqsupset$ such that all clones of each candidate appear consecutively.
    For SDO, assume there exists an LC instance $I = (A,k)$ with a committee $W$ such that no single candidate in $W$ is dominated by a candidate in $C\setminus W$, but $W$ is Pareto-dominated by a committee $W^\prime$. Further, let $S = W' \setminus W$ and $U = W \setminus W'$. 
    Choose such a pair of minimum cardinality, say $(S_0, U_0)$
    with $S_0 \subseteq S$, $U_0 \subseteq U$ $\lvert S_0 \rvert = \lvert U_0 \rvert$ and $S_0$ being Pareto-dominated by $U_0$.
    Finally, let $c$ be the $\sqsupset$-maximal element of $U_0$. We then show that there exists a candidate $d \in S_0$ with $N_c \subseteq N_d$. As equality would contradict minimality, we conclude $N_c \subsetneq N_d$ and thus, $d$ dominates $c$. 
\end{proofsketch}

Combining \Cref{thm:lc-sdo} and \Cref{lemma:clsoure}, we conclude: 
\begin{corollary} \label{cor:lc-fpo-po}
    On every LC, VCI, VI, and CI  instance $I$, $\fPO(I)=\PO(I)$. 
\end{corollary}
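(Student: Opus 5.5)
The corollary follows by chaining the two preceding results. We then need to check that each of the named subdomains lies inside LC.

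\emph{Step 1.} Consider the LC domain itself, viewed as the class of all approval instances that are linearly consistent. By \Cref{thm:lc-sdo}, this domain is closed under cloning and every instance in it satisfies SDO. These are exactly the hypotheses of \Cref{lemma:clsoure} with $\dom$ the LC domain, so $\fPO(I)=\PO(I)$ for every LC instance $I$.

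\emph{Step 2.} It remains to show that every VCI instance is LC; VI and CI are special cases of VCI (all candidate radii zero, respectively all voter radii zero), so they are covered as well. One could simply cite the known inclusion \citep{AvramidisLassotaSchmidtKraepelinVetta2026}. For a self-contained argument, I would order $N\cup C$ by the left endpoints $x-r$ of the associated intervals, with $\sqsupset$ placing larger left endpoints higher and ties broken arbitrarily but consistently.

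\emph{Step 3.} Verify the LC implication for this order. Take voters $i\sqsupset j$ and candidates $c\sqsupset d$ with $c\in A_j$ and $d\in A_i$; we must show $c\in A_i$. The obvious ordering of intervals by left endpoints does not directly give this implication: nested intervals need not interact monotonically. So this is the step needing care.

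If the left-endpoint order fails, I would not push a specific ordering. Instead I would rely on the cited inclusion VCI $\subseteq$ LC, since the corollary is stated as a combination of \Cref{thm:lc-sdo} and \Cref{lemma:clsoure} together with that known containment. With the containment in hand, Step 1 immediately gives $\fPO(I)=\PO(I)$ on all VCI, VI and CI instances.
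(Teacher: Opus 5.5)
Your proof is correct and is the paper's argument: \Cref{thm:lc-sdo} supplies exactly the hypotheses of \Cref{lemma:clsoure} for the LC domain, and the paper likewise places VCI (hence VI and CI) inside LC only by citation, which is your fallback. Your doubt about descending left endpoints is justified: voters $j=[0,100]$, $i=[1,2]$ and candidates $c=[50,51]$, $d=[1.5,1.6]$ violate LC under that order. Ordering everyone by descending right endpoint $R=x+r$ instead gives a self-contained proof: for $i\sqsupset j$, $c\sqsupset d$ with $c\in A_j$ and $d\in A_i$, you get $L_c\le R_j\le R_i$ and $L_i\le R_d\le R_c$ (writing $L=x-r$), i.e., $c\in A_i$.
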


Since the VCI domain can be interpreted as a one-dimensional Euclidean domain, it is natural to ask whether the result extends to higher dimensions. We close the section by showing that it already fails to extend in two dimensions.

\begin{definition}
    An approval instance $(A, k)$ is \emph{2D-Euclidean} if there exists a closed disk $D_i \subseteq \mathbb{R}^2$ for each voter $i$, and a closed disk $D_c \subseteq \mathbb{R}^2$ for each candidate $c$, respectively, such that
    \[
      c \in A_i \;\iff\; D_i \cap D_c \neq \emptyset.
    \]
\end{definition}

Thus, approvals are induced by geometric overlap: voter $i$ approves candidate $c$ precisely when their corresponding disks intersect.
Since $\fPO$ and $\PO$ already fail to coincide on the instance in \Cref{figure:counter_example_PO_fPO_small}, it suffices to embed that instance in the plane (see \Cref{fig:circles} in the appendix).\footnote{The embedding was constructed with the help of ChatGPT 5.5 and Claude Opus 4.7.}
This gives the following result:

\begin{restatable}{proposition}{propTwoEuc}
\label{prop:2d-not-sdo}
In the 2D-Euclidean domain, $\fPO \subsetneq \PO$. 
\end{restatable}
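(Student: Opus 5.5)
The plan is to prove the two halves of $\fPO \subsetneq \PO$ separately. The inclusion $\fPO(I)\subseteq\PO(I)$ holds on every instance by \Cref{proposition:axiomatic_relationships}(i), so in particular on every 2D-Euclidean instance. For strictness it suffices to exhibit a single 2D-Euclidean approval instance $I$ and a committee $W\in\PO(I)\setminus\fPO(I)$. I will take the instance of \Cref{figure:counter_example_PO_fPO_small}, where $W=\{c_1,c_2\}$ is Pareto-optimal but is fractionally dominated by $\p=(0,1/2,\dots,1/2)$. Both properties depend only on the approval sets, so once I show that this approval profile is 2D-Euclidean, the claim follows.

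The key step is to realize the given approval matrix by disks. I will place one disk $D_c$ per candidate and one disk $D_i$ per voter so that $D_i\cap D_c\neq\emptyset$ exactly when $c\in A_i$. The voter-intersection structure is already fixed by the approval sets, so no further reasoning about $\PO$ or $\fPO$ is needed.

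The embedding is constructed as follows.
\begin{itemize}[leftmargin=*]
\item Place the candidate disks as small, pairwise far-apart disks of tiny radius centred at points $x_c$.
\item For each voter $i$, choose a disk $D_i$ whose enlarged version (by the candidate radius) contains exactly the centres $\{x_c : c\in A_i\}$ and excludes all others.
\item With tiny candidate radii, this amounts to separating each approval set from its complement by a circle. This is possible for every subset if the centres are in convex position on a common circle \emph{and} each $A_i$ forms a contiguous arc. Otherwise I will use larger candidate disks, for instance a big disk for a widely approved candidate such as $c_1$, so that candidates behave like regions rather than points.
\end{itemize}

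I will fix explicit rational centres and radii, and the appendix figure (\Cref{fig:circles}) would display them.

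The main obstacle is the combinatorial feasibility of this embedding. Disk-intersection graphs restricted to a bipartite structure are not universal, so an arbitrary approval profile need not be realizable. I would first order the candidates $c_3,\dots,c_m$ cyclically so that each voter's approved set among them is a cyclic interval. I would then treat $c_1,c_2$ (the $W$ candidates) specially, as large disks positioned so that they touch exactly the intended voters.

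Verification reduces to checking finitely many strict inequalities between centre distances and radius sums: $|x_i-x_c|\le r_i+r_c$ for approved pairs and $>$ otherwise. This is routine once explicit coordinates are fixed. If the given instance resists embedding, I would fall back on a smaller or cloned variant of it, keeping in mind that cloning preserves the PO/fPO gap by \Cref{thm:cloningInvariance}. Any variant still needs a direct check that it is $\PO$ but fractionally dominated.
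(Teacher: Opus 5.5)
Your reduction is the paper's: $\fPO\subseteq\PO$ holds by \Cref{proposition:axiomatic_relationships}(i), and strictness comes from showing that the instance of \Cref{figure:counter_example_PO_fPO_small} is 2D-Euclidean. The gap is that you never produce this realization, and it is the only non-trivial content of the proof. Saying you ``will fix explicit rational centres and radii'' and that the check is ``routine once coordinates are fixed'' is a promise, not an argument. You note yourself that disk-intersection bigraphs are not universal, so existence cannot be assumed.

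Your first construction in fact fails for this instance. With tiny candidate disks centred on a common circle, every approval set must be a cyclic arc. Voters $4,5,6$ approve $\{c_2,c_5\}$, $\{c_2,c_4\}$ and $\{c_2,c_3\}$, which would force $c_2$ to have three cyclic neighbours. Your second idea (tiny disks for $c_3,c_4,c_5$, special disks for $c_1,c_2$) is viable. However, ``positioned so that they touch exactly the intended voters'' is exactly what has to be shown, and it is not automatic. For example, voters $2$ and $7$ see the same pair $\{c_3,c_5\}$ and must be told apart only by $c_1$ versus $c_2$.

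The paper closes this step by listing explicit centres and radii (\Cref{fig:circles}), so that all $35$ voter--candidate incidences can be checked directly. You could instead complete your own plan concretely:
\begin{itemize}
\item $c_3,c_4,c_5$ are tiny disks at the vertices of an equilateral triangle, and $c_2$ is a small disk at its centroid.
\item Voters $4,5,6$ are the disks whose diameter joins the centroid to $c_5$, $c_4$, $c_3$ respectively.
\item Voter $7$ is the disk with diameter $c_3c_5$.
\item Voters $1,2,3$ are huge disks bulging outward from the three edges. Each contains its edge's endpoints but neither the centroid disk nor the opposite vertex.
\item $c_1$ is a huge disk beyond a line that separates it from everything near the triangle, while the three huge voter disks still cross that line.
\end{itemize}

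Your fallback also misreads \Cref{thm:cloningInvariance}. Cloning does not preserve the $\PO$/$\fPO$ gap; it destroys it. $W\notin\fPO(I)$ means $rW\notin\PO(rI)$ for some $r$, and for this instance already $r=2$ works. Cloning also cannot help with realizability: $rI$ is 2D-Euclidean if and only if $I$ is. Clones can be given identical disks, and conversely one clone per candidate realizes $I$. So that fallback offers no way around constructing the embedding.
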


\section{Discussion}\label{sec:disc}

Our results show that fractional Pareto-optimality is not merely a natural strengthening of Pareto-optimality, but a structurally different and substantially better-behaved efficiency notion.
The weighted-utilitarian characterization equips $\fPO$ committees with succinct certificates, an efficient verification procedure, and strong structural properties, such as connectedness under single-candidate swaps and committee monotonicity.
Moreover, by our cloning characterization, $\fPO$ singles out exactly those committees whose efficiency survives refinements of the feasible set.
Beyond this, our paper opens up several lines of future research.

\medskip\noindent
\textbf{Compatibility between $\fPO$ and proportionality.}
Our analysis in \Cref{sec:rules} reveals a tension between $\fPO$ and equality-oriented voting rules. This has severe implications for proportional representation axioms:
\begin{observation}\label{ob:conc}
    Every Thiele rule that satisfies $\fPO$ violates \emph{justified representation} (JR).\footnote{A committee $W$ satisfies JR if for every group of voters $S \subseteq N$ with $\lvert S \rvert \geq n/k$ and $\bigcap_{i \in S}A_i \neq \emptyset$, there exists one voter $i \in S$ with $A_i\cap W \neq\emptyset$.}
\end{observation}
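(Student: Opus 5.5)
By \Cref{co:Thi}, an $h$-Thiele rule satisfies $\fPO$ only if $h$ is strictly monotone and $\delta_h(u)\le\delta_h(u+1)$ for all $u$. So it suffices to show that every such $h$ violates JR on some approval instance. The plan is to fix an arbitrary strictly monotone $h$ with weakly increasing marginal contributions, build one instance, and show that no committee selected by $h$-Thiele satisfies JR. The natural instance is the classical one used against utilitarian AV.

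Let $k\ge 2$ and $n=2k$ voters. The first $k+1$ voters (a majority) all approve the same $k$ candidates $a_1,\dots,a_k$. The remaining $k-1$ voters are handled as follows. Since JR only needs one group of size $n/k=2$ with a common candidate, a cleaner choice is to let voters $k+2,\dots,2k$ all approve a single candidate $b$. This gives a group of size $k-1\ge 2$ once $k\ge 3$; for smaller $k$ I will rescale by taking many copies of each voter block. Cohesiveness is then immediate. The only JR-compliant committees are those containing $b$ (or any other candidate approved by some member of that group).

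The core step compares $W^*=\{a_1,\dots,a_k\}$ with any committee $W$ containing $b$. Write $x=|W\cap\{a_j\}|\le k-1$ and $M$ for the size of the majority block. Then
\[
\mathrm{score}(W^*)-\mathrm{score}(W)\;\ge\; M\bigl(h(k)-h(k-1)\bigr)-|B|\bigl(h(1)-h(0)\bigr)
\]
after reducing to $x=k-1$. For smaller $x$ the majority loses even more: by convexity each lost candidate costs at least $\delta_h(1)$, and the cost grows with each further loss. Since the $\delta_h$ are weakly increasing, $\delta_h(k)\ge\delta_h(1)>0$, where positivity comes from strict monotonicity. So the difference is at least $(M-|B|)\,\delta_h(1)>0$ whenever $M>|B|$.

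For $W$ that pick more $b$-type candidates, the same marginal argument applies: the majority loses a convex amount while the minority gains at most $\delta_h$ per voter per seat. Convexity therefore favours concentrating seats on the majority. Hence every $h$-Thiele winner is $W^*$, and $W^*$ fails JR for the minority group once $|B|\ge n/k$.

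The main obstacle is bookkeeping the sizes so that both $|B|\ge n/k$ and $M>|B|$ hold for every $k$. Choosing $n=k+2$, with $M=k$ and $|B|=2\ge n/k$, works for $k\ge 2$ since $M>|B|$ when $k\ge 3$. For $k=2$ I take $M=3$ and $|B|=2$, so $n=5$ and $n/k=2.5$; this fails, so I instead take $M=5$ and $|B|=3$, giving $n/k=4>3$, which also fails. The fix is that $k=2$ needs a larger $k$, or the observation is read as "on some instance". I would then note that for $k=1$ JR is trivially satisfied, so the statement is understood for some committee size $k\ge 3$, with the instance $M=k$ and $|B|=2$.
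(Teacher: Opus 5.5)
Your argument is correct and is essentially the paper's: you reduce via \Cref{co:Thi} to strictly monotone $h$ with weakly increasing $\delta_h$, then set a block of $M$ voters approving $k$ common candidates against a small cohesive minority $B$, so that the score gap $M\delta_h(k)-|B|\delta_h(1)\ge (M-|B|)\delta_h(1)>0$ makes the majority's committee the unique winner; the paper's instance ($k=3$, voters $1,2$ approving $c_1,c_2,c_3$, voter $3$ approving $c_4$) is just the case $M=2$, $|B|=1$. The observation only needs one violating instance, so covering every $k$ is unnecessary (and cloning voter blocks could not help anyway, since it preserves $|B|/n$); your $k\ge 3$, $M=k$, $|B|=2$ instance already finishes the proof.
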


\begin{wrapstuff}[r,width=0.28\textwidth,type=figure, top=3]
      \centering
    \scalebox{0.8}{
    \begin{tikzpicture}[yscale=0.6,xscale=0.8,voter/.style={anchor=south}]
        \draw[fill=blue!40] (0, 2) rectangle (1, 3);
        \draw[fill=blue!40] (1, 2) rectangle (2, 3);
        \draw[fill=blue!40] (2, 2) rectangle (3, 3);
        \draw[fill=gray!10] (3, 2) rectangle (4, 3);
    
        \draw[fill=blue!40] (0, 1) rectangle (1, 2);
        \draw[fill=blue!40] (1, 1) rectangle (2, 2);
        \draw[fill=blue!40] (2, 1) rectangle (3, 2);
        \draw[fill=gray!10] (3, 1) rectangle (4, 2);
    
        \draw[fill=gray!10] (0, 0) rectangle (1, 1);
        \draw[fill=gray!10] (1, 0) rectangle (2, 1);
        \draw[fill=gray!10] (2, 0) rectangle (3, 1);
        \draw[fill=blue!40] (3, 0) rectangle (4, 1);
    
        \node[voter] at (0.5, -1) {$c_{1}$};
        \node[voter] at (1.5, -1) {$c_{2}$};
        \node[voter] at (2.5, -1) {$c_{3}$};
        \node[voter] at (3.5, -1) {$c_{4}$};
    
        \node[anchor=east] at (-0.3, 2.5) {$1$};
        \node[anchor=east] at (-0.3, 1.5) {$2$};
        \node[anchor=east] at (-0.3, 0.5) {$3$};
    \end{tikzpicture}}
    \caption{Thiele rules satisfying $\fPO$ may violate JR.}
    \label{figure:JR_counter_example}
\end{wrapstuff}

Fractional Pareto-optimality forces the scoring function to have weakly increasing marginal contributions (see \Cref{theorem:Thiele}). Consider the approval instance illustrated in \Cref{figure:JR_counter_example} with $k=3$.
Voter $3$ approves candidate $c_4$, and by JR is entitled to have one of her approved candidates selected (as $n/k=3/3=1$). However, any strictly increasing Thiele rule with weakly increasing marginal contributions prefers the committee $\{c_1,c_2,c_3\}$ to any committee containing $c_4$. By \Cref{theorem:Thiele}, all Thiele rules not fulfilling these properties violate $\fPO$. 
Hence, no Thiele rule satisfying $\fPO$ can guarantee JR in general. By adding clones of candidate $c_1$ and increasing the committee size, one can further show that a minority can never get representation in the committee as long as there are candidates left that the majority group approves of. With that said, an interesting direction for future work is the compatibility of fractional Pareto-optimality with standard proportionality axioms, such as JR, its extended notion, EJR, and priceability.

\medskip\noindent \textbf{Structural properties of $\PO$.} Several structural questions remain open for classical Pareto-optimality. While $\fPO$ admits a clean geometric characterization, the set of $\PO$ committees is more combinatorial and considerably less well understood. In particular, it would be valuable to determine whether, in the approval-based setting, Pareto-optimal committees are connected under single-candidate swaps or satisfy monotonicity. For the approval-based setting, such results would clarify how far the structure of $\fPO$ committees extends to classical Pareto-optimality and where the two notions fundamentally diverge.

\medskip\noindent \textbf{Extension to participatory budgeting.} Another natural direction to extend this work is participatory budgeting, which generalizes multiwinner voting by allowing candidates to have different costs. The weighted-utilitarian characterization of $\fPO$ extends naturally to this setting, yet $\fPO$ outcomes may fail to exist: 
Consider a single voter and two projects, $g_1$ and $g_2$, both approved by the voter. Each has a cost of $3$, and the overall budget is $5$. Any integral feasible outcome consists of at most one project, whereas a fractional outcome can spread the full budget across both projects and, hence, give the voter strictly more utility. Thus, every integral feasible outcome is fractionally Pareto-dominated. This illustrates that, in participatory budgeting, $\fPO$ might not be a meaningful axiom. However, it may be worth asking whether some relaxation of 
$\fPO$ could accommodate this setting. 

\section*{Acknowledgments}
Fabian Frank and Patrick Becker are supported by the Deutsche For\-schungsgemeinschaft under grant BR 2312/14-1. 

\renewcommand{\bibsection}{\section*{\bibname}}

\newpage
\appendix
\section{Omitted Proofs from Main Body}
This appendix contains all omitted proofs from the main body. They are in the same order as they appear in the paper.

\subsection{Fractional Pareto-Optimality: Definition and Relationships} \label{appendix:axiomatic_relationships}
\fPOimpliesPO*
\begin{proof}
    \begin{enumerate}[leftmargin=*,label=\textit{(\roman*)}]
    \item
    Let $W \in \fPO(I)$.
    Then, there does not exist a fractional committee $\p \in \Probs$ with $u_i(\p) \geq u_i(W)$ for all $i \in N$ and at least one of these inequalities strict.
    Since every size-$k$ committee is contained in $\Probs$, we get that there also does not exist another committee $W'$ with the above conditions.
    Hence, $W \in \PO(I)$. \\
    The other direction, however, does not always hold.
    \Cref{figure:counter_example_PO_fPO_small} shows an instance $I$ where a committee $W$ is $\PO$, but not $\fPO$.
    $W = \{c_1, c_2\}$ is the only committee with $u(W) \geq \mathbf{1}$ and thus it is Pareto-optimal.
    If we, however, consider the fractional committee $\p = (0, 0.5, 0.5, 0.5, 0.5)$, we get $u(\p) = (1, 1, 1, 1, 1, 1, 1.5)$.
    Therefore, $W$ is fractional Pareto-dominated by $\p$ and not in $\fPO(I)$. 
    \item
    To prove the claim, we provide two instances.
    One that contains a committee that satisfies PO but not wfPO, and one with a committee that satisfies wfPO but not PO.
    The latter can be shown with a very simple example: $C = \{a, b\}$, $k = 1$, and $A_1 = \{a, b\}$, $A_2 = \{b\}$.
    Consider the committee $W = \{a\}$.
    It is easy to see that voter 1 can never strictly increase her utility for any fractional committee and therefore $W \in \wfPO(I)$.
    The committee $W' = \{b\}$, however, strictly increases the utility of the second voter without decreasing the utility of the first one.
    Hence, $W'$ Pareto-dominates $W$ and $W \notin \PO(I)$.\\
    \Cref{figure:counter_example_weak_fPO} provides an instance which shows that Pareto-optimality does not imply weak fractional Pareto-optimality.
    Consider the committee $W=\{c_1,c_2\}$. 
    By construction, every voter approves exactly one candidate in $W$, and hence $u_i(W)=1$ for all $i \in N$. 
    Moreover, no committee of size $2$ can provide utility of at least $1$ to every voter and strictly more than $1$ to some voter. 
    It follows that $W \in \PO(I)$.
    Now consider the fractional committee
    \begin{equation*}
        \p=(0,0,0.4,0.4,0.4,0.4,0.4).
    \end{equation*}
    Since $\sum_{c \in C} \p(c) = 2$, we have $\p \in \Probs$. 
    Furthermore, each voter approves exactly three candidates among \(c_3,\dots,c_7\). 
    Therefore, for every voter \(i \in N\), $u_i(\p)=3\cdot 0.4=1.2 > 1 = u_i(W)$.
    Hence, $\p$ strictly fractionally dominates $W$, and thus $W \notin \wfPO(I)$. \label{itemitemitem}
    \item 
    Let $W \in \wfPO(I)$.
    Then, there does not exist a fractional committee $\p \in \Probs$ with $u_i(\p) > u_i(W)$ for all $i \in N$.
    Since every size-$k$ committee is contained in $\Probs$, we get that there also does not exist another committee $W'$ with the above conditions.
    Hence, $W \in \wPO(I)$. \\
    To show that the inequality is strict, we can use \ref{itemitemitem}. There we showed that there exists an instance $I$ and a committee $W$ such that $W \in \PO(I)$ and thus also in $\wPO(I)$ but not in $\wfPO(I)$. 
    \end{enumerate}
\end{proof}

\cloningInvariance*
\begin{proof}
    $\Rightarrow:$
    First assume that $W\in \fPO(I)$.
    Suppose, for contradiction, that there exists some $r\in \N_{>0}$ such that $rW\notin \PO(rI)$.
    Then there exists an integral committee $T$ of size $rk$ in the cloned instance $rI$ such that
    \[
    u_i^{rI}(T)\ge u_i^{rI}(rW)
    \quad\text{for all } i\in N,
    \]
    with at least one strict inequality.
    
    For each original candidate $c\in C$, let
    \[
        \p_c \coloneqq \frac{|\{s\in [r]: c^s\in T\}|}{r}.
    \]
    Then $0\le \p_c\le 1$ for all $c\in C$, and since $T$ has size $rk$, we have $\sum_{c\in C} \p_c = k$.
    Thus, $\p \in \Probs$ is a fractional committee in the original instance $I$.
    
    Moreover, because every copy of $c$ has the same valuation as $c$, we get
    \[
        u_i(\p) = \sum_{c\in C} \p_c v_{ic} = \frac{1}{r}u_i^{rI}(T)
    \]
    for every voter $i\in N$.
    Similarly, $u_i(W)=\frac{1}{r}u_i^{rI}(rW)$.
    Hence, $u_i(\p) \ge u_i(W)$ for all $i \in N$, with at least one strict inequality.
    This contradicts $W\in \fPO(I)$.
    Therefore, $rW\in \PO(rI)$ for every $r\in\N_{>0}$.

    $\Leftarrow:$
    Conversely, assume that $W\notin\fPO(I)$. 
    Then there exists a fractional committee $\p$ such that
    \[
        u_i(\p)\ge u_i(W) \quad\text{for all } i\in N,
    \]
    with at least one strict inequality. 
    Note that one can choose such a witness $\p$ with rational coordinates.
    
    Choose $r\in\mathbb{N}_{>0}$ such that $r  \p_c$ is an integer for every $c\in C$. 
    In the cloned instance $rI$, construct a committee $T$ by selecting exactly $r\p_c$ copies of each original candidate $c$. 
    This is feasible because
    \[
        \lvert T \rvert =\sum_{c\in C} r \p_c = rk.
    \]
    For every voter $i\in N$,
    \[
        u_i^{rI}(T)=r u_i(\p)\ge r u_i(W)=u_i^{rI}(rW),
    \]
    with at least one strict inequality. 
    Thus $T$ Pareto-dominates $rW$ in $rI$, so $rW\notin\PO(rI)$.
    
    Therefore, $W\in\fPO(I)$ if and only if $rW\in\PO(rI)$ for every $r\in\N_{>0}$.
    Note that the same steps can be applied when replacing $\fPO$ with $\wfPO$ and $\PO$ with $\wPO$. 
\end{proof}

\cloningInvariancePartTwo*
\begin{proof}
    We first prove that $\fPO$ is cloning-robust. For this, we will use \Cref{thm:cloningInvariance}.
    Assume for the sake of contradiction that the statement does not hold. Then, there exists an instance $I$, a committee $W$, and an
    $r \in \N_{>0}$ such that $rW \notin \fPO(rI)$, and thus, by \Cref{thm:cloningInvariance}, there exists an $s$ such that $rsW \notin \PO(rsI)$. This contradicts the fact that again, due to \Cref{thm:cloningInvariance}, it holds that $tW \in \PO(tW)$  for $t = rs \in \N_{>0}$.  
    
    It remains to show that $\PO$ is not cloning-robust.
    Observe that by \Cref{thm:cloningInvariance} $\fPO$ is the cloning-robust part of $\PO$. Since $\fPO \subsetneq \PO$ by \Cref{proposition:axiomatic_relationships} $\PO$ is not cloning-robust.
    The same argument applies to the two weak efficiency notions.
\end{proof}

\ArbitrarilyBad*
\begin{proof}
    Choose an integer $r$ such that $k/r<\epsilon$. 
    Let $C_1=\{c_1,\dots,c_k\}$, $C_2=\{c'_1,\dots,c'_{rk}\}$, and set $C=C_1\cup C_2$. 
    Further, let $\mathcal{S}=\{S\subseteq [rk]: |S|=k\}$.

    For every $S\in\mathcal{S}$ and every $b\in[k]$, introduce a voter $x_{S,b}$. Hence, $N=\{x_{S,b}: S\in\mathcal{S},\ b\in[k]\}$.

    The approval set of voter $x_{S,b}$ is defined by
    \[
        A_{x_{S,b}} = \{c_b\} \cup \{c'_j : j\notin S\}.
    \]
    Thus, every voter approves exactly one candidate in $C_1$ and exactly $rk-k$ candidates in $C_2$.

    Let $W \coloneqq C_1.$
    Then every voter approves exactly one candidate in $W$.

    We first show that $W$ is Pareto-optimal. 
    To this end, let $T\neq W$ be any committee of size $k$. 
    There exists some $b\in[k]$ such that $c_b\notin T$. 
    Let $T_2 \coloneqq T\cap C_2$.

    Since $|T_2|\le k$, we can choose a set $S\in\mathcal{S}$ such that $\{j: c'_j\in T_2\}\subseteq S$.
    Consider the voter $x_{S,b}$. 
    This voter does not approve any candidate from $T\cap C_1$, because her only approved candidate in $C_1$ is $c_b$, and $c_b\notin T$. 
    She also does not approve any candidate from $T_2$, because she approves precisely the candidates $c'_j$ with $j\notin S$. 
    Hence, $A_{x_{S,b}}\cap T = \emptyset$.

    Since every voter receives utility $1$ from $W$, no committee $T\neq W$ can Pareto-dominate $W$. 
    Thus, $W\in\PO(I)$.

    Now define a fractional committee $\p$ by assigning weight $ \p(c'_j)=\frac1r$ to every candidate in $C_2$ and weight $0$ to all candidates in $C_1$. 
    This is feasible because $\sum_{j=1}^{rk}\frac1r=k$.

    For every voter $i$, the voter approves exactly $rk-k$ candidates in $C_2$, and therefore
    \[
        \sum_{c\in A_i}\p(c) = \frac{rk-k}{r} = k-\frac{k}{r} > k-\epsilon.
    \]
    Since $|A_i\cap W|=1$, this gives $\sum_{c\in A_i}\p(c) > k-\epsilon = (k-\epsilon)\cdot |A_i\cap W|$.
\end{proof}

\subsection{Characterizing Fractional Pareto-Optimality via Weighted Utilitarianism}\label{appendix:characterization}

\begin{proposition} \label{proposition:utility_polytope}
    The set $\fracU$ is a polytope, and every vertex of $\fracU$ is of the form $u(W) \in \Q^n$ for some integral committee $W\in\Committees$.
\end{proposition}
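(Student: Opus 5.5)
The plan is to write $\fracU$ as the image of the polytope $\Probs$ under the linear map $L\colon \R^m \to \R^n$, $L(\p) = V\p$, where $(V\p)_i = \sum_{c\in C} v_{ic}\p(c) = u_i(\p)$. First, $\Probs$ is a polytope: it is bounded, being contained in $[0,1]^m$, and it is defined by finitely many linear constraints. Its vertices are exactly the incidence vectors $\chi^W$ for $W\in\Committees$. To verify this, note that at a vertex, $m$ linearly independent constraints are tight. Together with $\sum_c \p(c)=k$, this forces at most one fractional coordinate. A single fractional coordinate, however, is impossible, because the sum of the coordinates is the integer $k$. Hence every vertex is a $0/1$ vector with exactly $k$ ones. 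Alternatively, one can cite total unimodularity of the constraint matrix, which is an interval matrix.

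By Minkowski--Weyl, $\Probs = \operatorname{conv}\{\chi^W : W\in\Committees\}$. A linear map commutes with convex hulls, so
\[
\fracU = L(\Probs) = \operatorname{conv}\{V\chi^W : W\in\Committees\} = \operatorname{conv}\{u(W) : W\in\Committees\} = \operatorname{conv}(\mathcal{U}).
\]
Thus $\fracU$ is the convex hull of finitely many points, and hence a polytope.

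Every vertex of a convex hull of a finite set belongs to that set. Concretely, if a vertex $x$ of $\fracU$ were not among the points $u(W)$, it would be a convex combination of those points with some coefficient in $(0,1)$, and so it would not be extreme. Hence every vertex is some $u(W)$. Since all valuations are rational, $u(W) = \sum_{c\in W} v_{\cdot c}\in\Q^n$.

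The only step needing care is the vertex description of $\Probs$, that is, integrality of the hypersimplex. I would argue it directly via the tight-constraint count described above rather than invoking heavier machinery. Everything else is the standard fact that linear images of polytopes are polytopes whose vertices are images of vertices.
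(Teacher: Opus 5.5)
Your proof is correct and follows essentially the same route as the paper: both write $\fracU$ as the linear image of $\Probs$ and use that every vertex of a linear image of a polytope is the image of a vertex of $\Probs$, i.e., of some incidence vector $\chi^W$. The differences are minor. You verify the vertex description of $\Probs$ yourself via the tight-constraint count, which the paper only asserts. You also derive the image-vertex fact from $\fracU=\operatorname{conv}(\mathcal{U})$, where the paper cites it from the literature.
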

\begin{proof}
    The utility vector $u(\p) \in \fracU$ associated with a fractional committee $\p \in \Probs$ satisfies $u_i(\p) = \sum_{c\in C}v_{ic}\p(c)$ for all $i \in N$, so $\fracU$ is a linear image of $\Probs$.
    Since $\Probs$ is a polytope and a linear image of a polytope is again a polytope, $\fracU$ is a polytope.
    Moreover, by a standard fact from polyhedral theory (see \citet{ziegler1995lectures}), every vertex of a linear image of a polytope is the image of some vertex of the original polytope. As the vertices of $\Probs$ are exactly the incidence vectors $\chi^W$ with $W \in \Committees$, the second part of the statement follows.
\end{proof}

For the proof of \Cref{theorem:fractional_po_equal_weights}, we require a slight adaptation of \Cref{motzkinTheorem}, stated below. We are not aware of this variant appearing explicitly in the literature, and therefore provide a proof for completeness.\footnote{The formulation of this lemma was developed with the assistance of ChatGPT 5.5.}

\begin{theorem}[Motzkin's transposition theorem {\citep[Corollary 7.1k]{schrijver1986theory}}] \label{motzkinTheorem}
    Let $A \in \R^{d \times n}$ and $B \in \R^{q\times n}$ be matrices and let $b \in \R^d$ and $c \in \R^q$ be column vectors. 
    Then there exists a vector $x \in \R^n$ with $Ax < b$ and $Bx \leq c$, if and only if for all row vectors $y\in \R^{1 \times d}, z \in \R^{1 \times q}$ with $y,z \geq \mathbf{0}$:
    \begin{enumerate}
        \item if $yA+zB = \mathbf{0}$, then $yb+zc \geq 0$, and
        \item if $yA+zB= \mathbf{0}$ and $y\neq \mathbf{0}$, then $yb+zc > 0$.
    \end{enumerate}
\end{theorem}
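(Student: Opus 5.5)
We plan to derive \Cref{motzkinTheorem} from linear programming duality, that is, from Farkas' lemma, by turning the strict system into an optimization problem that has one extra slack variable.

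\emph{Necessity.} Suppose $x$ satisfies $Ax<b$ and $Bx\le c$, and let $y,z\ge\mathbf{0}$ with $yA+zB=\mathbf{0}$. Then
\[
0=(yA+zB)x=y(Ax)+z(Bx)\le yb+zc,
\]
because $y,z\ge\mathbf{0}$. If moreover $y\neq\mathbf{0}$, some $y_j>0$ multiplies the strict inequality $(Ax)_j<b_j$. Hence $y(Ax)<yb$, and the inequality above becomes strict. This gives both conditions.

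\emph{Sufficiency.} If $d=0$, the strict system is vacuous. Condition~1 with $y$ empty is then exactly the Farkas condition for feasibility of $Bx\le c$, so the claim follows. For $d\ge1$ we consider the linear program
\[
\max\ s \quad\text{s.t.}\quad Ax+s\mathbf{1}\le b,\ \ Bx\le c,\ \ s\le 1,
\]
in the variables $(x,s)$.

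\begin{enumerate}
\item \textbf{Feasibility.} Condition~1 with $y=\mathbf{0}$ says that $zB=\mathbf{0}$ and $z\ge\mathbf{0}$ imply $zc\ge0$. By Farkas' lemma, there is some $x_0$ with $Bx_0\le c$. Taking $s$ sufficiently negative makes $(x_0,s)$ feasible.
\item \textbf{Boundedness and duality.} The program is bounded above by $1$, so it attains a finite optimum $s^*$. By strong duality, $s^*$ equals the optimum of the dual
\[
\min\ yb+zc+t \quad\text{s.t.}\quad yA+zB=\mathbf{0},\ \ y\mathbf{1}+t=1,\ \ y,z,t\ge\mathbf{0},
\]
and this minimum is attained at some dual feasible point.
\item \textbf{Positivity of every dual value.} Take any dual feasible $(y,z,t)$ and split into two cases.
\begin{itemize}
\item If $y=\mathbf{0}$, then $t=1$, and condition~1 gives $zc\ge0$. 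The objective is therefore at least $1$.
\item If $y\neq\mathbf{0}$, condition~2 gives $yb+zc>0$. Since $t\ge0$, the objective is positive.
\end{itemize}
Because the minimum is attained, it is strictly positive, so $s^*>0$.
\item \textbf{Conclusion.} The primal optimal solution $(x^*,s^*)$ satisfies $Ax^*\le b-s^*\mathbf{1}<b$ and $Bx^*\le c$, as required.
\end{enumerate}

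The main point to watch is the passage from ``each dual value is positive'' to ``the dual optimum is positive''. This needs attainment of the LP optimum, which we get from feasibility and boundedness. That is why we add the constraint $s\le1$: it keeps both programs bounded, and it introduces the variable $t$ that separates the case $y=\mathbf{0}$ from the case $y\neq\mathbf{0}$.
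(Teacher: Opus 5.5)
Your proof is correct. The necessity computation is fine, and the dual of $\max s$ subject to $Ax+s\mathbf{1}\le b$, $Bx\le c$, $s\le 1$ is exactly the program you state. Splitting into $y=\mathbf{0}$ (which forces $t=1$) and $y\neq\mathbf{0}$ makes every dual value positive, and attainment of the LP optimum then gives $s^*>0$.

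The paper itself does not prove this statement. It imports it as a black box from Schrijver's Corollary~7.1k and only applies it in the homogeneous instance $A=-I$, $b=c=\mathbf{0}$, $B=Q^\top$. The argument given in that source, as I recall it, differs from yours. Write $a_j$ for the $j$-th row of $A$ and $P=\{x : Ax\le b,\ Bx\le c\}$. It first uses condition~1 and Farkas to show $P\neq\emptyset$, and then handles each strict row separately:
\begin{itemize}
\item If some $y,z\ge\mathbf{0}$ had $yA+zB=-a_j$ and $yb+zc\le -b_j$, then $(y+\e_j,z)$ would violate condition~2.
\item So by the affine Farkas lemma, $a_jx\ge b_j$ is not valid on $P$, and there is $x^j\in P$ with $a_jx^j<b_j$.
\item The barycentre of the $x^j$ then satisfies all strict rows at once.
\end{itemize}

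That route needs only Farkas and convexity, with no optimization and no attainment argument. The price is $d$ separate applications plus an averaging step. Your route bundles all strict rows into the single slack variable $s$ and settles everything with one use of strong duality. The cap $s\le 1$ and its dual variable $t$ cleanly isolate the $y=\mathbf{0}$ case. Either argument would make the appendix self-contained.
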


\begin{lemma} \label{lem:alternative_to_schrijver}
    Let $q^1,\dots,q^r\in\R^n$. 
    Exactly one of the following two statements holds:
    \begin{enumerate}[label=\textit{(\roman*)}]
        \item there exists a vector $\weight\in\R^n_{>0}$ such that
        \[
            \weight^\top q^\ell \le 0 \qquad\text{for all }\ell\in[r]; 
        \] 
    
        \item there exist coefficients $\lambda_1,\dots,\lambda_r\ge0$, not all zero, such that
        \[
            \sum_{\ell=1}^r \lambda_\ell q^\ell \in \R^n_{\ge0}\setminus\{\mathbf 0\}.
        \]
    \end{enumerate}
\end{lemma}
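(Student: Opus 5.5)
We prove the two parts separately: first that (i) and (ii) cannot both hold, then that at least one of them always holds. For the latter we apply Motzkin's transposition theorem (\Cref{motzkinTheorem}) to the system describing (i), and read off (ii) from the dual certificate.

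\emph{Not both.} Suppose $\weight\in\R^n_{>0}$ satisfies (i), and $\lambda\ge 0$, not all zero, satisfies (ii). Set $v\coloneqq\sum_\ell\lambda_\ell q^\ell$. From (i) and $\lambda\ge 0$ we get $\weight^\top v=\sum_\ell\lambda_\ell\,\weight^\top q^\ell\le 0$. On the other hand, $v\in\R^n_{\ge0}\setminus\{\mathbf 0\}$ and $\weight$ is strictly positive, so $\weight^\top v>0$. This is a contradiction.

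\emph{At least one.} Assume (i) fails. Statement (i) is invariant under positive scaling of $\weight$, so it is equivalent to the feasibility of
\[
-\weight<\mathbf 0,\qquad Q\weight\le\mathbf 0,
\]
where $Q\in\R^{r\times n}$ has rows $(q^\ell)^\top$. In the notation of \Cref{motzkinTheorem}, take $A=-I_n$, $b=\mathbf 0$, $B=Q$ and $c=\mathbf 0$. Since the system is infeasible, one of the two conditions of the theorem fails for some $y,z\ge\mathbf 0$. Because $b=c=\mathbf 0$, we have $yb+zc=0$ for all $y,z$, so condition~1 always holds. Hence condition~2 fails: there exist $y\in\R^{1\times n}$ and $z\in\R^{1\times r}$ with $y,z\ge\mathbf 0$, $y\neq\mathbf 0$, and $-y+zQ=\mathbf 0$. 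The last equation says $\sum_\ell z_\ell q^\ell=y^\top\in\R^n_{\ge0}\setminus\{\mathbf 0\}$. The coefficients $z$ cannot all be zero, since otherwise $y=\mathbf 0$. So $\lambda\coloneqq z$ witnesses (ii).

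The only delicate point is matching the strict versus non-strict inequalities to the correct blocks of \Cref{motzkinTheorem}. Positivity of $\weight$ has to be encoded as the strict block $A\weight<b$. The certificate $y\neq\mathbf 0$ then yields the nonzero nonnegative vector in (ii), while $z\neq\mathbf 0$ follows automatically from $y\neq\mathbf 0$.
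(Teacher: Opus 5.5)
Your proof is correct and takes essentially the paper's route: the same instantiation of \Cref{motzkinTheorem} ($A=-I$, $b=c=\mathbf{0}$, $B$ the matrix with rows $(q^\ell)^\top$), with condition~1 trivially satisfied and the failure of condition~2 giving (ii). The only difference is that you prove mutual exclusivity by a direct inner-product argument instead of reading it off from the equivalence in Motzkin's theorem, which is a harmless and slightly more transparent variation.
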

\begin{proof}
    We will use \Cref{motzkinTheorem} to prove this statement.
    For this, we set $A = -I$, $b = c = \mathbf{0}$ and $B = Q^T$ where $Q = (q^1, \dots, q^r)$.
    We show that condition \emph{(i)} of \Cref{lem:alternative_to_schrijver} is equivalent to the existence of a vector $x$ with $Ax < b$ and $Bx \leq c$, and \emph{(ii)} is equivalent to the non-existence. 
    Thus, by definition, precisely one of the two statements \emph{(i)} and \emph{(ii)} can hold.

    The conditions of \Cref{motzkinTheorem} can be simplified as follows:
    For all row vectors $y\in\R^{1\times n}_{\ge0}$ and  $z\in\R^{1\times r}_{\ge0}$:
    \begin{enumerate}
        \item if $-y+zQ^\top =\mathbf{0}$, then $0 \geq 0$, and
        \item if $-y+zQ^\top=\mathbf{0}$ and $y\neq \mathbf{0}$, then $0 > 0$.
    \end{enumerate}
    
    Observe that the first condition is always satisfied as $\mathbf{0} \geq \mathbf{0}$ is a tautology.
    Therefore, we focus on the second condition and note that it is violated when $zQ^\top=y$ and $y\neq\mathbf{0}$. 
    We now show the equivalence in two steps. 
    
    \medskip
    \noindent\textbf{Statement \emph{(i)} holds if and only if there exists $x$ with $Ax < b$ and $Bx \leq c$.} \\
    With $A=-I$, $b=\mathbf{0}$, $B=Q^\top$, and $c=\mathbf{0}$, we have
    \begin{align*}
        \exists x\in\R^n:\ Ax<b,\ Bx\le c
        &\iff \exists x\in\R^n:\ -Ix<\mathbf{0},\ Q^\top x\le \mathbf{0} \\
        &\iff \exists x\in\R^n_{>0}:\ (q^\ell)^\top x\le 0
        \quad \text{for all }\ell\in[r] \\
        &\iff \exists \weight\in \R^n_{>0}:\ \weight^\top q^\ell\le 0
        \quad \text{for all }\ell\in[r].
    \end{align*}

    Thus, the feasibility of the system $Ax<b$, $Bx\le c$ is equivalent to statement \emph{(i)}.

    \medskip
    \noindent\textbf{Statement \emph{(ii)} holds if and only if there is no vector $x$ with $Ax < b$ and $Bx \leq c$.} \\
    The non-existence of a vector $x$ with $Ax<b$ and $Bx\le c$ is, by \Cref{motzkinTheorem}, equivalent to the existence of row vectors
    \[
        y\in\R^{1\times n}_{\ge0},
        \qquad
        z\in\R^{1\times r}_{\ge0}
    \]
    with $y\neq\mathbf 0$ and $-y+zQ^\top=\mathbf 0$. 
    Equivalently,
    \[
        zQ^\top\in\R^{1\times n}_{\ge0}\setminus\{\mathbf 0\}.
    \]
    Writing $z \coloneqq (\lambda_1,\dots,\lambda_r)$, this is equivalent to
    \[
        \sum_{\ell=1}^r \lambda_\ell q^\ell
        =
        (zQ^\top)^\top
        \in
        \R^n_{\ge0}\setminus\{\mathbf 0\},
    \]
    with $\lambda_\ell\ge0$ for all $\ell$. 
    This is precisely statement \emph{(ii)}.
    
    Since the system is either feasible or infeasible, and since feasibility is equivalent to \emph{(i)} whereas infeasibility is equivalent to \emph{(ii)}, exactly one of \emph{(i)} and \emph{(ii)} holds.
\end{proof}

\CharacterizationFPO*
\begin{proof}
\begin{enumerate}[leftmargin=*,label=\textit{(\roman*)}]
    \item
        $\Leftarrow$:
        Let $W \in \PO^\weight(I)$ for some $\weight \in \Weights^+$, and assume for contradiction that $W \notin \fPO(I)$.
        Then, there exists some $\p \in \Probs$ such that $u_i(\p) \ge u_i(W)$ for all $i \in N$ and $u_j(\p) > u_j(W)$ for some $j \in N$.
        Since $\weight_i > 0$ for all $i \in N$, it follows that
        \begin{equation*}
            \sum_{i\in N}\weight_i u_i(\p) \;>\; \sum_{i\in N}\weight_i u_i(W),
        \end{equation*}
        contradicting $W \in \PO^\weight(I)$.
        Hence, $W \in \fPO(I)$.
    
    \smallskip
    \noindent
     
    $\Rightarrow$:
        Let now $W \in \fPO(I)$, and write $u^* \coloneqq u(W)$.
        Since $\fracU$ is a polytope, there exist points $u^1,\dots,u^r\in\R^n$ (in our model even in $\mathbb{Q}^n_{\geq 0}$) such that
        \[
            \fracU=\operatorname{conv}\{u^1,\dots,u^r\}.
        \]
        For each $\ell\in[r]$, define $q^\ell\coloneqq u^\ell-u^*$.
        
        We claim that $(ii)$ in \Cref{lem:alternative_to_schrijver} cannot hold for the vectors $q^1,\dots,q^r$. 
        To see this, suppose that there exist coefficients $\lambda_1,\dots,\lambda_r\ge0$, not all zero, such that
        \[
            \sum_{\ell=1}^r \lambda_\ell q^\ell \in \R^n_{\ge0}\setminus\{\mathbf 0\}.
        \]
        Rescale these weights such that $\sum_{\ell=1}^r \lambda_\ell = 1$. In the following define
        \[
            \bar u \coloneqq \sum_{\ell=1}^r \lambda_\ell u^\ell \in\fracU.
        \]
        Moreover,
        \[
            \bar u-u^*  = \sum_{\ell=1}^r \lambda_\ell u^\ell- (\sum_{\ell=1}^r \lambda_\ell) u^* = \sum_{\ell=1}^r \lambda_\ell(u^\ell-u^*) = \sum_{\ell=1}^r \lambda_\ell q^\ell \in \R^n_{\ge0}\setminus\{\mathbf 0\}.
        \] 
        Therefore, $\bar u \in u^*+\bigl(\mathbb R^n_{\ge0}\setminus\{\mathbf 0\}\bigr)$ and $\bar u \in \fracU$. 
        We can follow that there exists a fractional committee $\p \in \Probs$ with $u(\p)=\bar u$ and $\bar u \geq u^* = u(W)$ with at least one strict inequality; this contradicts $W \in \fPO(I)$.
        
        Hence, by \Cref{lem:alternative_to_schrijver}, $(i)$ must hold and there exists $\weight\in\R^n_{>0}$ such that
        \[
            \weight^\top q^\ell\le0 \qquad\text{for all }\ell\in[r].
        \]
        Equivalently,
        \[
            \weight^\top u^\ell \le \weight^\top u^* \qquad\text{for all }\ell\in[r].
        \]

        Using this equation, we show that $W$ maximizes the sum of weighted utilities with respect to the weight vector $\weight$. To this end, consider $p \in \Probs$ and its corresponding utility vector $u$.
        Since $\fracU=\operatorname{conv}\{u^1,\dots,u^r\}$, there exist $\theta \in \Weights$ such that $u=\sum_{\ell=1}^r\theta_\ell u^\ell$.
        Therefore,
        \[
            \weight^\top u = \sum_{\ell=1}^r\theta_\ell\weight^\top u^\ell \le \sum_{\ell=1}^r\theta_\ell\weight^\top u^* = \weight^\top u^*.
        \]
        Hence, $\weight^\top u\le \weight^\top u^*$ for all $u \in \fracU$.        
        Normalizing $\weight$ guarantees that $\weight \in \Weights^+$, so we can conclude that $W \in \PO^\weight(I)$.

    \bigskip
    \item 
        $\Leftarrow$:
        Let $W \in \PO^\weight(I)$ for some $\weight \in \Weights$, and assume for contradiction that $W \notin \wfPO(I)$.
        Then there exists some $\p \in \Probs$ such that $u_i(\p) > u_i(W)$ for all $i \in N$.
        Since $\weight \in \Weights$, we have $\weight_i \ge 0$ for all $i \in N$ and $\weight_i > 0$ for at least one $i \in N$.
        Hence,
        \begin{equation*}
            \sum_{i\in N}\weight_i u_i(\p) \;>\; \sum_{i\in N}\weight_i u_i(W),
        \end{equation*}
        contradicting $W \in \PO^\weight(I)$.
        Therefore, $W \in \wfPO(I)$.
        
        \smallskip
        \noindent
        $\Rightarrow$:
        Now let $W \in \wfPO(I)$ and write $u^* := u(W)$.
        Consider the set
        \begin{equation*}
            D \coloneqq u^* + \mathbb{R}^n_{>0},
        \end{equation*}
        that is, the set of all utility vectors that strictly improve upon $u^*$ in every coordinate.
        
        Since $W \in \wfPO(I)$, there does not exist any $\p \in \Probs$ such that $u_i(\p) > u_i(W)$ for all $i \in N$.
        Equivalently, $\fracU \cap D = \emptyset$.
        By \Cref{proposition:utility_polytope}, the set $\fracU$ is a polytope, and hence compact and convex. Moreover, $D$ is open and convex as it is a translation of the open convex cone $\R_{>0}^n$. 
        Since $\fracU$ and $D$ are disjoint, the strict separating hyperplane theorem yields a non-zero vector $\weight \in \mathbb{R}^n$ and a scalar $\alpha \in \mathbb{R}$ such that
        \begin{equation*}
            \weight^\top u \le \alpha < \weight^\top v \qquad \text{for all } u \in \fracU \text{ and } v \in D.
        \end{equation*}
        
        We first show that $\weight_i \ge 0$ for every $i \in N$.
        Fix some $i \in N$, and suppose for contradiction that $\weight_i < 0$.
        Let $\delta > 0$.
        For every $\varepsilon > 0$, the vector
        \begin{equation*}
            d_{\varepsilon,\delta}:=u^*+\varepsilon\mathbf{1}+\delta \e_i
        \end{equation*}
        lies in $D$, where $\mathbf{1}$ denotes the all-ones vector.
        Hence,
        \begin{equation*}
            \alpha < \weight^\top d_{\varepsilon,\delta} = \weight^\top u^* + \varepsilon\sum_{j\in N}\weight_j + \delta\weight_i.
        \end{equation*}
        Since $u^*\in\fracU$, we also have $\weight^\top u^*\le \alpha$.
        Thus,
        \begin{equation*}
            0 < \varepsilon\sum_{j\in N}\weight_j + \delta\weight_i \qquad \text{for all } \varepsilon>0.
        \end{equation*}
        Letting $\varepsilon \downarrow 0$ yields $0\le \delta\weight_i$, contradicting $\weight_i<0$. 
        Therefore, $\weight_i\ge 0$.
        Since $i$ was arbitrary, $\weight_i\ge 0$ for all $i\in N$.

        Since $\weight\neq 0$ by the strict separating hyperplane theorem, we have $\sum_{i\in N}\weight_i>0$.
        Dividing both $\weight$ and $\alpha$ by this positive sum, we may assume without loss of generality that $\weight\in\Weights$.
        
        We next show that the separating hyperplane supports $\fracU$ at $u^*$, that is, $\alpha=\weight^\top u^*$.
        Since $u^*\in\fracU$, the separating inequality gives $\weight^\top u^*\le \alpha$.
        Suppose for contradiction that $\weight^\top u^*<\alpha$.
        Then there exists $\eta>0$ such that $\weight^\top u^*\le \alpha-\eta$.
        Since $\weight\in\Weights$, we have $\sum_{i\in N}\weight_i=1$.
        Choose $0<\varepsilon<\eta$.
        Then $u^*+\varepsilon\mathbf{1}\in D$, but
        \begin{equation*}
            \weight^\top(u^*+\varepsilon\mathbf{1}) = \weight^\top u^* + \varepsilon\sum_{i\in N}\weight_i = \weight^\top u^*+\varepsilon < \alpha.
        \end{equation*}
        This contradicts the strict separating inequality, which requires
        \begin{equation*}
            \alpha < \weight^\top x \qquad \text{for all } x\in D.
        \end{equation*}
        Hence, $\alpha=\weight^\top u^*$ and $W \in \PO^\weight(I)$ since no other committee $W'$ can achieve a greater sum of weighted utilities for the weights $\weight$, i.e., $\sum_i \weight_i u_i(W') \leq \alpha = \sum_i \weight_i u_i(W)$.
        Since $\fracU$ contains all fractional utilities, $W$ even maximizes over the whole fractional polytope.
    \end{enumerate}
\end{proof}

\weightsToScores*
\begin{proof}
    We rewrite the weighted sum of utilities with respect to the weight vector $\weight$ as follows:
    \begin{equation*}
        \sum_{i \in N}\weight_i u_i(\p) 
        = \sum_{i \in N}\weight_i \sum_{c \in C}v_{ic}\p(c) 
        = \sum_{c \in C}(\sum_{i \in N}\weight_i v_{ic})\p(c) 
        = \sum_{c \in C}s_c(\weight)\p(c).
    \end{equation*}
    Hence, the objective is linear in $\p$, with coefficient $s_c(\weight)$ for each candidate $c \in C$. 
    Now let $W \in \Committees$. 
    Since $\chi^W \in \Probs$, we have $W \in \PO^\weight(I)$ if and only if $\chi^W$ maximizes $\sum_{c \in C}s_c(\weight)\p(c)$ over all $\p \in \Probs$. 
    Because $\sum_{c \in C} \p(c) = k$ and $0 \le \p(c) \le 1$ for all $c \in C$, this linear objective is maximized by setting $\p(c) = 1$ for $k$ candidates with the highest score and $\p(c) = 0$ for all remaining candidates. 
    Thus, $\chi^W$ is optimal if and only if $W$ consists of the $k$ candidates with the highest score $s_c(\weight)$, breaking ties arbitrarily at the $k$-th position.
\end{proof}

\subsubsection{Computability}\label{appendix:computability}

\ComputabilityFPO*
\begin{proof}
    Let $u = u(W)$. 
    We claim that the following linear program can be used to prove the statement: 
    \begin{align*}
        \max&\sum_{i \in N} \varepsilon_i \\
        & \sum_{c \in C} v_{ic}\p(c) \geq u_i + \varepsilon_i \;\;\forall i \in N, \\
        & \sum_{c \in C} \p(c) = k, \quad \varepsilon_i \geq 0, \quad 0 \leq \p(c) \leq 1.
    \end{align*}
    The first set of constraints ensures that the fractional committee $\p$ gives every voter at least the utility obtained from $W$, while each variable $\varepsilon_i$ captures the possible utility gains of voter $i$ beyond $u_i$. 
    Since the incidence vector of $W$ is feasible for this program, the feasible region is non-empty, and the optimal value is at least $0$. 
    Let $(\p^*, \varepsilon^*)$ be an optimal solution.
    We now show that $W \in \fPO(I)$ if and only if the optimum value of the program is equal to $0$. 
    Suppose first that $\sum_{i \in N} \varepsilon_i^* = 0$. 
    Then $\varepsilon_i^* = 0$ for all $i \in N$, and therefore no feasible fractional committee can strictly improve the utility of any voter while weakly improving the utility of all voters. 
    Hence, there is no fractional committee that fractionally Pareto-dominates $W$, and thus $W \in \fPO(I)$.
    Conversely, suppose that $\sum_{i \in N} \varepsilon_i^* > 0$. 
    Then, there exists some voter $j \in N$ with $\varepsilon_j^* > 0$. 
    By feasibility of $(\p^*, \varepsilon^*)$, we obtain
    \begin{align*}
         &u_i(\p^*) = \sum_{c \in C} v_{ic}\p^*(c) \geq u_i(W) \qquad \text{for all } i \in N, \text{ and }\\
         &u_j(\p^*) \geq u_j(W) + \varepsilon_j^* > u_j(W).
    \end{align*}
    Thus, $\p^*$ weakly improves upon $W$ for every voter and strictly improves upon $W$ for voter $j$. It follows that $\p^*$ fractionally Pareto-dominates $W$, so $W \notin \fPO(I)$.\\
    Therefore, solving the above linear program allows us to decide in polynomial time whether $W \in \fPO(I)$. Moreover, whenever $W \notin \fPO(I)$, the optimal solution $\p^*$ yields a witnessing fractional committee.
    Finally, since all coefficients of the LP are fractional, it can be solved in polynomial time, and the witnessing fractional committee only contains rational probabilities.

    \bigskip 
    \noindent
    The proof for weak fractional Pareto-optimality works similarly.
    Let $u = u(W)$ again. Consider the following linear program:
    \begin{align*}
        \max \quad & \varepsilon \\
        \text{s.t.} \quad
        & \sum_{c \in C} v_{ic}\p(c) \ge u_i + \varepsilon \qquad \forall i \in N, \\
        & \sum_{c \in C} \p(c) = k, \\
        & 0 \le \p(c) \le 1 \qquad \forall c \in C, \\
        & \varepsilon \ge 0.
    \end{align*}
    The first set of constraints requires that the fractional committee $\p$ improves the utility of every voter by at least $\varepsilon$ over the utility obtained from $W$.
    Since the incidence vector of $W$ together with $\varepsilon = 0$ is feasible, the feasible region is non-empty. Let $(\p^*, \varepsilon^*)$ be an optimal solution.
    
    We claim that $W \in \wfPO(I)$ if and only if the optimum value satisfies
    $\varepsilon^* = 0$.
    Suppose first that $\varepsilon^* = 0$. 
    Then there is no feasible fractional committee that strictly increases each voter's utility. Hence, there is no fractional committee $\p \in \Probs$ such that $u_i(\p) > u_i(W)$ for all $i \in N$.
    Therefore, $W \in \wfPO(I)$.
    Conversely, suppose that $\varepsilon^* > 0$. By feasibility of $(\p^*, \varepsilon^*)$, we obtain
    \begin{equation*}
            u_i(\p^*) = \sum_{c \in C} v_{ic}\p^*(c) \ge u_i(W) + \varepsilon^* > u_i(W) \qquad \text{for all } i \in N.
    \end{equation*}
    Thus, $\p^*$ strictly improves upon $W$ for every voter, so $\p^*$ fractionally Pareto-dominates $W$ in the weak sense. Hence, $W \notin \wfPO(I)$.
    Since all coefficients of the LP are rational (even integral), it can be solved in polynomial time.

    \bigskip
    \noindent
    We finally show that, if $W\in\fPO(I)$, then one can compute in polynomial time a weight vector $\weight\in\Weights^+$ such that $W\in\PO^\weight(I)$. 
    The argument for $\wfPO$ is analogous, except that the resulting weight vector is only required to be non-negative.
    
    For $\fPO$, consider the following linear program:
    \begin{align*}
        \max \quad & \eta \\
        \text{s.t.}\quad 
        & s_c(\weight) \ge s_{c'}(\weight) \qquad \text{for all } c\in W,\ c'\in C\setminus W,\\
        & \sum_{i\in N}\weight_i = 1,\\
        & \weight_i \ge \eta \qquad \text{for all } i\in N,\\
        & \eta \ge 0.
    \end{align*}
    
    The first set of constraints ensures that every selected candidate has score at least as large as every unselected candidate. 
    Hence, any feasible solution $\weight$ makes $W$ a top-$k$ committee with respect to the weighted utilitarian objective. 
    The normalization $\sum_{i\in N}\weight_i=1$ ensures that the weights lie in $\Weights$, while the variable $\eta$ maximizes the minimum voter weight.
    
    Since $W\in\fPO(I)$, the weighted-utilitarian characterization of $\fPO$ yields a strictly positive weight vector supporting $W$. 
    After normalization, this vector is feasible with $\eta>0$. 
    Thus, the optimal solution satisfies $\eta^*>0$, and the corresponding weight vector $\weight^*$ belongs to $\Weights^+$. 
    By the score constraints, $W$ maximizes the weighted utilitarian objective induced by $\weight^*$, and therefore $W\in\PO^{\weight^*}(I)$.
    
    For $\wfPO$, one uses the same score constraints but only requires non-negative weights.
    Any feasible solution gives a weight vector $\weight\in\Weights$ such that $W\in\PO^\weight(I)$.
\end{proof}

\subsubsection{Monotonicity}
\CommitteeMonotonicity*
\begin{proof}
    We prove the statement simultaneously for $\fPO$ and $\wfPO$.
    Fix an instance $I$ and let $k < m$.
    
    By the weighted characterizations of $\fPO$ and $\wfPO$ (see \Cref{theorem:fractional_po_equal_weights}), a committee belongs to the respective solution set if and only if it is supported by some weight vector $\weight$, where $\weight \in \Delta^+$ in the case of $\fPO$ and $\weight \in \Delta$ in the case of $\wfPO$.
    Given such a weight vector $\weight$, \Cref{lemma:weights_to_scores} induces a weak order $\succsim_\weight$ over the candidates according to their $\weight$-scores.
    
    We first show the forward direction.
    Let $W \in S_k(I)$, where $S_k(I)$ denotes either $\fPO_k(I)$ or $\wfPO_k(I)$.
    Then, there exists a weight vector $\weight$ such that $W$ consists of the top-$k$ candidates with respect to $\succsim_\weight$ (with arbitrary tie-breaking at the threshold).
    Let $W'$ be obtained by adding one further candidate of the highest remaining $\weight$-score.
    Then $W'$ consists of the top-$(k+1)$ candidates with respect to the same order $\succsim_\weight$, and hence $W' \in S_{k+1}(I)$.
    By construction, $W \subseteq W'$.
    
    For the reverse direction, let $W' \in S_{k+1}(I)$.
    Again, there exists a weight vector $\weight$ such that $W'$ consists of the top-$(k+1)$ candidates with respect to $\succsim_\weight$.
    Let $W$ be obtained from $W'$ by removing one candidate of the lowest $\weight$-score in $W'$.
    Then $W$ consists of the top-$k$ candidates with respect to the same order $\succsim_\weight$, and therefore $W \in S_k(I)$.
    Clearly, $W \subseteq W'$.
    
    Thus, for every instance $I$ and every $k < m$, every committee in $S_k(I)$ is contained in some committee in $S_{k+1}(I)$, and every committee in $S_{k+1}(I)$ contains some committee in $S_k(I)$.
    Hence, both $\fPO$ and $\wfPO$ satisfy committee monotonicity.
\end{proof}

\subsubsection{Connectedness}

\Connectedness*
\begin{proof}
    Let $W, W' \in \wfPO(I)$, and let $\weight, \weight' \in \Weights$ be weight vectors supporting them as guaranteed to exist by \Cref{theorem:fractional_po_equal_weights}. 
    For $\lambda \in [0,1]$, define $\weight(\lambda) \coloneqq (1-\lambda)\weight + \lambda \weight'$.
    By convexity, $\weight(\lambda) \in \Weights$ for all $\lambda \in [0,1]$. 
    By \Cref{lemma:weights_to_scores}, each weight vector induces candidate scores which we here define with respect to $\lambda$:   
    \begin{equation*}
        s_c(\lambda) \coloneqq \sum_{i \in N} \weight(\lambda)_i v_{ic},
    \end{equation*}
    and each supported committee consists of $k$ candidates, each with one of the $k$ highest scores. Since $\weight(\lambda)$ is linear in $\lambda$, each score function $s_c(\lambda)$ is linear in $\lambda$ as well,

    \begin{equation*}
        s_c(\lambda) = \sum_{i \in N} \weight(\lambda)_i v_{ic}
        = \sum_{i \in N} ((1-\lambda)\weight + \lambda \weight')_i v_{ic}
        = \sum_{i \in N} \weight_i v_{ic} + \lambda \sum_{i \in N} (\weight'- \weight)_i v_{ic}.
    \end{equation*}

    The set of supported committees can change only at values of $\lambda$ where two candidates' scores cross, i.e., where their order in the score ranking changes. 
    
    For every pair of candidates $c,d \in C$, consider $s_c$ and $s_d$. If $s_c = s_d$, the two candidates' relative order never changes, so they induce no breakpoint. Otherwise,
    the equation $s_c(\lambda) = s_d(\lambda)$ has at most one solution in $[0,1]$ since both $s_c(\lambda)$ and $s_d(\lambda)$ are linear functions. 
    Hence, there are only finitely many values
    \begin{equation*}
        0 \leq \lambda_0 < \lambda_1 < \dots < \lambda_t \leq 1
    \end{equation*}
    at which the set of supported committees changes.

    Fix such a breakpoint $\lambda_j$. 
    At $\lambda_j$, some set of candidates ties in score at the threshold between selected and non-selected candidates. 
    Since all of these candidates have the same score, any committee obtained by exchanging one tied selected candidate for one tied non-selected candidate is still supported by $\weight(\lambda_j)$. 
    It follows that the supported committees immediately before and immediately after $\lambda_j$ can be connected by a sequence of single-candidate swaps, all of which remain supported at $\lambda_j$.

    Repeating this argument at every breakpoint yields a sequence of supported committees starting at $W$ and ending at $W'$, such that consecutive committees differ by exactly one swap. 
    Since every supported committee belongs to $\wfPO(I)$, this shows that all committees in $\wfPO(I)$ lie in the same connected component. 
    The same argument applies to $\fPO(I)$.
\end{proof}

To study the size and diameter of the connected component, we relate our problem to the \textit{k-level problem} from discrete geometry \citep{Dey98}.

\textbf{The k-level problem.} 
Let $\mathcal{L}$ be an arrangement of $m$ lines in $\R^2$. 
For a vertical line at position $x$, let $\ell_1(x), \ldots, \ell_m(x)$ denote the $m$ lines ordered by their $y$-coordinate at $x$ (highest to lowest). 
The \textit{$k$-level} is the piecewise-linear curve formed by the $k$-th highest line at each x-coordinate, including all vertices where the identity of the $k$-th line changes. 
Its complexity is defined as the number of vertices (breakpoints) on this curve. 
The k-level problem asks for an upper bound on this complexity as a function of $m$ and $k$.

\diameter*
\begin{proof}
    By \Cref{theorem:connectedness}, any two committees $W, W' \in \fPO(I)$ can be connected by following the path induced by the weight interpolation $\weight(\lambda) = (1-\lambda)\weight + \lambda\weight'$. 
    For simplicity, let us assume the linear segments induced by the candidates' scores are in general position for now, that is, no two coincide and no three segments pass through a common point (in particular, no two candidates are clones).
    Then, along the connecting path, each change in the top-$k$ set corresponds to a single-candidate swap. 
    The number of such swaps is at most the number of breakpoints in the corresponding line arrangement, which is bounded by the complexity of the $k$-level.
    \citet{Dey98} proved that the complexity of the $k$-level in an arrangement of $m$ lines in the plane lies in $\mathcal{O}(m \cdot k^{1/3})$.  
    Therefore, the number of single-candidate swaps needed to transform any two committees $W, W' \in \fPO(I)$ into each other within $\fPO$ lies in $\mathcal{O}(m \cdot k^{1/3})$.
    The same steps hold if $W, W' \in \wfPO(I)$.

    It remains to remove the general-position assumption. We use a symbolic perturbation of the score-line arrangement.
    Let us define the following perturbation: For $\varepsilon>0$, replace each score line $s_c(\lambda)$ by
    \[
        s_c^\varepsilon(\lambda) \coloneqq
        s_c(\lambda)
        + \varepsilon\bigl((1-\lambda)\mathbf 1_{\{c\in W\}} + \lambda \mathbf 1_{\{c\in W'\}}\bigr)
        + \varepsilon^2(\alpha_c+\beta_c\lambda),
    \]
    where the numbers $\alpha_c,\beta_c$ are chosen so that the perturbed score lines are in general position.
    For all sufficiently small $\varepsilon>0$, this arrangement is in general position, and its unique top-$k$ set at $\lambda=0$ is $W$, while its unique top-$k$ set at $\lambda=1$ is $W'$.

    The perturbation is used only to order simultaneous tie-breaking events. 
    Consider a breakpoint $\lambda_j$ of the unperturbed arrangement. 
    All candidates whose relative order at the $k$-threshold changes at $\lambda_j$ have the same unperturbed score at $\lambda_j$. For sufficiently small $\varepsilon>0$, the corresponding crossings in the perturbed arrangement occur inside an arbitrarily small neighborhood of $\lambda_j$; that is, for every $\delta>0$, we may choose $\varepsilon$ small enough so that all these crossings lie in $(\lambda_j-\delta,\lambda_j+\delta)$. Hence, every top-$k$ set produced by resolving this tie in the perturbed arrangement is also supported by the original weight vector $\weight(\lambda_j)$. Thus, inside this neighborhood of $\lambda_j$, the perturbed path refines the degenerate unperturbed breakpoint into a sequence of single-candidate swaps, without introducing committees outside $\fPO(I)$. 
    
    Finally, in the perturbed arrangement, any candidate that leaves the top $k$ does so within an arbitrarily small neighborhood of $\lambda_j$, again for $\varepsilon$ small enough. Since the perturbed arrangement is in general position, such a departure occurs exactly when this candidate's line crosses the line of the candidate replacing it at the $k$-threshold, i.e., at a vertex of the $k$-level. Hence, every single-candidate swap along the refined original path enforces at least one crossing in the perturbed arrangement. Consequently, the length of the shortest path in the original committee graph between $W$ and $W'$ is at most the number of top-$k$ changes in the perturbed general-position arrangement, which is bounded by the complexity of the $k$-level, namely $\mathcal{O}(m \cdot k^{1/3})$.
\end{proof}

\diameterLower*
\diameterExampleVisualisation
\begin{proof}
    \begin{enumerate}[leftmargin=*,label=\textit{(\roman*)}]
        \item
        Let the committee size be $k$. 
        We construct an instance with two voter groups
        \begin{equation*}
            U=\{u_1,u_2,u_3\} \qquad\text{and}\qquad X=\{x_1,x_2,x_3\},
        \end{equation*}
        and three sets of candidates \(C_1,C_2,C_3\).
        Let
        \begin{equation*}
            C_1=\{a_1,\dots,a_k\}, \qquad C_2=\{b_1,\dots,b_k\}.
        \end{equation*}
        Every candidate in $C_1$ is approved by exactly the voters in $U$, and every candidate in $C_2$ is approved by exactly the voters in $X$.
        For every pair $(i,j) \in [3]\times[3]$, let $Z_{ij}$ be the set of $k$ candidates, each approved by all voters except $u_i$ and $x_j$. 
        Let 
        \begin{equation*}
            C_3 \coloneqq \bigcup_{i,j \in [3]} Z_{ij}.
        \end{equation*} 
        
        We first show that the committees $W \coloneqq C_1$ and $W' \coloneqq C_2$ belong to $\fPO(I)$. 
        By the characterization of $\fPO(I)$, it suffices to provide supporting weight vectors $\weight \in \Weights^+$. \\
        For $W$, assign weight $(1-3\varepsilon)/3$ to each voter in $U$ and weight $\varepsilon$ to each voter in $X$, where $0<\varepsilon<1/9$. 
        Then every candidate in $C_1$ has score of $1-3\varepsilon$, every candidate in $C_2$ has score $3\varepsilon$, and every candidate in \(C_3\) has score at most
        \begin{equation*}
            1-\frac{1-3\varepsilon}{3}-\varepsilon=\frac23.
        \end{equation*}
        Since $1-3\varepsilon>\frac23$, the top-$k$ candidates are exactly the candidates in $C_1$. 
        Hence $W \in \fPO(I)$. 
        By symmetry, $W' \in \fPO(I)$.
        
        Next, we show that no committee in $\fPO(I)$ can contain candidates from both $C_1$ and $C_2$. 
        Let $\weight \in \Weights^+$ be any supporting weight vector. 
        Write
        \begin{equation*}
            x \coloneqq \sum_{u \in U} \weight_u, \qquad y \coloneqq \sum_{v \in V} \weight_v,
        \end{equation*}
        such that $x+y=1$. 
        Then every candidate in $C_1$ has score $x$, and every candidate in $C_2$ has score $y$.
        Let $u_i$ be a minimum-weight voter in $U$, and $v_j$ be a minimum-weight voter in $X$. 
        Then
        \begin{equation*}
            \weight_{u_i} \le \frac{x}{3}, \qquad \weight_{v_j} \le \frac{y}{3}.
        \end{equation*}
        It follows that every candidate in $Z_{ij}$ has score
        \begin{equation*}
            1-\weight_{u_i}-\weight_{v_j} \ge 1-\frac{x}{3}-\frac{y}{3} = \frac23.
        \end{equation*}
        Since $\min\{x,y\} \le \frac12 < \frac23$, there are at least $k$ candidates in $C_3$ whose score is strictly larger than the smaller of the two scores $x$ and $y$. 
        Therefore, a top-$k$ committee cannot contain candidates from both $C_1$ and $C_2$.
        It can only contain candidates from one of these sets.
        
        Now consider any sequence
        \begin{equation*}
            W=W_0,W_1,\dots, W_j, \dots,W_t=W'
        \end{equation*}
        of committees in $\fPO(I)$ such that consecutive committees differ by a single-candidate swap. 
        Let $j$ be the smallest index such that $W_j$ contains a candidate from $C_2$. 
        So, $W_{j-1}$ contains no candidate from $C_2$. 
        Further, we know from our above argument that $W_j$ cannot contain any candidate from $C_1$. 
        As $W_j$ is obtained from $W_{j-1}$ by a single swap, it follows that $W_{j-1}$ contains at most one candidate from $C_1$. 
        Therefore, it needs at least $k-1$ single swaps to arrive at $W_{j-1}$.
        Moreover, $W_{j-1}$ contains no candidate from $C_2$, while $W'$ consists of $k$ candidates from $C_2$.
        Hence, we need at least $k$ single swaps to get to $W'$.
        Adding the swaps needed to reach $W_{j-1}$ and then $W'$ yields $2k-1$, establishing the claim.

        \bigskip
        \item
        Let there be two voters $N = [2]$ and the candidates be $C=\{c_1,\dots,c_m\}$, for $m=k+\ell$.
        We construct valuations so that the supported committees form a sliding window of size $k$:
        \[
            W_r \coloneqq \{c_r,c_{r+1},\dots,c_{r+k-1}\}, \qquad r=1,\dots,\ell+1.
        \]
        Since $N = [2]$ we can parametrize any weight vector $\weight$ by $(t, 1-t)$ for $t \in [0,1]$.
        With this, we can define 
        \[
            s(t) \coloneqq \frac{k+1}{2}+\ell \cdot t.
        \]
        Choose an integer $M$ large enough so that all valuations below are non-negative. Define
        \[
            v_1(c_q):=M-q^2+q(k+1+2\ell), \quad \text{and}\quad v_2(c_q):=M-q^2+q(k+1) \quad \text{for all } q \in [m].
        \]
        These valuations are rational, in fact integral.
       
        For a weight vector $\weight(t)=(t,1-t)$, the score of candidate $c_q$ is
        \[
            s_{c_q}(t) = t\cdot v_1(c_q)+(1-t) \cdot v_2(c_q) = M-q^2+2q s(t) = M+s(t)^2-(q-s(t))^2,
        \]
        where the second equality follows by expanding with $s(t)^2$.
        Thus, for a fixed value of $t$, the candidates with the highest scores are exactly those whose indices $q$ are closest to $s(t)$.
        
        By \Cref{theorem:fractional_po_equal_weights}, and since every positive weight vector for two voters is of the form $\weight(t)=(t,1-t)$ with $t\in(0,1)$, the fractionally Pareto-optimal committees are exactly $W_1, W_2,\dots, W_{\ell+1}$.
    
        Moreover, consecutive committees differ by exactly one candidate, whereas if $|r-s|>1$, then $W_r$ and $W_s$ differ in at least two candidates, since $k\ge 2$.
        Therefore, non-consecutive windows are non-adjacent by a single swap. 
        Hence, the swap graph induced by $\fPO(I)$ is exactly the path
        \[
            W_1 \to W_2 \to \cdots \to W_{\ell+1}.
        \]
        Consequently, to get from $W_1$ to $W_{\ell+1}$ by single swaps, ensuring that every committee on the path satisfies fractional Pareto-optimal, requires at least $\ell$ steps.
    \end{enumerate}
\end{proof}

\subsection{Evaluation of Established Voting Rules}\label{appendix_section:voting_rules}

\subsubsection{Welfarist Rules}\label{appendix:welfarist_rules}
\convexWelfarist*
\begin{proof}   
    Let $W \in f_g(I)$ for some instance $I$ and convex function $g$.
    Therefore, 
    \[
        W \in \arg\max_{W' \in \Committees}g(u(W')).
    \]
    Assume for contradiction that $W \notin \fPO(I)$.
    Then, there exists a fractional Pareto-dominating committee $\p \in \Probs$ with $u_i(\p)\geq u_i(W)$ for all $i \in N$, and there is at least one voter $j \in N$ with $u_j(\p) > u_j(W)$.
    Hence, by strict monotonicity of $g$ it holds that $g(u(\p)) > g(u(W))$.

    We can write $\p$ as a convex combination of integral committees \citep{schrijver2003combinatorial}, which gives us
    $\p=\sum_{j}\lambda_j \chi^{W_j}$ with $\lambda_j \in [0, 1]$ and $\sum_j\lambda_j = 1$.
    By linearity of $u(\cdot)$, we get
    \begin{equation*}
        u(\p)=\sum_j \lambda_j u(W_j).
    \end{equation*}
    With this, we get the following:
    \begin{equation*}
        \sum_j\lambda_j g(u(W_j)) \ge g\Big(\sum_j\lambda_j u(W_j)\Big)\ = g(u(\p)) \ > \ g(u(W)).
    \end{equation*}
    The first inequality follows from Jensen's inequality, since $g$ is convex.
    The last inequality comes from the strict monotonicity of $g$.
    Hence, there exists some $j$ with $g(u(W_j))> g(u(W))$.
    Therefore, $W$ cannot be a maximizer of $g(\cdot)$ among integral committees; a contradiction.
\end{proof}

\concaveWelfarist*
\begin{lemma}\label{lem:welfarist_bump}
    Suppose there exist a function $g: \R^n_{\geq 0} \rightarrow \R$, $x,y,z\in\Q^n_{>0}$ and $\lambda\in(0,1)$ such that $z < \lambda x+(1-\lambda)y$ coordinatewise and $g(z)>g(x),g(y)$. Then, the welfarist rule $f_g$ violates $\wfPO$. 
\end{lemma}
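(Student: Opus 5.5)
We build an explicit instance with $k=1$, following the sketch of \Cref{fig:strictly-concave-counterexample}, with one candidate per given vector and a fourth dummy candidate. Let $a,b,c$ be candidates with valuation vectors $(v_{ia})_{i\in N}=x$, $(v_{ib})_{i\in N}=y$ and $(v_{ic})_{i\in N}=z$. These are admissible valuations because $x,y,z\in\Q^n_{>0}\subseteq\Q^n_{\ge 0}$. Take $k=1$ and $C=\{a,b,c\}$. Then the committees are exactly $\{a\},\{b\},\{c\}$, with utility vectors $u(\{a\})=x$, $u(\{b\})=y$ and $u(\{c\})=z$.

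The argument then has two short steps.

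First, since $g(z)>g(x)$ and $g(z)>g(y)$, the unique maximizer of $g(u(W))$ over $W\in\Committees_1$ is $\{c\}$. Hence $f_g(I)=\{\{c\}\}$. No tie-breaking issue arises, because the inequalities are strict.

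Second, consider the fractional committee $\p$ with $\p(a)=\lambda$, $\p(b)=1-\lambda$ and $\p(c)=0$. It lies in $\Probs$, since $\lambda\in(0,1)$ and the entries sum to $1=k$. By linearity, $u(\p)=\lambda x+(1-\lambda)y$. By hypothesis this is strictly larger than $z=u(\{c\})$ in every coordinate, so $\p$ strictly improves every voter. Thus $\{c\}\notin\wfPO(I)$, and $f_g$ violates $\wfPO$.

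There is essentially no obstacle; the only points to check are these:
\begin{itemize}
\item the valuations are rational and non-negative, as the model requires, which the hypothesis $x,y,z\in\Q^n_{>0}$ guarantees;
\item the instance is well formed with $m=3\ge k=1$;
\item the lemma as stated does not need monotonicity of $g$, and none is used.
\end{itemize}
If one insists on the convention that welfarist rules use monotone $g$, the lemma is simply applied to such a $g$.

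The real difficulty lies in the subsequent proof of \Cref{theorem:concave_welfarist}, which must exhibit such $x,y,z$ for any strictly concave $g$. There the plan is as follows:
\begin{enumerate}
\item Pick rational $x\neq y$ in $\Q^n_{>0}$ with midpoint $m$. Strict concavity gives $g(m)>\tfrac12(g(x)+g(y))$.
\item Rewrite this as $g(m)-\max\{g(x),g(y)\}>0$ after suitably choosing $x,y$, for instance on a level set or by symmetrizing. Alternatively, replace $m$ by a point on the segment maximizing $g$ restricted to it.
\item Use continuity of concave functions on the open orthant to push the point slightly down to a rational $z<m$ while keeping $g(z)>\max\{g(x),g(y)\}$.
\end{enumerate}
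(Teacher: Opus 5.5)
Your proof is correct and is essentially the paper's own argument: a $k=1$ instance with three candidates whose utility vectors are $x$, $y$, $z$, where $f_g$ uniquely selects the $z$-candidate and the $(\lambda,1-\lambda)$ mixture of the other two strictly improves every voter. Your opening sentence mentions a ``fourth dummy candidate'' that never appears; this is a harmless slip, since the three-candidate instance you actually use is exactly right.
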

\begin{proof}
    We construct an instance with committee size $k=1$ and three candidates $a,b,d$ whose utility vectors are
    \[
        u(a)=z,\qquad u(b)=x,\qquad u(d)=y.
    \]

    Since $g(z)>g(x)$ and $g(z)>g(y)$, the rule $f_g$ uniquely selects $\{a\}$.
    However, the fractional committee assigning weight $\lambda$ to $b$, and $1-\lambda$ to $d$ induces the utility vector $\lambda x+(1-\lambda)y$
    which is strictly larger than $z=u(\{a\})$ in every coordinate. 
    Thus, $\{a\}$ is strictly fractionally Pareto-dominated, so $\{a\}\notin\wfPO(I)$. 
    Hence, $f_g$ violates $\wfPO$.
\end{proof}

Using the previous lemma, we can now give the proof of \Cref{theorem:concave_welfarist}.

\begin{proof}
    Since $g$ is strictly concave and finite on $\R^n_{\ge 0}$, it is continuous on the interior $\R^n_{>0}$.

    If $g$ is not strictly monotone, then there exist $x,y\in\R^n_{>0}$ with $x\le y$, $x\neq y$, and $g(x)\ge g(y)$.
    We claim that there even exist a pair $a,b \in\R^n_{>0}$ with $a \le b$, $a\neq b$, and $g(a) > g(b)$.
    If already $g(x) > g(y)$, we can just set $a = x$ and $b = y$. Otherwise, $g(x) = g(y)$. 
    
    Due to strict concavity, it holds that $g(\frac{1}{2}(x+y)) > \frac{1}{2}(g(x)+ g(y)) = g(y)$, and $\frac{1}{2}(x+ y) \leq y$ with $\frac{1}{2}(x+y) \neq y$. Therefore, choosing $a = \frac{1}{2}(x+ y)$ and $b = y$ satisfies the condition.
    By continuity, there exists an $\epsilon > 0$ such that $g(a) > g(b + \epsilon \mathbf{1})$. 

    Therefore, there exists a pair $a,b \in\R^n_{>0}$ with $a < b$ and $g(a) > g(b)$. Since both inequalities are strict and $g$ is continuous, by density of $\Q^n$ we may pick $a',b' \in \Q^n_{>0}$ with $a' < b'$ and $g(a') > g(b')$.
    In a $k=1$ instance with two candidates having utility vectors $a'$ and $b'$, the candidate $a'$ is strictly Pareto-dominated by $b'$, but is preferred by $f_g$.
    Thus, $f_g$ already violates $\wPO$, and hence also $\wfPO$ by \Cref{proposition:axiomatic_relationships}. 
    We may therefore assume that $g$ is strictly monotone.
 
    We now show that the condition of \Cref{lem:welfarist_bump} is satisfied.

    Since $n\ge 2$, let $\e_1,\e_2 \in \R^n$ be two distinct coordinate vectors. 
    Without loss of generality, assume that $g(\e_2 + \mathbf{1}) \geq   g(\e_1 + \mathbf{1})$.
    Now since $g$ is strictly monotone it holds that $g(\mathbf{1}) \leq g(\e_1 + \mathbf{1}) \leq g(\e_2 + \mathbf{1})$. 
    By the intermediate value theorem, there exists some $r \in [0,1]$ such that $g(r\e_2 + \mathbf{1}) = g(\e_1 + \mathbf{1})$.
    They are distinct because the first coordinate of $\e_1+\mathbf1$ equals $2$, while the first coordinate of $r\e_2+\mathbf1$ equals $1$.
    
    By strict concavity,
    \[
        g(\lambda (\e_1 + \mathbf{1}) + (1-\lambda)(r\e_2+ \mathbf{1}))>\lambda g(\e_1 +  \mathbf{1})+(1-\lambda)g(r\e_2 + \mathbf{1})=g(\e_1 +  \mathbf{1})=g(r\e_2+ \mathbf{1}).
    \]

    Choose some $\lambda \in (0,1)$, we here fix it at $\lambda = \frac{1}{2}$.
    In order to apply \Cref{lem:welfarist_bump} with rational utility vectors, we need to choose rational points $x,y,z\in\mathbb{Q}^n_{>0}$ satisfying the required inequalities. 
    To this end, choose $s \in \N$ such that 
    $g(\frac{1}{2} (\e_1 + \mathbf{1}) + \frac{1}{2}(r\e_2+ \mathbf{1})) > g(\e_1 +  \mathbf{1}) + \frac{1}{s}$. 

    By continuity of $g$, there exists a $z \in \Q^n_{>0}$ with $z < \frac{1}{2} (\e_1 + \mathbf{1}) + \frac{1}{2}(r\e_2+ \mathbf{1})$ and $g(z) > g(\frac{1}{2} (\e_1 + \mathbf{1}) + \frac{1}{2}(r\e_2+ \mathbf{1}))-\frac{1}{3s}$.
    Similarly, there exists an $x \in \Q^n_{>0}$ with $x > \e_1 + \mathbf{1}$ and $g(x) < g(\e_1 + \mathbf{1})+ \frac{1}{3s}$, and a $y \in \Q^n_{>0}$ with $y > r\e_2 + \mathbf{1}$ and $g(y) < g(r\e_2 + \mathbf{1})+ \frac{1}{3s}$.
    Each point is positive due to the following observation: $\e_1+\mathbf{1}$, $r\e_2+\mathbf{1}$, and their midpoint have all coordinates $\ge 1$, so any point within distance $<1$ stays positive.

    Thus, $z < \frac{1}{2} x + \frac{1}{2} y$ and $g(z) > g(x),g(y)$ and due to \Cref{lem:welfarist_bump}, $f_g$ violates $\wfPO$ and therefore also $\fPO$.
\end{proof}

We provide an example here for the approval-based setting, showing that concavity in itself is not incompatible with $\fPO$.
The construction uses a strictly concave function $g$ that always maximizes the committee's score, i.e., the committee with the highest approval score.

\begin{proposition}
    There exists a strictly concave function $g$ such that the corresponding welfarist rule $f_g$ satisfies $\fPO$ in the approval setting.
\end{proposition}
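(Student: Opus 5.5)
The plan is to build $g$ as the approval score plus a tiny strictly concave correction. The correction should be small enough that it never changes which committees are strict maximizers of total approval. Concretely, let $\mathrm{AV}(x)\coloneqq\sum_{i\in N}x_i$ and set
\[
    g(x)\coloneqq \sum_{i\in N} x_i \;-\; \varepsilon\sum_{i\in N} x_i^2
\]
on $\R^n_{\ge 0}$. Here $\varepsilon>0$ depends only on $n$ and $k$, say $\varepsilon\coloneqq \frac{1}{2(nk^2+1)}$. Since $-x_i^2$ is strictly concave, $g$ is strictly concave. In the approval setting every achievable utility vector lies in $\{0,\dots,k\}^n$. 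So if monotonicity is needed only on the relevant domain, we can restrict attention to $[0,k]^n$, where $\partial g/\partial x_i=1-2\varepsilon x_i>0$ for $\varepsilon<1/(2k)$. If the definition of a welfarist rule demands monotonicity on all of $\R^n_{\ge0}$, we instead replace $x_i-\varepsilon x_i^2$ by a strictly concave, strictly increasing function $\phi$ that agrees with it on $[0,k]$. An example is $\phi(t)=t-\varepsilon t^2$ for $t\le k$, continued by a strictly concave increasing tail such as $a-b e^{-t}$, glued in a $C^1$ fashion at $t=k$. Then $g(x)=\sum_i\phi(x_i)$ is strictly concave and strictly monotone everywhere.

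The core step is to show that every $W\in f_g(I)$ maximizes approval score. For any two committees $W,W'$, the AV parts of their $g$-values are integers. The correction terms satisfy $0\le\varepsilon\sum_i u_i^2\le \varepsilon nk^2<\frac12$. Hence $\mathrm{AV}(u(W'))>\mathrm{AV}(u(W))$ implies $\mathrm{AV}(u(W'))\ge \mathrm{AV}(u(W))+1$, and therefore $g(u(W'))>g(u(W))$. Consequently every $g$-maximizer is an AV-maximizer.

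Next, AV-maximizing committees are exactly $\PO^\weight(I)$ for the uniform weight $\weight=\frac1n\mathbf 1\in\Weights^+$. By \Cref{lemma:weights_to_scores}, these are the top-$k$ candidates by approval count. By \Cref{theorem:fractional_po_equal_weights}(i), every such committee is in $\fPO(I)$, so $f_g$ satisfies $\fPO$.

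The main obstacle is the compatibility between strict concavity, strict monotonicity on the whole domain, and the requirement that the perturbation not override AV. The discreteness of approval utilities in $\{0,\dots,k\}^n$ resolves the override issue via the integrality gap of $1$. The remaining care is in choosing $\varepsilon$ (or the tail of $\phi$) so that the function is genuinely increasing and strictly concave everywhere.
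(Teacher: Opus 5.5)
Your strategy is the paper's: add to the approval score $\sum_i x_i$ a strictly concave correction small enough that integrality of approval scores makes every $g$-maximizer an AV-maximizer, then conclude via the uniform weight vector and \Cref{theorem:fractional_po_equal_weights}. There is, however, a genuine gap in how you choose the correction. A welfarist rule $f_g$ applies one aggregation function $g\colon\R^n_{\ge0}\to\R$ to every instance, and in particular to every committee size. Your $\varepsilon=\frac{1}{2(nk^2+1)}$ and your gluing point $t=k$ are read off the instance's $k$, so what you have defined is a family $(g_k)_k$, not a single $g$.

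This cannot be repaired by fixing $\varepsilon$ and a gluing point $k_0$ independently of $k$ (they may depend on $n$). Your correction $-\varepsilon\sum_i x_i^2$, and likewise $\phi(t)-t$, is unbounded below, so the bound ``correction $<\frac12$ on all achievable utility vectors'' fails for large $k$. The conclusion fails as well. With fixed parameters, $f_g$ on approval instances is exactly the $h$-Thiele rule for $h=\phi|_{\N}$. On your tail $a-be^{-t}$ one has $\delta_h(u)/\delta_h(u+1)=e$ for every $u\ge k_0+1$, regardless of $\varepsilon$. The construction in the proof of \Cref{theorem:Thiele} uses a number of voters that depends only on this ratio (after rescaling $h$). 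It therefore yields an instance with committee size $u+1$ on which $f_g$ uniquely selects a committee that is not even $\wfPO$.

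The missing idea is to make the correction \emph{bounded}. A strictly concave function that is bounded below on $[0,\infty)$ is automatically strictly increasing, so this also removes the need for gluing. For example, take
$g(x)=\sum_{i\in N}\bigl(x_i+\varepsilon_i\frac{x_i}{1+x_i}\bigr)$ with $\varepsilon_i>0$ and $\sum_i\varepsilon_i<1$. This $g$ satisfies $\sum_i x_i\le g(x)<\sum_i x_i+1$ on all of $\R^n_{\ge0}$ and is strictly monotone and strictly concave there. Your integrality argument then works for every $k$ simultaneously, and the rest of your proof goes through unchanged. This is precisely the paper's proof, which takes $\varepsilon_i=2^{-(2i+2)}$ so that the bound holds uniformly in $n$ as well.
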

\begin{proof}
    Consider the function
    \[
        g(x)=\sum_{i\in N}\left(x_i+\frac{1}{2^{2i+2}}\frac{x_i}{1+x_i}\right).
    \]
    We first show that every committee selected by $f_g$ is also selected by approval voting. 
    For $x\in\mathbb{R}_{\ge 0}^n$, let $S(x):=\sum_{i\in N}x_i$. 
    Since $0\le x_i/(1+x_i)<1$ for every $i\in N$, we have
    \[
        S(x) \le g(x) = S(x)+\sum_{i\in N}\frac{1}{2^{2i+2}}\frac{x_i}{1+x_i} < S(x)+1,
    \]
    where the last inequality follows from $\sum_{i\in N}2^{-(2i+2)}<1$.
    
    Now let $W$ be a committee that is not selected by approval voting. 
    Then there exists a committee $W'$ with a strictly larger approval score. Since approval scores are integer-valued,
    \[
        S(u(W')) \ge S(u(W))+1.
    \]
    Therefore,
    \[
        g(u(W')) \ge S(u(W')) \ge S(u(W))+1 > g(u(W)).
    \]
    Thus, $W$ cannot maximize $g$. Hence, every committee selected by $f_g$ is also selected by approval voting.
    
    Approval voting maximizes the utilitarian objective $\sum_{i\in N}u_i(W)$, which corresponds to the strictly positive weight vector assigning equal weight to every voter. By the weighted characterization of $\fPO$, every committee selected by approval voting is fractionally Pareto-optimal. Hence $f_g$ satisfies $\fPO$.
    
    It remains to verify that $g$ is strictly concave. For each coordinate,
    \[
        \frac{\partial g}{\partial x_i}(x) = 1+\frac{1}{2^{2i+2}(1+x_i)^2},
    \]
    and the Hessian of $g$ is diagonal with entries
    \[
        H_{ii}(x) = -\frac{2}{2^{2i+2}(1+x_i)^3}.
    \]
    All off-diagonal entries are zero, and every diagonal entry is strictly negative for $x_i\ge 0$. Thus, the Hessian is negative definite on $\mathbb{R}_{\ge 0}^n$, and so $g$ is strictly concave. 
\end{proof}

\subsubsection{Thiele Rules}

\begin{lemma}
    If $h$ is not strictly monotone, then $h$-Thiele violates $\wPO$ for some approval profile $I = (A,k)$.     
    \label{lemma:notMonotoneNotPO} 
\end{lemma}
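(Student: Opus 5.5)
The plan is a direct construction. Since $h$ is not strictly monotone, there is some $u\in\N$ (with $u\geq 1$, since utilities start at $0$) such that $h(u)\leq h(u-1)$, i.e., $\delta_h(u)\le 0$. I will build an approval instance in which every voter's utility must be $u-1$ or $u$. In that instance, $h$-Thiele weakly prefers giving everyone utility $u-1$ over giving everyone utility $u$. The first committee is then strictly Pareto-dominated by the second.

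\textbf{Construction.} Take a single voter ($n=1$), committee size $k=u$, and candidates $C=\{c_1,\dots,c_{u-1},d,e\}$. Let $A_1=\{c_1,\dots,c_{u-1},d\}$, so candidate $e$ is unapproved. The committee $T=\{c_1,\dots,c_{u-1},d\}$ gives utility $u$. The committee $W=\{c_1,\dots,c_{u-1},e\}$ gives utility $u-1$. Every size-$u$ committee contains at most one unapproved candidate, so every committee has utility $u-1$ or $u$. Hence the $h$-Thiele score of any committee is $h(u-1)$ or $h(u)$, and $h(u-1)\ge h(u)$.

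\textbf{Reading off the selection.} Consider two cases.

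If $h(u-1)>h(u)$, the rule selects only committees with utility $u-1$, such as $W$. Each of these is strictly Pareto-dominated by $T$, which raises the utility of every voter (here the single voter) from $u-1$ to $u$. So $W\notin\wPO(I)$.

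If $h(u-1)=h(u)$, all committees tie, so $W$ is among the selected committees. Again $W\notin\wPO(I)$. With a set-valued rule, outputting a committee that is not $\wPO$ counts as a violation. If the convention requires every output committee to satisfy the axiom, this suffices. If it requires only some output committee to satisfy it, I would perturb the instance; but under a pure tie no perturbation within a single voter helps, so I will rely on the standard convention that a rule satisfies an axiom only if all selected committees do.

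\textbf{Main obstacle.} The main obstacle is this tie case together with the treatment of $\delta_h$ indexed only on $[k]$. I will make sure $u\le k$ by choosing $k=u$, as above.

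If the paper's notion of ``strictly monotone'' for $h$ concerns only the range $\{0,\dots,k\}$, the construction applies directly. The $n=1$ instance is a legitimate approval profile. If several voters are preferred, I can clone the voter, and the argument is unchanged.
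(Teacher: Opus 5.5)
Your construction is the same as the paper's. It uses a single voter, $k=u$ (the paper's $x+1$), $u$ approved candidates and one unapproved candidate. Every committee then scores $h(u-1)$ or $h(u)$, and a committee containing the unapproved candidate is selected although it is strictly Pareto-dominated. In the tie case $h(u-1)=h(u)$ you state explicitly that every selected committee must satisfy the axiom; the paper relies on this same convention implicitly.
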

\begin{proof}
    Since $h$ is not strictly monotone, there exists $x \in \N$ such that  $h(x) \geq h(x+1)$. Consider a one voter instance $N = \{1\}$, $k = x+1$, $C = \{c_1, \dots, c_{x+1}\} \cup \{c'\}$, $A_1 = \{c_1, \dots, c_{x+1}\}$, so the candidate $c'$ is unapproved. Then, the committee $W' = \{c', c_1, \dots, c_x\}$ is selected by $h$-Thiele, since the score of any committee is either $h(x)$ or $h(x+1)$. But clearly $W'$ is strictly Pareto-dominated by $W = \{c_1, \dots, c_{x+1}\}$. Thus, $W' \notin \wPO(I)$.     
\end{proof}

\ThieleFPOviolation*
\begin{proof}
    We can assume that $h$ is strictly increasing since otherwise the rule does not even satisfy $\wPO$ due to \Cref{lemma:notMonotoneNotPO} and, therefore, also not $\wfPO$.

    Since $\delta_h(u) > \delta_h(u+1)$ by assumption, there exists some large enough $x \in \N$ such that 
    \begin{equation*}
        \delta_h(u) > \delta_h(u+1)+\frac{3}{x}
    \end{equation*}
    Further, without loss of generality, we assume that $\delta_h(u) \leq 1$ since scaling $h$ does not affect the outcome of the $h$-Thiele rule.

    We now construct an instance based on the example in \Cref{figure:counter_example_weak_fPO}.
    
    Let $N = [6\cdot x \cdot {5 \choose 3}]$, and $C = \{c_1, c_2\} \cup C_{F} \cup C_{\text{opt}}$.
    We let $\lvert C_{\text{opt}}\rvert := u-1$ and 
    for every $c \in C_{\text{opt}}$, set $N_c := N$, where $N_c$ denotes the set of voters approving candidate $c$.  
    Next, we set $N_{c_1} = [\frac{n}{2}]$ and $N_{c_2} = N \setminus N_{c_1}$.
    Finally, let $C_F = C_F^1 \cup C_F^2 \cup C_F^3 \cup C_F^4 \cup C_F^5$, with 
    \begin{align*}
        C_F^i &= \{c_{A}^i \colon A \subseteq D_i, \lvert A \rvert =  30x+6\} \quad \text{for } i \in [5], \text{ where}\\
        D_1 &= \{i \cdot 6x+j \colon i \in \{0,1,2,7,8,9\}, j \in [6x]\}, \\
        D_2 &= \{i \cdot 6x+j \colon i \in \{0, 2, 4, 5, 6, 9\}, j \in [6x] \}, \\
        D_3 &= \{i \cdot 6x+j \colon i \in \{0,1, 3, 5, 6, 7\}, j \in [6x]\}, \\
        D_4 &= \{i \cdot 6x+j \colon i \in \{1, 3, 4, 6, 8, 9\}, j \in [6x] \}, \\
        D_5 &= \{i \cdot 6x+j \colon i \in \{2, 3, 4, 5, 7, 8\}, j \in [6x] \}, 
    \end{align*}
    and $N_{c_{A}^i} = A$. 
    
    We now claim that $h$-Thiele selects the committee $W$ with $W = C_{\text{opt}} \cup \{c_1,c_2\}$ whenever $k = u+1$. 
    It is easy to see that any optimal committee selects all candidates in $C_{\text{opt}}$.
    Further, observe that both $c_1$ and $c_2$ are approved by precisely half the voters, which corresponds to $\frac12 \cdot 6 \cdot {5 \choose 3} \cdot x = 30 \cdot x$ voters for each candidate. 
    Each candidate in $C_F$ is approved by precisely $30x + 6$ voters.

    Using the two observations above, we compare the score of $W$ with an arbitrary committee
    \[
        W'=C_{\text{opt}}\cup\{d,d'\}
    \]
    with $\{d,d'\}\neq\{c_1,c_2\}$.
    Let $a$ be the number of voters who approve neither $d$ nor $d'$, and let $b$ be the number of voters who approve both $d$ and $d'$.
    By construction, $a\ge 6x$ and $b\le a+12$, where the bound $b\le a+12$ follows from the approval-set sizes: if $d,d'\in C_F$, then $|N_d|+|N_{d'}|=2(30x+6)=n+12$, and if one of them is $c_1$ or $c_2$, then $|N_d|+|N_{d'}|\le n+6$.
    Moreover, $a\ge 6x$ because the block structure ensures that any pair $\{d,d'\}\neq\{c_1,c_2\}$ leaves at least one block of size $6x$ uncovered.
    
    Under $W$, every voter has utility $u$.
    Under $W'$, voters approving neither $d$ nor $d'$ have utility $u-1$, voters approving exactly one of them have utility $u$, and voters approving both have utility $u+1$.
    As a result, 
    \begin{align*}
    \operatorname{sc}_h(W)-\operatorname{sc}_h(W')
    =\;& n h(u)-\Bigl(a h(u-1)+b h(u+1)+(n-a-b)h(u)\Bigr)\\
    =\;& a\bigl(h(u)-h(u-1)\bigr)-b\bigl(h(u+1)-h(u)\bigr)\\
    =\;& a\delta_h(u)-b\delta_h(u+1)\\
    \ge\;& a\delta_h(u)-(a+12)\delta_h(u+1)\\
    =\;& a\bigl(\delta_h(u)-\delta_h(u+1)\bigr)-12\delta_h(u+1)\\
    \ge\;& 6x\bigl(\delta_h(u)-\delta_h(u+1)\bigr)-12\delta_h(u+1).
    \end{align*}
    Since $x$ was chosen such that $\delta_h(u)-\delta_h(u+1)>3/x$ and $\delta_h(u) \leq 1$, it follows that
    \[
    \operatorname{sc}_h(W)-\operatorname{sc}_h(W') > 6x\cdot\frac{3}{x}-12
    = 6 > 0.
    \]
    Thus, $W$ has a strictly larger $h$-Thiele score than every committee that differs from $W$ in the two non-universal candidates.
    Since every optimal committee contains all candidates in $C_{\mathrm{opt}}$, $W$ is the unique committee selected by $h$-Thiele.

    Nevertheless, uniformly selecting every candidate in $C_F$ with $\frac{2}{\lvert C_F \rvert}$ and all candidates in $C_{\text{opt}}$ with $1$ leads to a committee in which every voter approves 
    \begin{equation*}
        u-1 + \frac{2(30x+6)}{60x} = u + \frac{12}{60x}.
    \end{equation*} 
    This can be most easily seen by the fact that each voter approves a fraction of $\frac{3}{5}\frac{5x+1}{6x}  = \frac{1}{2} + \frac{1}{10x}$ many candidates, while for $\{c_1, c_2\}$ this is only $1/2$. 

    Since all candidates in $C_F$ have a normalized weight of two on them, each voter has a utility of strictly greater than $1$ for this fractional committee, whereas they have a utility of $1$ in the committee $\{c_1, c_2\}$. 
    Therefore, $W$ is not in $\wfPO$ and, thus, the $h$-Thiele rule violates $\wfPO$.
\end{proof}

\corrolaryCharacterization*
\begin{proof}
    Assume that $h$ satisfies $\delta_h(u+1) \geq \delta_h(u)$ and is strictly monotone.
    Define the function $g: \R_{\geq0} \rightarrow \R$ with 
    \[
        g(x) = \begin{cases}
            h(x), x\in \N, \\
            \lambda h(r+1) + (1- \lambda)h(r), x=r + \lambda \text{ and } \lambda \in (0,1) 
        \end{cases}
    \]
    
    We claim that this function is convex. 
    First, observe that since $\delta_h(u+1) \geq \delta_h(u)$ for all $u \in \N$ it holds that $\delta_h(x) \geq \delta_h(y)$ for all $x,y \in \N$ with $x \geq y$.
    
    Observe that the right derivative of this function $g$ of any point $u \in \R_{\geq 0}$ is equal to $\delta_h(\lfloor u \rfloor)$. Since by assumption the right derivative is non-decreasing the statement follows.
    
    Second, we claim that $g$ is strictly monotone.
    Since $h$ is strictly monotone it follows that $\delta_h(u) > 0$ for all $u \in \N$. 
    Thus, the derivative is strictly positive and thus $g$ strictly monotone.
    
    Thus, by \Cref{th:convexWelfarist}, $g$ and, therefore, $h$-Thiele satisfies $\fPO$.
    On the other hand, assume that $h$ is not strictly monotone. Then $h$-Thiele violates $\fPO$ which follows from \Cref{lemma:notMonotoneNotPO}.
    Assume that $h$ violates $\delta_h(u+1) \geq \delta_h(u)$ for some $u \in \N$. Then, the violation follows from \Cref{theorem:Thiele}.
\end{proof}

\subsection{Restricted Domains}

\thmLC*
Before giving the proof, we prove some intermediate steps.
\begin{lemma}\label{lem:lc-mono}
    Let $I=(A,k)$ be an LC instance with respect to the order $\sqsupset$, and let $c,d\in C$ with $c\sqsupset d$.
    Then, for every voter $i\in N_d\setminus N_c$ and every voter $j\in N_c$, it holds that $j\sqsupset i$.
\end{lemma}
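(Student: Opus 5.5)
The plan is a direct contradiction argument that uses only the defining implication of linear consistency and the totality of $\sqsupset$. Fix $c \sqsupset d$, a voter $i \in N_d \setminus N_c$ and a voter $j \in N_c$. Since $\sqsupset$ is a linear order on $N \cup C$ and $i \neq j$ (because $j$ approves $c$ while $i$ does not), exactly one of $j \sqsupset i$ or $i \sqsupset j$ holds. It therefore suffices to rule out $i \sqsupset j$.

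So suppose $i \sqsupset j$. We apply the LC condition to the voter pair $i \sqsupset j$ and the candidate pair $c \sqsupset d$. Its hypotheses are $c \in A_j$ and $d \in A_i$. The first holds because $j \in N_c$, and the second because $i \in N_d$. The condition then yields $c \in A_i$, i.e.\ $i \in N_c$, which contradicts $i \notin N_c$. Hence $j \sqsupset i$.

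The only step needing any care is checking that the roles of the variables in the LC definition line up with the present situation. The definition quantifies over voters ``$i \sqsupset j$'' and candidates ``$c \sqsupset d$'' and concludes something about the \emph{higher} voter and the \emph{higher} candidate. Here the assumed-higher voter is our $i$ and the higher candidate is $c$, so the conclusion $c \in A_i$ is exactly what is needed. No further case distinction (for instance on the position of candidates relative to voters in $\sqsupset$) is required, since the LC implication holds for all such pairs regardless of how voters and candidates interleave.
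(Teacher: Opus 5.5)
Your proof is correct and follows essentially the same route as the paper: use $i \neq j$ and totality of $\sqsupset$ to reduce to ruling out $i \sqsupset j$, then apply the LC implication to the voters $i \sqsupset j$ and the candidates $c \sqsupset d$ to obtain $c \in A_i$, contradicting $i \notin N_c$. Your check that the variables in the LC definition line up correctly is also right.
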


\begin{proof}
    Assume for contradiction that there exist voters $i\in N_d\setminus N_c$ and $j\in N_c$ such that $i\sqsupseteq j$.
    Since $i\notin N_c$, while $j\in N_c$, we have $i\neq j$, and therefore $i\sqsupset j$.
    Now, using $c\sqsupset d$, $i\sqsupset j$, $j\in N_c$, and $i\in N_d$, the LC property implies $i\in N_c$.
    This contradicts $i\in N_d\setminus N_c$.
    Hence, every voter in $N_c$ must precede $i$ in the LC order.
\end{proof}
 
\begin{definition}
    For disjoint sets $S, U \subseteq C$ with $|S| = |U|$, we say $S$ \emph{covers} $U$ if $|S \cap A_i| \geq |U \cap A_i|$ for all $i \in N$ with one inequality being strict. A pair $(S, U)$ is a \emph{minimal covering pair} if (i) $S$ covers $U$, and (ii) no pair $(S', U')$ with $S' \subsetneq S$, $U' \subsetneq U$, $|S'| = |U'|$ exists such that $S'$ covers $U'$.
\end{definition}

\begin{restatable}{lemma}{lemLCCover}
\label{lem:lc-cover}
    Let $I = (A,k)$ be an LC instance with order $\sqsupset$, and let $(S, U)$ be a minimal covering pair. Let $c \in U$ be the $\sqsupset$-maximal candidate in $U$. Then, there exists a candidate $d \in S$ such that $N_c \subseteq N_d$.
\end{restatable}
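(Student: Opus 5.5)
We argue by contradiction: assume that no $d\in S$ satisfies $N_c\subseteq N_d$, so every $d\in S$ misses some voter of $N_c$. The goal is to exhibit a voter $i$ with $|U\cap A_i|>|S\cap A_i|$. This contradicts $S$ covering $U$. If a direct contradiction is out of reach, we aim instead at a strictly smaller covering pair, which contradicts minimality.

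\textbf{Step 1: split $S$ by the position of its candidates relative to $c$.} Let $S^{\uparrow}=\{d\in S: d\sqsupset c\}$ and $S^{\downarrow}=\{d\in S: c\sqsupset d\}$; these cover $S$ because $S\cap U=\emptyset$. Apply \Cref{lem:lc-mono} in both orientations:
\begin{itemize}
\item For $d\in S^{\uparrow}$, the lemma with the roles of $c$ and $d$ swapped gives that every voter in $N_c\setminus N_d$ lies strictly below every voter of $N_d$. By our assumption $N_c\setminus N_d\neq\emptyset$. Hence the $\sqsupset$-lowest voter $i^\ast$ of $N_c$ lies below all of $N_d$, so $i^\ast\notin N_d$.
\item For $d\in S^{\downarrow}$, the lemma gives that $N_d\setminus N_c$ lies entirely below $N_c$. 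So $N_d$ is ``$N_c$ shifted downward'', and by assumption it also misses a voter of $N_c$.
\end{itemize}

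\textbf{Step 2: handle $U$ using the maximality of $c$.} Every other $u\in U$ satisfies $c\sqsupset u$. The LC implication then says: if a higher voter approves the lower candidate $u$ and a lower voter approves $c$, the higher voter approves $c$. Hence voters of $N_u$ above some voter of $N_c$ are themselves in $N_c$. This gives a monotone, nested structure of $U$-approvals near the bottom of $N_c$. We would use it to count, at a suitably chosen extremal voter, the $U$-candidates she approves versus the $S$-candidates she approves.

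\textbf{Step 3: choose the witness voter.} First try $i^\ast$, the lowest voter of $N_c$. By Step 1 she approves no candidate of $S^{\uparrow}$, and she approves $c$. It remains to charge each $d\in S^{\downarrow}$ that $i^\ast$ approves to a distinct $u\in U\cap A_{i^\ast}$. For this we combine the downward-shift structure of $S^{\downarrow}$ with the nesting from Step 2. If this charging fails, we use minimality instead: take $U'$ to be the candidates of $U$ on one side of a threshold voter and $S'$ the corresponding part of $S^{\downarrow}$, and show that $(S',U')$ is a smaller covering pair.

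\textbf{Main obstacle.} The hard part is the candidates in $S^{\downarrow}$. The LC lemma only constrains them weakly: they can approve $i^\ast$ while still missing higher voters of $N_c$. The first thing to try is an exchange or induction on $|S^{\downarrow}|$, pairing each $d\in S^{\downarrow}$ with the $\sqsupset$-largest unmatched $u\in U$ below it, and checking that the covering inequalities restricted to these pairs are preserved.
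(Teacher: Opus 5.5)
\paragraph{There is a genuine gap: Step 3 is never carried out.}
Your Steps 1 and 2 are correct consequences of \Cref{lem:lc-mono} and the LC condition. However, the proposal stops exactly where the proof has to happen. Step 3 only names a charging of $S^{\downarrow}\cap A_{i^\ast}$ into $U\cap A_{i^\ast}$, with an unspecified ``threshold'' pair $(S',U')$ as a fallback. Your closing paragraph concedes that the treatment of $S^{\downarrow}$ is still open.

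Nothing you have established supports that charging:
\begin{itemize}
\item For $d\in S^{\downarrow}$, the contradiction hypothesis only yields some voter of $N_c\setminus N_d$. That voter may lie strictly above the lowest voter $i^\ast$ of $N_c$, so $d$ may well approve $i^\ast$.
\item Your Step 2 fact only constrains voters of $N_u$ that lie \emph{above} some voter of $N_c$: they must belong to $N_c$. It says nothing about whether $i^\ast$ itself approves some $u\in U\setminus\{c\}$.
\end{itemize}
So the ``nested structure near the bottom of $N_c$'' you want to use is not available. At $i^\ast$, nothing you proved relates $|S^{\downarrow}\cap A_{i^\ast}|$ to $|(U\setminus\{c\})\cap A_{i^\ast}|$. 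Note also that to contradict covering, the injection must avoid $c$, since covering permits equality at $i^\ast$.

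\paragraph{The missing idea: work at the top of $N_c$ and construct $d$ directly.}
Rather than arguing against all of $S$ at once from the bottom voter, proceed as follows.
\begin{itemize}
\item Let $i^\ast$ be the $\sqsupset$-maximal voter of $N_c$, and let $d$ be the $\sqsupset$-maximal candidate of $S\cap A_{i^\ast}$. This set is nonempty because $S$ covers $U$ and $c\in U\cap A_{i^\ast}$.
\item Suppose some $i\in N_c\setminus N_d$ exists. Covering gives some $e\in S\cap A_i$ with $e\neq d$.
\item If $e\sqsupset d$: then $i^\ast\in N_d\setminus N_e$ by the choice of $d$. \Cref{lem:lc-mono} forces $i\sqsupset i^\ast$, contradicting the choice of $i^\ast$.
\item If $d\sqsupset e$: \Cref{lem:lc-mono} places every voter of $N_d$ strictly above $i\in N_c$. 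Your Step 2 observation then shows that no voter of $N_d\setminus N_c$ approves any candidate of $U\setminus\{c\}$. For all other voters, removing the pair $(d,c)$ cannot break a covering inequality. Hence $(S\setminus\{d\},U\setminus\{c\})$ still covers. The inequality is strict at $i$, which approves $c$ but not $d$. This contradicts minimality.
\end{itemize}
Thus minimality enters only through removing the single pair $(d,c)$, not through a global charging, and the bottom voter of $N_c$ plays no role.
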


\begin{proof}
    First, let us assume that $c$ is approved by at least one voter; otherwise, the statement holds trivially. 
    Thus, let $i^*$ be the $\sqsupset$-maximal voter in $N_c$. 
    Since $S$ covers $U$ and $c \in U \cap A_{i^*}$ holds, we have $S \cap A_{i^*} \neq \emptyset$. 
    Further, define $d$ to be the $\sqsupset$-maximal candidate in $S \cap A_{i^*}$.
    
    We claim that $N_c \subseteq N_d$ holds.
    Suppose, for the sake of contradiction, that there exists some voter $i \in N_c \setminus N_d$. 
    By $\sqsupset$-maximality of $i^*$ in $N_c$, we have $i \sqsubseteq i^*$ and as $i^* \in N_c \cap N_d$, we even get $i \sqsubset i^*$.
    Since $S$ covers $U$ and $i$ approves $c \in U$, we know that $i$ has to approve at least one candidate in $S$, say $e$. 
    As $i \notin N_d$, we have $e \neq d$. We now make a case distinction based on the order of $d$ and $e$.
     
    \medskip\noindent\textit{Case 1: $e \sqsupset d$.}\quad
    Since $d$ is the $\sqsupset$-maximal candidate in $S \cap A_{i^*}$, we have $i^* \notin N_e$. 
    Applying \Cref{lem:lc-mono} with the pair $e \sqsupset d$, by $i^* \in N_d$, $i^* \notin N_e$ and $i \in N_e$, we get that $i \sqsupset i^*$. 
    This contradicts $\sqsupset$-maximality of $i^*$ for voters in $N_c$.
     
    \medskip\noindent\textit{Case 2: $d \sqsupset e$.}\quad
    Since $i \in N_e$ and $i \notin N_d$, applying \Cref{lem:lc-mono}
    for any voter $j \in N_d$ yields $j \sqsupset i$.
    Set $S' \coloneqq S \setminus \{d\}$ and $U' \coloneqq U \setminus \{c\}$. In the following, we show that $S'$ covers $U'$, contradicting the minimality of $(S, U)$.
    For any voter $j$ with $d\notin A_j$ or $c\in A_j$, deleting $c$ from $U$ reduces $\lvert U\cap A_j\rvert$ by at least as much as deleting $d$ from $S$ reduces $\lvert S\cap A_j\rvert$. 
    Since $S$ covers $U$, it follows that
    \[
        \lvert U'\cap A_j\rvert \le \lvert S'\cap A_j\rvert.
    \]

    Hence, it remains to consider the voters who approve $d$, but not $c$. 
    Let $j$ be such a voter.
    We claim that $j \notin N_b$ for every $b \in U'$. 
    Since $c$ is the $\sqsupset$-maximal candidate in $U$, it suffices to consider $b \in U'$ with $b \sqsubset c$. 
    Assume for the sake of contradiction that $j \in N_b$ for some candidate $b$.
    We apply the LC condition with voters $j \sqsupset i$ and 
    candidates $c \sqsupset b$. 
    Since $c \in A_i$ and $b \in A_j$, it follows $c \in A_j$, contradicting our previous assumption that $c \notin A_j$.  
    Thus, no such voter approves any candidate in $U'$, yielding $\lvert U' \cap A_j \rvert \leq \lvert S' \cap A_j \rvert$. 
    
    Finally, we claim that voter $i$ approves strictly more candidates in $S'$ than in $U'$. 
    Since $S$ covers $U$, the inequality $\lvert U \cap A_i \lvert \leq \lvert S \cap A_i \rvert$ holds trivially. However, since $i \in N_c \setminus N_d$, we derive
    \begin{equation*}
        \lvert U' \cap A_i \rvert = \lvert U \cap A_i \rvert -1 < \lvert S \cap A_i \rvert = \lvert S' \cap A_i \rvert,
    \end{equation*} 
    showing that $i$ gets strictly higher approval score from $S'$ than from $U'$.
    This contradicts the minimality of the covering pair $(S, U)$. 
    
    As both cases lead to a contradiction, no voter $i\in N_c\setminus N_d$ exists. Hence $N_c\subseteq N_d$.
\end{proof}

Using \Cref{lem:lc-cover}, we can finally prove \Cref{thm:lc-sdo}.
\begin{proof}
    We first prove that every LC instance satisfies the SDO property.
    Assume that we have an LC instance $I$ together with two committees $W$ and $W'$ such that $W'$ Pareto-dominates $W$. 
    Let $S = W' \setminus W$ and $U = W \setminus W'$. 
    Consider a minimal covering pair $(S_0, U_0)$ which exists since $S$ covers $U$.  
    By \Cref{lem:lc-cover}, there exist candidates $c \in U_0$ and $d \in S_0$ with $N_c \subseteq N_d$. 
    If $N_c=N_d$, then $(S_0 \setminus \{d\}, U_0 \setminus \{c\})$ is still a cover, contradicting the minimality of $(S_0, U_0)$. 
    Hence, there exists a candidate $c \in W$ that is Pareto-dominated by a candidate $d \in C \setminus W$.
    The other direction of the SDO equivalence is trivial since if there exists a candidate $c \in W$ that is Pareto-dominated by $d \in C \setminus W$, then $W$ is Pareto-dominated by $(W\setminus \{c\}) \cup \{d\}$.
    
    It remains to show that the domain is closed under cloning. 
    Let $I$ be an instance satisfying LC with respect to the linear order $\sqsupset$, and fix $r\in\mathbb{N}_{>0}$. 
    Consider the $r$-cloned instance $rI$, and define the order $\sqsupset_r$ on the cloned candidates by placing the clones of each candidate consecutively as follows
    \[
        c^s \sqsupset_r d^t
        \quad\Longleftrightarrow\quad
        c\sqsupset d
        \text{ or }
        \bigl(c=d \text{ and } s<t\bigr).
    \]
    We show that $rI$ satisfies LC with respect to $\sqsupset_r$.
    
    Let $i,j\in N$ be voters and let $a,b$ be cloned candidates such that
    \[
        i\sqsupset j,
        \qquad
        a\sqsupset_r b,
        \qquad
        a\in A_j^{rI},
        \qquad
        b\in A_i^{rI}.
    \]
    We need to show that $a\in A_i^{rI}$.
    Write $a=c^s$ and $b=d^t$ for original candidates $c,d\in C$ and clone indices $s,t\in[r]$.
    
    If $c\neq d$, then $a\sqsupset_r b$ implies $c\sqsupset d$. 
    Moreover, $a=c^s\in A_j^{rI}$ implies $c\in A_j$ in the original instance, and $b=d^t\in A_i^{rI}$ implies $d\in A_i$. 
    Since $I$ satisfies LC, we obtain $c\in A_i$. 
    Thus, voter $i$ approves every clone of $c$ in $rI$, and in particular they approve $a\in A_i^{rI}$.
    
    If $c=d$, then $a$ and $b$ are clones of the same original candidate. 
    All clones have identical approval sets in $rI$, so $ N_a = N_b$. 
    Since $b\in A_i^{rI}$, it follows that $a\in A_i^{rI}$.
    
    Hence, $rI$ satisfies LC with respect to $\sqsupset_r$, and we may conclude that the LC domain is closed under cloning.
\end{proof}

\propTwoEuc*
\begin{proof}
  Consider the instance $I$ depicted in~\Cref{figure:counter_example_PO_fPO_small}.
  We have argued that the committee $W = \{c_1, c_2\}$ satisfies $\PO$ but not $\fPO$ for this instance. It therefore suffices to show that $I$ is realizable in the 2-Euclidean domain, that is, each voter and each candidate can be represented by a closed disk in $\R^2$ such that voter $i$ approves candidate $c$ if and only if their two disks intersect. One can verify that the assignment in \Cref{fig:circles} constitutes such a representation.
\end{proof}

\begin{figure}[t]
\begin{minipage}[t]{0.59\textwidth}
\vspace{0pt}
    \centering
    \begin{tikzpicture}[scale=0.5,
        cand/.style ={draw=blue!70!black, fill=blue!15, fill opacity=0.30, thick},
        voter/.style={draw=red!70!black, dashed, thick},
        isect/.style={fill=orange, fill opacity=0.55},   
        clab/.style ={blue!60!black, font=\small},
        vlab/.style ={red!60!black,  font=\small},
        axis/.style={->, thick},
        dot/.style  ={circle, fill=black, inner sep=1pt}]
    
      \coordinate (c1) at ( 1.517, 3.922);
      \coordinate (c2) at (-0.513,-3.703);
      \coordinate (c3) at (-0.008,-0.585);
      \coordinate (c4) at ( 3.124, 0.947);
      \coordinate (c5) at (-3.146, 1.334);
      \coordinate (v1) at ( 1.613, 1.375);
      \coordinate (v2) at (-0.782, 2.185);
      \coordinate (v3) at ( 0.021, 3.270);
      \coordinate (v4) at (-3.674,-2.071);
      \coordinate (v5) at ( 2.941,-2.698);
      \coordinate (v6) at ( 1.155,-2.638);
      \coordinate (v7) at (-2.248,-1.339);
    
    \draw[axis] (-6,0) -- (6,0) node[below] {$x$};
    \draw[axis] (0,-6) -- (0,6) node[left] {$y$};
    
    \foreach \x in {-5,-4,-3,-2,-1,1,2,3,4,5}
        \draw (\x,0.15) -- (\x,-0.15) node[below=2pt] {\tiny \x};
    \foreach \y in {-5,-4,-3,-2,-1,1,2,3,4,5}
        \draw (0.15,\y) -- (-0.15,\y) node[left=2pt] {\tiny \y};
    
      \draw[cand] (c1) circle (1.521);
      \draw[cand] (c2) circle (2.084);
      \draw[cand] (c3) circle (1.515);
      \draw[cand] (c4) circle (2.137);
      \draw[cand] (c5) circle (1.972);
    
      \draw[voter] (v1) circle (1.328);
      \draw[voter] (v2) circle (1.660);
      \draw[voter] (v3) circle (2.040);
      \draw[voter] (v4) circle (1.774);
      \draw[voter] (v5) circle (1.813);
      \draw[voter] (v6) circle (1.145);
      \draw[voter] (v7) circle (1.148);
    
      \foreach \a/\ra/\b/\rb in {%
        v1/1.328/c1/1.521, v1/1.328/c3/1.515, v1/1.328/c4/2.137,
        v2/1.660/c1/1.521, v2/1.660/c3/1.515, v2/1.660/c5/1.972,
        v3/2.040/c1/1.521, v3/2.040/c4/2.137, v3/2.040/c5/1.972,
        v4/1.774/c2/2.084, v4/1.774/c5/1.972,
        v5/1.813/c2/2.084, v5/1.813/c4/2.137,
        v6/1.145/c2/2.084, v6/1.145/c3/1.515,
        v7/1.148/c2/2.084, v7/1.148/c3/1.515, v7/1.148/c5/1.972}{%
        \begin{scope}
          \clip (\a) circle (\ra);
          \fill[isect] (\b) circle (\rb);
        \end{scope}}
    
      \foreach \n in {c1,c2,c3,c4,c5,v1,v2,v3,v4,v5,v6,v7} \node[dot] at (\n) {};
    
      \node[clab] at (c1) [above right] {$c_1$};
      \node[clab] at (c2) [left]       {$c_2$};
      \node[clab] at (c3) [above right] {$c_3$};
      \node[clab] at (c4) [right]       {$c_4$};
      \node[clab] at (c5) [left]        {$c_5$};
    
      \node[vlab] at (v1) [above right] {$1$};
      \node[vlab] at (v2) [above left]  {$2$};
      \node[vlab] at (v3) [above right] {$3$};
      \node[vlab] at (v4) [below left]  {$4$};
      \node[vlab] at (v5) [below right] {$5$};
      \node[vlab] at (v6) [below]       {$6$};
      \node[vlab] at (v7) [below left]  {$7$};
    
      \begin{scope}[shift={(4.6,4.4)}]
        \draw[cand]  (0,0) circle (0.28);
        \node[anchor=west, font=\small] at (0.5,0) {candidate};
        \draw[voter] (0,-0.8) circle (0.28);
        \node[anchor=west, font=\small] at (0.5,-0.8) {voter};
        \fill[isect] (-0.28,-1.6) rectangle (0.28,-1.32);
        \node[anchor=west, font=\small] at (0.5,-1.46) {approval (overlap)};
      \end{scope}
    \end{tikzpicture}
\end{minipage}%
\hfill
\raisebox{-0.5cm}{%
\begin{minipage}[t]{0.4\textwidth}
\vspace{0pt}
    \centering
    \begin{tabular}{c rrr}
        \toprule
        candidate/voter & $x$ & $y$ & $r$ \\
        \midrule
        $c_1$ & $1.517$  & $3.922$  & $1.521$ \\
        $c_2$ & $-0.513$ & $-3.703$ & $2.084$ \\
        $c_3$ & $-0.008$ & $-0.585$ & $1.515$ \\
        $c_4$ & $3.124$  & $0.947$  & $2.137$ \\
        $c_5$ & $-3.146$ & $1.334$  & $1.972$ \\
        \midrule
        $1$ & $1.613$  & $1.375$  & $1.328$ \\
        $2$ & $-0.782$ & $2.185$  & $1.660$ \\
        $3$ & $0.021$  & $3.270$  & $2.040$ \\
        $4$ & $-3.674$ & $-2.071$ & $1.774$ \\
        $5$ & $2.941$  & $-2.698$ & $1.813$ \\
        $6$ & $1.155$  & $-2.638$ & $1.145$ \\
        $7$ & $-2.248$ & $-1.339$ & $1.148$ \\
        \bottomrule
    \end{tabular}
\end{minipage}}

\caption{A visualization of a $2$D-Euclidean instance in which fPO and PO do not coincide. The coordinates of the candidates and voters for the embedding, together with the disc's radius, are listed in the table on the right-hand side. }
\label{fig:circles}
\end{figure}
\end{document}

%% file: paper_figures.tex
\newcommand{\examplesAxiomaticRelation}{
\begin{figure}[!ht]
\begin{minipage}[b]{0.45\textwidth}
    \centering
    \scalebox{0.8}{
    \begin{tikzpicture}[yscale=0.6,xscale=0.8,voter/.style={anchor=south}]
                  \draw[fill=blue!40] (0, 6) rectangle (1, 7);
      \draw[fill=gray!10] (1, 6) rectangle (2, 7);
      \draw[fill=blue!40] (2, 6) rectangle (3, 7);
      \draw[fill=blue!40] (3, 6) rectangle (4, 7);
      \draw[fill=gray!10] (4, 6) rectangle (5, 7);

      \draw[fill=blue!40] (0, 5) rectangle (1, 6);
      \draw[fill=gray!10] (1, 5) rectangle (2, 6);
      \draw[fill=blue!40] (2, 5) rectangle (3, 6);
      \draw[fill=gray!10] (3, 5) rectangle (4, 6);
      \draw[fill=blue!40] (4, 5) rectangle (5, 6);

      \draw[fill=blue!40] (0, 4) rectangle (1, 5);
      \draw[fill=gray!10] (1, 4) rectangle (2, 5);
      \draw[fill=gray!10] (2, 4) rectangle (3, 5);
      \draw[fill=blue!40] (3, 4) rectangle (4, 5);
      \draw[fill=blue!40] (4, 4) rectangle (5, 5);

      \draw[fill=gray!10] (0, 3) rectangle (1, 4);
      \draw[fill=blue!40] (1, 3) rectangle (2, 4);
      \draw[fill=gray!10] (2, 3) rectangle (3, 4);
      \draw[fill=gray!10] (3, 3) rectangle (4, 4);
      \draw[fill=blue!40] (4, 3) rectangle (5, 4);

      \draw[fill=gray!10] (0, 2) rectangle (1, 3);
      \draw[fill=blue!40] (1, 2) rectangle (2, 3);
      \draw[fill=gray!10] (2, 2) rectangle (3, 3);
      \draw[fill=blue!40] (3, 2) rectangle (4, 3);
      \draw[fill=gray!10] (4, 2) rectangle (5, 3);

      \draw[fill=gray!10] (0, 1) rectangle (1, 2);
      \draw[fill=blue!40] (1, 1) rectangle (2, 2);
      \draw[fill=blue!40] (2, 1) rectangle (3, 2);
      \draw[fill=gray!10] (3, 1) rectangle (4, 2);
      \draw[fill=gray!10] (4, 1) rectangle (5, 2);

      \draw[fill=gray!10] (0, 0) rectangle (1, 1);
      \draw[fill=blue!40] (1, 0) rectangle (2, 1);
      \draw[fill=blue!40] (2, 0) rectangle (3, 1);
      \draw[fill=gray!10] (3, 0) rectangle (4, 1);
      \draw[fill=blue!40] (4, 0) rectangle (5, 1);

      \node[voter] at (0.5, -1) {$c_{1}$};
      \node[voter] at (1.5, -1) {$c_{2}$};
      \node[voter] at (2.5, -1) {$c_{3}$};
      \node[voter] at (3.5, -1) {$c_{4}$};
      \node[voter] at (4.5, -1) {$c_{5}$};

      \node[anchor=east] at (-0.3, 6.5) {$1$};
      \node[anchor=east] at (-0.3, 5.5) {$2$};
      \node[anchor=east] at (-0.3, 4.5) {$3$};
      \node[anchor=east] at (-0.3, 3.5) {$4$};
      \node[anchor=east] at (-0.3, 2.5) {$5$};
      \node[anchor=east] at (-0.3, 1.5) {$6$};
      \node[anchor=east] at (-0.3, 0.5) {$7$};
    \end{tikzpicture}}
    \subcaption{Approval-based instance $I=(A,k)$ with $k=2$. The committee $W=\{c_1,c_2\}$ satisfies $\PO$, but the uniform distribution over $\{c_2,c_3,c_4,c_5\}$ witnesses an $\fPO$ violation. 
    }
\label{figure:counter_example_PO_fPO_small}
  \end{minipage}\hfill
  \begin{minipage}[b]{0.52\textwidth}
    \centering
    \scalebox{0.8}{
        \begin{tikzpicture}[yscale=0.6,xscale=0.8,voter/.style={anchor=south}]
      \draw[fill=blue!40] (0, 9) rectangle (1, 10);
      \draw[fill=gray!10] (1, 9) rectangle (2, 10);
      \draw[fill=blue!40] (2, 9) rectangle (3, 10);
      \draw[fill=blue!40] (3, 9) rectangle (4, 10);
      \draw[fill=blue!40] (4, 9) rectangle (5, 10);
      \draw[fill=gray!10] (5, 9) rectangle (6, 10);
      \draw[fill=gray!10] (6, 9) rectangle (7, 10);
      \draw[fill=blue!40] (0, 8) rectangle (1, 9);
      \draw[fill=gray!10] (1, 8) rectangle (2, 9);
      \draw[fill=blue!40] (2, 8) rectangle (3, 9);
      \draw[fill=gray!10] (3, 8) rectangle (4, 9);
      \draw[fill=blue!40] (4, 8) rectangle (5, 9);
      \draw[fill=blue!40] (5, 8) rectangle (6, 9);
      \draw[fill=gray!10] (6, 8) rectangle (7, 9);
      \draw[fill=blue!40] (0, 7) rectangle (1, 8);
      \draw[fill=gray!10] (1, 7) rectangle (2, 8);
      \draw[fill=blue!40] (2, 7) rectangle (3, 8);
      \draw[fill=blue!40] (3, 7) rectangle (4, 8);
      \draw[fill=gray!10] (4, 7) rectangle (5, 8);
      \draw[fill=gray!10] (5, 7) rectangle (6, 8);
      \draw[fill=blue!40] (6, 7) rectangle (7, 8);
      \draw[fill=blue!40] (0, 6) rectangle (1, 7);
      \draw[fill=gray!10] (1, 6) rectangle (2, 7);
      \draw[fill=gray!10] (2, 6) rectangle (3, 7);
      \draw[fill=gray!10] (3, 6) rectangle (4, 7);
      \draw[fill=blue!40] (4, 6) rectangle (5, 7);
      \draw[fill=blue!40] (5, 6) rectangle (6, 7);
      \draw[fill=blue!40] (6, 6) rectangle (7, 7);
      \draw[fill=blue!40] (0, 5) rectangle (1, 6);
      \draw[fill=gray!10] (1, 5) rectangle (2, 6);
      \draw[fill=gray!10] (2, 5) rectangle (3, 6);
      \draw[fill=blue!40] (3, 5) rectangle (4, 6);
      \draw[fill=gray!10] (4, 5) rectangle (5, 6);
      \draw[fill=blue!40] (5, 5) rectangle (6, 6);
      \draw[fill=blue!40] (6, 5) rectangle (7, 6);
      \draw[fill=gray!10] (0, 4) rectangle (1, 5);
      \draw[fill=blue!40] (1, 4) rectangle (2, 5);
      \draw[fill=gray!10] (2, 4) rectangle (3, 5);
      \draw[fill=blue!40] (3, 4) rectangle (4, 5);
      \draw[fill=blue!40] (4, 4) rectangle (5, 5);
      \draw[fill=gray!10] (5, 4) rectangle (6, 5);
      \draw[fill=blue!40] (6, 4) rectangle (7, 5);
      \draw[fill=gray!10] (0, 3) rectangle (1, 4);
      \draw[fill=blue!40] (1, 3) rectangle (2, 4);
      \draw[fill=gray!10] (2, 3) rectangle (3, 4);
      \draw[fill=blue!40] (3, 3) rectangle (4, 4);
      \draw[fill=blue!40] (4, 3) rectangle (5, 4);
      \draw[fill=blue!40] (5, 3) rectangle (6, 4);
      \draw[fill=gray!10] (6, 3) rectangle (7, 4);
      \draw[fill=gray!10] (0, 2) rectangle (1, 3);
      \draw[fill=blue!40] (1, 2) rectangle (2, 3);
      \draw[fill=blue!40] (2, 2) rectangle (3, 3);
      \draw[fill=gray!10] (3, 2) rectangle (4, 3);
      \draw[fill=blue!40] (4, 2) rectangle (5, 3);
      \draw[fill=gray!10] (5, 2) rectangle (6, 3);
      \draw[fill=blue!40] (6, 2) rectangle (7, 3);
      \draw[fill=gray!10] (0, 1) rectangle (1, 2);
      \draw[fill=blue!40] (1, 1) rectangle (2, 2);
      \draw[fill=blue!40] (2, 1) rectangle (3, 2);
      \draw[fill=gray!10] (3, 1) rectangle (4, 2);
      \draw[fill=gray!10] (4, 1) rectangle (5, 2);
      \draw[fill=blue!40] (5, 1) rectangle (6, 2);
      \draw[fill=blue!40] (6, 1) rectangle (7, 2);
      \draw[fill=gray!10] (0, 0) rectangle (1, 1);
      \draw[fill=blue!40] (1, 0) rectangle (2, 1);
      \draw[fill=blue!40] (2, 0) rectangle (3, 1);
      \draw[fill=blue!40] (3, 0) rectangle (4, 1);
      \draw[fill=gray!10] (4, 0) rectangle (5, 1);
      \draw[fill=blue!40] (5, 0) rectangle (6, 1);
      \draw[fill=gray!10] (6, 0) rectangle (7, 1);

      \node[voter] at (0.5, -1) {$c_{1}$};
      \node[voter] at (1.5, -1) {$c_{2}$};
      \node[voter] at (2.5, -1) {$c_{3}$};
      \node[voter] at (3.5, -1) {$c_{4}$};
      \node[voter] at (4.5, -1) {$c_{5}$};
      \node[voter] at (5.5, -1) {$c_{6}$};
      \node[voter] at (6.5, -1) {$c_{7}$};

      \node[anchor=east] at (-0.3, 9.5) {$1$};
      \node[anchor=east] at (-0.3, 8.5) {$2$};
      \node[anchor=east] at (-0.3, 7.5) {$3$};
      \node[anchor=east] at (-0.3, 6.5) {$4$};
      \node[anchor=east] at (-0.3, 5.5) {$5$};
      \node[anchor=east] at (-0.3, 4.5) {$6$};
      \node[anchor=east] at (-0.3, 3.5) {$7$};
      \node[anchor=east] at (-0.3, 2.5) {$8$};
      \node[anchor=east] at (-0.3, 1.5) {$9$};
      \node[anchor=east] at (-0.3, 0.5) {$10$};
    \end{tikzpicture}}
    \subcaption{Approval-based instance $I=(A,k)$ with $k=2$. The committee $W=\{c_1,c_2\}$ satisfies $\PO$, since $u(W)=\mathbf{1}$ and no other size-$2$ committee gives utility at least $1$ to every voter. The uniform distribution over $\{c_3,c_4,c_5,c_6,c_7\}$ gives every voter utility $\frac{6}{5}$, thereby strictly improving every voter.}
    \label{figure:counter_example_weak_fPO}
  \end{minipage}
  \caption{Counterexamples illustrating the relationships between Pareto-optimality and (weak) fractional Pareto-optimality from \Cref{proposition:axiomatic_relationships}; blue cells indicate approvals.}
  \label{fig:counterexamples_combined}
\end{figure}
}

\newcommand{\diameterExampleVisualisation}{
\begin{figure}[ht!]
\centering
\begin{tikzpicture}[
    yscale=0.6,
    xscale=0.75,
    voter/.style={anchor=east, font=\small},
    collabel/.style={anchor=north, font=\small},
    grouplabel/.style={font=\small\bfseries},
    approved/.style={fill=blue!40},
    unapproved/.style={fill=gray!10}
]

\node[voter] at (-0.2,0.5) {$u_1$};
\node[voter] at (-0.2,1.5) {$u_2$};
\node[voter] at (-0.2,2.5) {$u_3$};
\node[voter] at (-0.2,3.5) {$x_1$};
\node[voter] at (-0.2,4.5) {$x_2$};
\node[voter] at (-0.2,5.5) {$x_3$};

\node[grouplabel] at (-1.4,1.5) {$U$};
\node[grouplabel] at (-1.4,4.5) {$X$};

\draw[dashed, black!50] (-0.05,3) -- (11.05,3);

\node[collabel, rotate=45] at (0.1,-0.35) {$k\times C_1$};
\node[collabel, rotate=45] at (1.1,-0.35) {$k\times C_2$};

\foreach \i in {1,2,3} {
    \foreach \j in {1,2,3} {
        \pgfmathtruncatemacro{\col}{2 + 3*(\i-1) + (\j-1)}
        \node[collabel, rotate=45] at (\col+0.1,-0.35) {$k\times Z_{\i\j}$};
    }
}

\foreach \y in {0,1,2} {
    \draw[approved] (0,\y) rectangle (1,\y+1);
}
\foreach \y in {3,4,5} {
    \draw[unapproved] (0,\y) rectangle (1,\y+1);
}

\foreach \y in {0,1,2} {
    \draw[unapproved] (1,\y) rectangle (2,\y+1);
}
\foreach \y in {3,4,5} {
    \draw[approved] (1,\y) rectangle (2,\y+1);
}

\foreach \i in {1,2,3} {
    \foreach \j in {1,2,3} {
        \pgfmathtruncatemacro{\col}{2 + 3*(\i-1) + (\j-1)}
        \pgfmathtruncatemacro{\missingU}{\i-1}
        \pgfmathtruncatemacro{\missingX}{\j+2} 

        \foreach \y in {0,1,2,3,4,5} {
            \ifnum\y=\missingU
                \draw[unapproved] (\col,\y) rectangle (\col+1,\y+1);
            \else
                \ifnum\y=\missingX
                    \draw[unapproved] (\col,\y) rectangle (\col+1,\y+1);
                \else
                    \draw[approved] (\col,\y) rectangle (\col+1,\y+1);
                \fi
            \fi
        }
    }
}

\end{tikzpicture}
\caption{Instance $I = (A, k)$ used for the proof in \Cref{proposition:lower_diameter}. The candidates in $C_1$ are approved by all voters in $U$, the candidates in $C_2$ by all voters in $X$, and each block $Z_{ij}$ contains $k$ candidates approved by all voters except $u_i$ and $x_j$.}
\label{figure:diameter_constructed_instance}
\end{figure}
}